\providecommand{\tabularnewline}{\\}
\numberwithin{equation}{section}
\numberwithin{figure}{section}
\newtheorem{prop}{Proposition}
\newtheorem{thm}{Theorem}
\newtheorem{lem}{Lemma}
\newtheorem*{lem*}{Lemma}
\newtheorem*{thm*}{Theorem}
\newtheorem*{asm1}{Assumption 1}
\newtheorem*{asm2}{Assumption 2}
\newtheorem*{asm3}{Assumption 3}
\newtheorem*{asm4}{Assumption 4}
\newtheorem*{asm5}{Assumption 5}
\newtheorem*{asm6}{Assumption 6}
\theoremstyle{definition}
\begin{document}
\title{Optimal tests following sequential experiments}
\author{Karun Adusumilli$^\dagger$}
\begin{abstract}
Recent years have seen tremendous advances in the theory and application
of sequential experiments. While these experiments are not always
designed with hypothesis testing in mind, researchers may still be
interested in performing tests after the experiment is completed.
The purpose of this paper is to aid in the development of optimal
tests for sequential experiments by analyzing their asymptotic properties.
Our key finding is that the asymptotic power function of any test
can be matched by a test in a limit experiment where a Gaussian process
is observed for each treatment, and inference is made for the drifts
of these processes. This result has important implications, including
a powerful sufficiency result: any candidate test only needs to rely
on a fixed set of statistics, regardless of the type of sequential
experiment. These statistics are the number of times each treatment
has been sampled by the end of the experiment, along with final value
of the score (for parametric models) or efficient influence function
(for non-parametric models) process for each treatment. We then characterize
asymptotically optimal tests under various restrictions such as unbiasedness,
$\alpha$-spending constraints etc. Finally, we apply our our results
to three key classes of sequential experiments: costly sampling, group
sequential trials, and bandit experiments, and show how optimal inference
can be conducted in these scenarios.
\end{abstract}

\thanks{\textit{This version}: \today{}\\
\thispagestyle{empty}I would like to thank Kei Hirano and Jack Porter
for insightful discussions that stimulated this research. Thanks also
to seminar participants at LSE and UCL for helpful comments.\\
$^\dagger$Department of Economics, University of Pennsylvania}
\maketitle

\section{Introduction \label{sec:Introduction}}

Recent years have seen tremendous advances in the theory and application
of sequential/adaptive experiments. Such experiments are now used
being in a wide variety of fields, ranging from online advertising
\citep{russo2017tutorial}, to dynamic pricing \citep{ferreira2018online},
drug discovery \citep{wassmer2016group}, public health \citep{athey2021shared},
and economic interventions \citep{kasy2019adaptive}. Compared to
traditional randomized trials, these experiments allow one to target
and achieve a more efficient balance of welfare, ethical, and economic
considerations. In fact, starting from the Critical Path Initiative
in 2006, the FDA has actively promoted the use of sequential designs
in clinical trials for reducing trial costs and risks for participants
\citep{guidance2018adaptive}. For instance, group-sequential designs,
wherein researchers conduct interim analyses at predetermined stages
of the experiment, are now routinely used in clinical trials. If the
analysis suggests a significant positive or negative effect from the
treatment, the trial may be stopped early. Other examples of sequential
experiments include bandit experiments \citep{lattimore2020bandit},
best-arm identification \citep{russo2016information} and costly sampling
\citep{adusumilli2022sample}, among many others.

Although hypothesis testing is not always the primary goal of sequential
experiments, one may still desire to conduct a hypothesis test after
the experiment is completed. For example, a pharmaceutical company
may conduct an adaptive trial for drug testing with the explicit goal
of maximizing welfare or minimizing costs, but may nevertheless be
required to test the null hypothesis of a zero average treatment effect
for the drug after the trial. Despite the practical importance of
such inferential methods, there are currently few results characterizing
optimal tests, or even identifying which sample statistics to use
when conducting tests after sequential experiments. This paper aims
to fill this gap. 

To this end, we follow the standard approach in econometrics and statistics
(see, e.g., \citealp[Chapter 14]{van2000asymptotic}) of studying
the properties of various candidate tests by characterizing their
power against local alternatives, also known as Pitman alternatives.
These are alternatives that converge to the null at the parametric,
i.e., $1/\sqrt{n}$ rate, leading to non-trivial asymptotic power.
Here, $n$ is typically the sample size, although it can have other
interpretations in experiments which are open-ended, see Section \ref{sec:Diffusion-asymptotics-and}
for a discussion. The main finding of this paper is that the asymptotic
power function of any test can be matched by that of a test in a limit
experiment where one observes a Gaussian process for each treatment,
and the aim is to conduct inference on the drifts of the Gaussian
processes. 

As a by-product of this equivalence, we show that the power function
of any candidate test (which may employ additional information beyond
the sufficient statistics) can be matched asymptotically by one that
only depends on a finite set of sufficient statistics. In the most
general scenario, the sufficient statistics are the number of times
each treatment has been sampled by the end of the experiment, along
with final value of the score (for parametric models) or efficient
influence function (for non-parametric models) process for each treatment.
However, even these statistics can be further reduced under additional
assumptions on the sampling and stopping rules. Our results thus show
that a substantial dimension reduction is possible, and only a few
statistics are relevant for conducting tests.

Furthermore, we characterize the optimal tests in the limit experiment.
We then show that finite sample analogues of these are asymptotically
optimal under the original sequential experiment. Our results can
also be used to compute the power envelope, i.e., an upper bound on
the asymptotic power function of any test. Although a uniformly most
powerful test in the limit experiment may not always exist, some positive
results are obtained for testing linear combinations under unbiasedness
or $\alpha$-spending restrictions. Alternatively, one may impose
less stringent criteria for optimality, like weighted average power,
and we show how to compute optimal tests under such criteria as well. 

We provide two new asymptotic representation theorems (ARTs) for formalizing
the equivalence of tests between the original and limit experiments.
The first applies to `stopping-time experiments', where the sampling
rule is fixed beforehand but the stopping rule (which describes when
the experiment is to be terminated) is fully adaptive (i.e., it can
be updated after every new observation). Our second ART allows for
the sampling rule to be adaptive as well, but we require the sampling
and stopping decision to be updated only a finite number of times,
after observing the data in batches. While constraining attention
to batched experiments is undoubtedly a limitation, practical considerations
often necessitate conducting sequential experiments in batches anyway.
Also, as shown in \citet{adusumilli2021risk}, any fully adaptive
experiment can be approximated by a batched experiment with a sufficiently
large number of batches. Our second ART builds on, and extends, the
recent work of \citet{hirano2023asymptotic} on asymptotic representations.
We refer to Sections \ref{subsec:Related-literature} and \ref{subsec:Asymptotic-representation-theorem-batched}
for a detailed comparison. 

Importantly, our framework covers both parametric and non-parametric
settings. Finally, we apply our results to three important examples
of sequential experiments: costly sampling, group sequential trials
and bandit experiments, and suggest new inferential procedures for
these experiments that are asymptotically optimal under different
scenarios.

\subsection{Related literature\label{subsec:Related-literature}}

Despite the vast amount of work on the development of sequential learning
algorithms, the literature on inference following the use of such
algorithms is relatively sparse. One approach gaining some popularity
in computer-science is called `any-time inference'. Here, one seeks
to construct tests and confidence intervals that are correctly sized
no matter how, or when, the experiment is stopped. We refer to \citet{ramdas2022game}
for a survey and to \citet{grunwald2020safe}, \citet{howard2021time},
\citet{johari2022always} for some recent contributions. The uniform-in-time
size constraint is a stringent requirement, and this comes at the
expense of lower power than could be achieved otherwise. By contrast,
our focus in this paper is on classical notions of testing, where
size control is only achieved when the experimental protocol, i.e.,
the specific sampling rule and stopping time, is followed exactly.
In essence, this requires the decision maker to pre-register the experiment
and fully commit to the protocol. We believe this is valid assumption
in most applications; adaptive experiments are usually constructed
with the explicit goal of welfare maximization, so there is little
incentive to deviate from the protocol as long as the preferences
of the experimenter and the end-user of the experiment are aligned
(e.g., in the case of online marketplaces they would be the same entity).
In other situations, pre-registration of the experimental design is
usually mandatory, see, e.g., the FDA guidance on sequential designs
\citep{guidance2018adaptive}. 

There are other recent papers which propose inferential methods under
the `classical' hypothesis-testing framework. \citet{zhang2020inference}
and \citet{hadad2021confidence} suggest asymptotically normal tests
for some specific classes of sequential experiments. These tests are
based on re-weighing the observations. There are also a number of
methods for group sequential and linear boundary designs commonly
used in clinical trials, see \citet{hall2013analysis} for a review.
However, it is not clear if any of them are optimal even within their
specific use cases. 

Finally, in prior and closely related work to our own, \citet{hirano2023asymptotic}
obtain an Asymptotic Representation Theorem (ART) for batched sequential
experiments that is different from ours and apply this to testing.
The ART of \citet{hirano2023asymptotic} is a lot more general than
our own, e.g., it can be used to determine optimal conditional tests
given outcomes from previous stages. However, this generality comes
at a price as the state variables increase linearly with the number
of batches. Here, we build on and extend these results to show that
only a fixed number of sufficient statistics are needed to match the
\textit{unconditional} asymptotic power of any test, irrespective
of the number of batches (our results also apply to asymptotic power
\textit{conditional} on stopping times). We also derive a number of
additional results that are new to this literature: First, our ART
for stopping-time experiments applies to fully adaptive experiments
(this result is not based on \citealp{hirano2023asymptotic}; rather,
it makes use of a representation theorem for stopping times due to
\citealp{LeCam1979}). Second, our analysis covers non-parametric
models, which is important for applications. Third, we characterize
the properties of optimal tests in a number of different scenarios,
e.g., for testing linear combinations of parameters, or under unbiased
and $\alpha$-spending requirements. This is useful as UMP tests do
not generally exist otherwise.

As noted earlier, this paper employs the local asymptotic power criterion
to rank tests. This criterion naturally leads to `diffusion asymptotics',
where the limit experiment consists of Gaussian diffusions. Diffusion
asymptotics were first introduced by \citet{wager2021diffusion} and
\citet{fan2021diffusion} to study the properties of a class of sequential
algorithms. In previous work \citep{adusumilli2021risk}, this author
demonstrated some asymptotic equivalence results for comparing the
Bayes and minimax risk of bandit experiments. Here, we apply the techniques
devised in those papers to study inference.

\subsection{Examples\label{subsec:Examples}}

Before describing our procedures, it can be instructive to consider
some examples of sequential experiments. 

\subsubsection{Costly sampling}

Consider a sequential experiment in which sampling is costly, and
the aim is to select the best of two possible treatments. Previous
work by this author (\citealp{adusumilli2022sample}) showed that
the minimax optimal strategy in this setting involves a fixed sampling
rule (the Neyman allocation) and stopping when the average difference
in treatment outcomes multiplied by the number of observations exceeds
a specific threshold. In fact, the stopping rule here has the same
form as the SPRT procedure of \citet{wald1947sequential}, even though
the latter is motivated by very different considerations. SPRT is
itself a special case of `fully sequential linear boundary designs',
as discussed, e.g., in \citet{whitehead1997design}. Typically these
procedures recommend sampling the two treatments in equal proportions
instead of the Neyman allocation. In Section \ref{sec:Applications},
we show that for `horizontal fully sequential boundary designs' with
any fixed sampling rule (including, but not restricted to, the Neyman
allocation), the most powerful unbiased test for treatment effects
depends only on the stopping time and rejects when it is below a specific
threshold. 

\subsubsection{Group sequential trials}

In many applications, it is not feasible to employ continuous-time
monitoring designs that update the decision rule after each observation.
Instead, one may wish to stop the experiment only at a limited number
of pre-specified times. Such designs are known as group-sequential
trials, see \citet{wassmer2016group} for a textbook treatment. Recently,
these experiments have become very popular for conducting clinical
trials; they have been used, e.g., to test the efficacy of Coronavirus
vaccines \citep{zaks2020phase}. While a number of methods have been
proposed for inference following these experiments, as reviewed, e.g.,
in \citet{hall2013analysis}, it is not clear which, if any, are optimal.
In Section \ref{sec:Applications}, we derive optimal non-parametric
tests and confidence intervals for such designs under an $\alpha$-spending
size criterion (see, Section \ref{subsec:Conditional-inference}). 

\subsubsection{Bandit experiments}

In the previous two examples, the decision maker could choose when
to end the experiment, but the sampling strategy was fixed beforehand.
In many experiments however, the sampling rule can also be modified
based on the information revealed from past data. Bandit experiments
are a canonical example of these. Previously, \citet{hirano2023asymptotic}
derived asymptotic power envelopes for any test following batched
parametric bandit experiments. In this paper, we refine the results
of \citet{hirano2023asymptotic} further by showing that only a finite
number of sufficient statistics are needed for testing, irrespective
of the number of batches. Our results apply to non-parametric models
as well.

\section{Optimal tests in experiments involving stopping times\label{sec:Diffusion-asymptotics-and}}

In this section we study the asymptotic properties of tests for parametric
stopping-time experiments, i.e., sequential experiments that involve
a pre-determined stopping time. 

\subsection{Setup and assumptions}

Consider a decision-maker (DM) who wishes to conduct an experiment
involving some outcome variable $Y$. Before starting the experiment,
the DM registers a stopping time, $\hat{\tau}$, that describes the
eventual sample size in multiples of $n$ observations (see below
for the interpretation of $n$). The choice of $\hat{\tau}$ may involve
a balancing a number of considerations such as costs, ethics, welfare
etc. Here, we abstract away from these issues and take $\hat{\tau}$
as given. In the course of the experiment, the DM observes a sequence
of outcomes $Y_{1},Y_{2},\dots$ . The experiment ends in accordance
with $\hat{\tau}$, which we assume to be adapted to the filtration
generated by the outcome observations. Let $P_{\theta}$ denote a
parametric model for the outcomes. Our interest in this section is
in testing $H_{0}:\theta=\Theta_{0}$ vs $H_{1}:\theta\in\Theta_{1}$
where $\Theta_{0}\cap\Theta_{1}=\emptyset$. Let $\theta_{0}\in\Theta_{0}$
denote some reference parameter in the null set.

There are two notions of asymptotics one could employ in this setting,
and consequently, two different interpretations of $n$. In many settings,
e.g., group sequential trials, there is a limit on the maximum number
of observations that can be collected; this limit is pre-specified
and we take it to be $n$. Consequently, in these experiments, $\hat{\tau}\in[0,1]$.
Alternatively, we may have open-ended experiments where the stopping
time is determined by balancing the benefit of experimentation with
the cost for sampling each additional unit of observation. In this
case, we employ small-cost asymptotics and $n$ then indexes the rate
at which the sampling costs go to $0$ (alternatively, we can relate
$n$ to the population size in the implementation phase following
the experiment, see \citealp{adusumilli2022sample}). The results
in this section apply to both asymptotic regimes.

Let $\varphi_{n}\in[0,1]$ denote a candidate test. It is required
to be measurable with respect to $\sigma\{Y_{1},\dots,Y_{\left\lfloor n\tau\right\rfloor }\}$.
Now, it is fairly straightforward to construct tests that have power
1 against any fixed alternative as $n\to\infty$. Consequently, to
obtain a more fine-grained characterization of tests, we consider
their performance against local perturbations of the form $\{\theta_{0}+h/\sqrt{n};h\in\mathbb{R}^{d}\}$.
Denote $P_{h}:=P_{\theta_{0}^ {}+h/\sqrt{n}}$ and let $\mathbb{E}_{h}^{(a)}[\cdot]$
denote its corresponding expectation. Also, let $\nu$ denote a dominating
measure for $\{P_{\theta}:\theta\in\mathbb{R}\}$, and set $p_{\theta}:=dP_{\theta}/d\nu$.
We impose the following regularity conditions on the family $P_{\theta}$,
and the stopping time $\hat{\tau}$: 

\begin{asm1} The class $\{P_{\theta}:\theta\in\mathbb{R}^{d}\}$
is differentiable in quadratic mean around $\theta_{0}$, i.e., there
exists a score function $\psi(\cdot)$ such that for each $h\in\mathbb{R}^{d},$
\begin{equation}
\int\left[\sqrt{p_{\theta_{0}+h}}-\sqrt{p_{\theta_{0}}}-\frac{1}{2}h^{\intercal}\psi\sqrt{p_{\theta_{0}}}\right]^{2}d\nu=o(\vert h\vert^{2}).\label{eq:QMD}
\end{equation}
\end{asm1}

\begin{asm2} There exists $T<\infty$ independent of $n$ such that
$\hat{\tau}\le T$.\end{asm2}

Both assumptions are fairly innocuous. As noted previously, in many
examples we already have $\tau\le1$.

Let $P_{nt,h}$ denote the joint probability measure over the iid
sequence of outcomes $Y_{1},\dots,Y_{nt}$ and take $\mathbb{E}_{nt,\bm{h}}[\cdot]$
to be its corresponding expectation. Define the (standardized) score
process $x_{n}(t)$ as
\[
x_{n}(t)=\frac{I^{-1/2}}{\sqrt{n}}\sum_{i=1}^{\left\lfloor nt\right\rfloor }\psi(Y_{i}),
\]
where $I:=\mathbb{E}_{0}[\psi(Y_{i})\psi(Y_{i})^{\intercal}]$ is
the information matrix. It is well known, see e.g., \citet[Chapter 7]{van2000asymptotic},
that quadratic mean differentiability implies $\mathbb{E}_{nT,0}[\psi(Y_{i})]=0$
and that $I$ exists. Then, by a functional central limit theorem,
\begin{equation}
x_{n}(\cdot)\xrightarrow[P_{nT,0}]{d}x(\cdot);\ x(\cdot)\sim W(\cdot).\label{eq:Convergence of score process}
\end{equation}
Here, and in what follows, $W(\cdot)$ denotes the standard $d$-dimensional
Brownian motion. Assumption 1 also implies the important property
of Sequential Local Asymptotic Normality (SLAN; \citealp{adusumilli2021risk}):
for any given $h\in\mathbb{R}^{d}$, 
\begin{equation}
\sum_{i=1}^{\left\lfloor nt\right\rfloor }\ln\frac{dp_{\theta_{0}+h/\sqrt{n}}}{dp_{\theta_{0}}}=h^{\intercal}I^{1/2}x_{n}(t)-\frac{t}{2}h^{\intercal}Ih+o_{P_{nT,0}}(1),\ \textrm{uniformly over }t\le T.\label{eq:SLAN property}
\end{equation}
The above states that the likelihood ratio admits a quadratic approximation
uniformly over all $t$. 

\subsection{Asymptotic representation theorem\label{subsec:Asymptotic-representation-theore}}

In what follows, take $U$ to be a $\textrm{Uniform}[0,1]$ random
variable that is independent of the process $x(\cdot)$, and define
$\mathcal{F}_{t}:=\sigma\{x(s),U;s\le t\}$ to be the filtration generated
by $U$ and the stochastic process $x(\cdot)$ until time $t$. 

Consider a limit experiment where one observes $U$ and a Gaussian
diffusion $x(t):=I^{1/2}ht+W(t)$ with some unknown $h$, and constructs
a test statistic $\varphi$ based on knowledge only of (i) an $\mathcal{F}_{t}$-adapted
stopping time $\tau$ that is the limiting version of $\hat{\tau}$
(in a sense made precise below); and (ii) the stopped process $x(\tau)$.
Let $\mathbb{P}_{h}$ denote the induced probability over the sample
paths of $x(\cdot)$ given $h$, and $\mathbb{E}_{h}[\cdot]$ its
corresponding expectation. The following theorem relates the original
testing problem to the one in such a limit experiment:

\begin{thm} \label{Thm: ART}Suppose Assumptions 1 and 2 hold. Let
$\varphi_{n}$ be some test function defined on the sample space $Y_{1},\dots,Y_{n\hat{\tau}}$,
and $\beta_{n}(h)$, its power against $P_{nT,h}$. Then, for every
sequence $\{n_{j}\}$, there is a further sub-sequence $\{n_{j_{m}}\}$
such that: \\
(i) \citep{LeCam1979} There exists an $\mathcal{F}_{t}$-adapted
stopping time $\tau$ for which $(\hat{\tau},x_{n}(\hat{\tau}))\xrightarrow[P_{nT,0}]{d}(\tau,x(\tau))$
on this sub-sequence.\\
(ii) There exists a test $\varphi$ in the limit experiment depending
only on $\tau,x(\tau)$ such that $\beta_{n_{j_{m}}}(h)\to\beta(h)$
for every $h\in\mathbb{R}^{d}$, where $\beta(h):=\mathbb{E}_{h}[\varphi(\tau,x(\tau))]$
is the power of $\varphi$ in the limit experiment.\end{thm}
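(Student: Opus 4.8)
The plan is to reduce the stopped experiment to the finite-dimensional statistic $(\hat\tau,x_n(\hat\tau))$ by evaluating the SLAN expansion at the stopping time, to transport the limiting test across the change of measure by a Le Cam third-lemma argument, and finally to strip off any residual randomization by conditioning on $(\tau,x(\tau))$ --- which an optional-stopping computation identifies as a sufficient statistic in the limit experiment. Concretely, I would first note that $\varphi_n$ is measurable with respect to the stopped $\sigma$-field $\sigma\{Y_1,\dots,Y_{\lfloor n\hat\tau\rfloor}\}$, and that the expansion \eqref{eq:SLAN property}, being uniform over $t\le T$ with $\hat\tau\le T$ by Assumption 2, may be evaluated at $t=\hat\tau$. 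This shows that the log-likelihood ratio of the stopped experiment between $P_{nT,h}$ and $P_{nT,0}$ equals $\Lambda_n(h)=h^\intercal I^{1/2}x_n(\hat\tau)-\tfrac{\hat\tau}{2}h^\intercal I h+o_{P_{nT,0}}(1)$, i.e.\ up to $o_{P_{nT,0}}(1)$ it is a function of $(\hat\tau,x_n(\hat\tau))$ alone; mutual contiguity of $P_{nT,h}$ and $P_{nT,0}$ then follows from Le Cam's first lemma.

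Next, given a sequence $\{n_j\}$, I would work along the sub-sequence supplied by part (i), on which $(\hat\tau,x_n(\hat\tau))\xrightarrow[P_{nT,0}]{d}(\tau,x(\tau))$, append the bounded test $\varphi_n$ to this vector, and --- by tightness --- pass to a further sub-sequence $\{n_{j_m}\}$ along which $(\varphi_n,\hat\tau,x_n(\hat\tau))\xrightarrow[P_{nT,0}]{d}(V,\tau,x(\tau))$ with $V\in[0,1]$, realized if necessary on a suitable extension of the space carrying $x(\cdot)$ and $U$. Put $\bar L_h:=\exp\!\big(h^\intercal I^{1/2}x(\tau)-\tfrac{\tau}{2}h^\intercal I h\big)$. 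Since the Girsanov exponential $\exp(h^\intercal I^{1/2}x(t)-\tfrac{t}{2}h^\intercal I h)$ is an $\mathcal F_t$-martingale under $\mathbb P_0$, optional stopping at the bounded time $\tau$ gives $\mathbb E_0[\bar L_h]=1$; combined with $\exp(\Lambda_n(h))\xrightarrow[P_{nT,0}]{d}\bar L_h$ and $\mathbb E_{nT,0}[\exp\Lambda_n(h)]\le 1$, this forces $\mathbb E_{nT,0}[\exp\Lambda_n(h)]\to 1$ and hence the uniform integrability of $\{\exp\Lambda_n(h)\}$. A standard change-of-measure argument (Le Cam's third lemma), using contiguity to control the defect, then yields $\beta_{n_{j_m}}(h)=\mathbb E_{nT,h}[\varphi_n]=\mathbb E_{nT,0}[\varphi_n\exp\Lambda_n(h)]+o(1)\to\mathbb E_0[V\bar L_h]=:\beta(h)$ for every $h\in\mathbb R^d$.

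Finally, I would set $\varphi(\tau,x(\tau)):=\mathbb E_0[V\mid\tau,x(\tau)]$. Because $\bar L_h$ is $\sigma(\tau,x(\tau))$-measurable, $\mathbb E_0[\varphi(\tau,x(\tau))\,\bar L_h]=\mathbb E_0\big[\mathbb E_0[V\mid\tau,x(\tau)]\,\bar L_h\big]=\mathbb E_0[V\bar L_h]=\beta(h)$. On the other hand, Girsanov's theorem together with optional stopping shows that the likelihood ratio of the limit experiment restricted to $\mathcal F_\tau$ is exactly $\bar L_h$ --- equivalently $\bar L_h=\mathbb E_0[\,d\mathbb P_h/d\mathbb P_0\mid\mathcal F_\tau\,]$, so that $(\tau,x(\tau))$ is sufficient for $h$ in the stopped limit experiment --- and by part (i) the $\mathbb P_0$-law of $(\tau,x(\tau))$ coincides with the limiting law used above. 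Hence $\mathbb E_h[\varphi(\tau,x(\tau))]=\mathbb E_0[\varphi(\tau,x(\tau))\,\bar L_h]=\beta(h)$, so $\varphi$ depends only on $(\tau,x(\tau))$ and has power function $\beta$ in the limit experiment, which is the assertion of part (ii).

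The tightness and sub-sequence bookkeeping are routine, and part (i) is quoted from \citet{LeCam1979}. The step I expect to be the main obstacle is the change of measure: one must rule out any loss of mass for $\{\exp\Lambda_n(h)\}$ in the limit, which I would deduce from convergence in distribution to $\bar L_h$ together with $\mathbb E_{nT,0}[\exp\Lambda_n(h)]\to 1=\mathbb E_0[\bar L_h]$ (the latter resting on the optional-stopping identity), and one must justify evaluating the $o_{P_{nT,0}}(1)$ remainder of \eqref{eq:SLAN property} --- uniform only over deterministic $t\le T$ --- at the data-dependent time $\hat\tau$. A secondary subtlety is that $\varphi_n$ need not be a function of the score process at all: what makes the conditioning in the last step lossless is precisely that the stopped log-likelihood ratio, once evaluated at $\hat\tau$, is asymptotically a function of $(\hat\tau,x_n(\hat\tau))$ only, so the limiting test inherits no dependence on $h$ beyond $(\tau,x(\tau))$ and the conditioning removes only ancillary randomness.
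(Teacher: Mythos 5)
Your proposal is correct and follows essentially the same route as the paper's proof: evaluate the SLAN expansion at $\hat{\tau}$, establish joint weak convergence of $(\varphi_n,\hat{\tau},x_n(\hat{\tau}))$ along a further subsequence, verify $\mathbb{E}_0[\bar L_h]=1$ by optional stopping of the exponential martingale so that no mass is lost in the change of measure (the paper phrases this via a portmanteau/Le Cam third-lemma argument, you via uniform integrability, but the key identity is the same), and then condition the limiting randomized test on $(\tau,x(\tau))$ and invoke Girsanov to identify the limit power. The two arguments differ only in bookkeeping, not in substance.
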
 

The first part of Theorem \ref{Thm: ART} is essentially due to \citet{LeCam1979}. 

To the best of our knowledge, the second part of Theorem \ref{Thm: ART}
is new. Previously, \citet{LeCam1979} showed that for $\{P_{\theta}\}$
in the exponential family of distributions, 
\[
\ln\frac{dP_{n\hat{\tau}.,h}}{dP_{n\hat{\tau},0}}\left({\bf y}_{n\hat{\tau}}\right)\xrightarrow[P_{nT,0}]{d}h^{\intercal}I^{1/2}x(\tau)-\frac{\tau}{2}h^{\intercal}Ih.
\]
Here, we extend the above to general families of distributions satisfying
Assumption 1. We then derive an asymptotic representation theorem
for $\varphi_{n}$ as a consequence of this result.

Note that in the second part of Theorem \ref{Thm: ART}, $\tau$ is
taken as given (this mirrors how $\hat{\tau}$ is taken as given in
the context of the original experiment). It is chosen so that the
first part of the theorem is satisfied. In order to derive optimal
tests, one would need to know the joint distribution of $\tau,x(\tau)$.
Unfortunately, the first part of Theorem \ref{Thm: ART} does not
provide a characterization of $\tau$; it only asserts that such a
stopping time must exist. Fortunately, in practice, most stopping
times are functions, $\hat{\tau}=\tau(x_{n}(\cdot))$, of the score
process, e.g., the optimal stopping time under costly sampling is
given by $\hat{\tau}=\inf\{t:\vert x_{n}(t)\vert\ge\gamma\}$. Indeed,
previous work by this author (\citealp{adusumilli2022sample}) and
others has shown that if the stopping time is to be chosen according
some notion of Bayes or minimax risk, then it is sufficient to restrict
attention to stopping times that depend only on $x_{n}(\cdot)$. In
such cases, the continuous mapping theorem allows us to determine
$\tau$ as $\tau=\tau(x(\cdot))$. 

\subsection{Characterization of optimal tests in the limit experiment\label{subsec:Characterization-of-optimal}}

\subsubsection{Testing a parameter vector}

The simplest hypothesis testing problem in the limit experiment concerns
testing $H_{0}:h=0$ vs $H_{1}:h=h_{1}$. By the Neyman-Pearson lemma,
the uniformly most powerful (UMP) test is 
\[
\varphi_{h_{1}}^{*}=\mathbb{I}\left\{ h_{1}^{\intercal}I^{1/2}x(\tau)-\frac{\tau}{2}h_{1}^{\intercal}Ih_{1}\ge\gamma_{h_{1}}\right\} ,
\]
where $\gamma_{h_{1}}\in\mathbb{R}$ is chosen by the size requirement.
Let $\beta^{*}(h_{1})$ denote the power function of $\varphi_{h_{1}}^{*}$.
Then, by Theorem \ref{Thm: ART}, $\beta^{*}(\cdot)$ is an upper
bound on the limiting power function of any test of $H_{0}:\theta=\theta_{0}$. 

\subsubsection{Testing linear combinations\label{subsec:Testing-linear-combinations}}

We now consider tests of linear combinations of $h$, i.e., $H_{0}:a^{\intercal}h=0$,
in the limit experiment. In this case, a further dimension reduction
is possible if the stopping time is also dependent on a reduced set
of statistics. 

Define $\sigma^{2}:=a^{\intercal}I^{-1}a$, $\tilde{x}(t):=\sigma^{-1}a^{\intercal}I^{-1/2}x(t)$,
let $U_{1}$ denote a $\textrm{Uniform}[0,1]$ random variable independent
of $\tilde{x}(\cdot)$, and take $\tilde{\mathcal{F}}_{t}$ to be
the filtration generated by $\sigma\{U_{1},\tilde{x}(s):s\le t\}$.
Note that $\tilde{x}(\cdot)\sim W(\cdot)$ under the null; hence,
it is pivotal. 

\begin{prop} \label{Prop: Linear combinations} Suppose that the
stopping time $\tau$ in Theorem \ref{Thm: ART} is $\tilde{\mathcal{F}}_{t}$-adapted.
Then, the UMP test of $H_{0}:a^{\intercal}h=0$ vs $H_{1}:a^{\intercal}h=c$
in the limit experiment is
\[
\varphi_{c}^{*}(\tau,\tilde{x}(\tau))=\mathbb{I}\left\{ c\tilde{x}(\tau)-\frac{c^{2}}{2\sigma}\tau\ge\gamma_{c}\right\} .
\]
In addition, suppose Assumptions 1 and 2 hold, let $\beta^{*}(c)$
denote the power of $\varphi_{c}^{*}$ for a given $c$, and $\beta_{n}(h)$
the power of some test, $\varphi_{n}$, of $H_{0}:a^{\intercal}\theta=0$
in the original experiment against local alternatives $\theta\equiv\theta_{0}+h/\sqrt{n}$
. Then, for each $h\in\mathbb{R}^{d}$ , $\lim_{n\to\infty}\beta_{n}(h)\le\beta^{*}(a^{\intercal}h)$.\end{prop}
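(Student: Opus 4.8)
The plan is to split the proposition into its two separate claims and handle them in order. The first claim --- that $\varphi_c^*$ is UMP for $H_0: a^\intercal h = 0$ vs $H_1: a^\intercal h = c$ in the limit experiment under the hypothesis that $\tau$ is $\tilde{\mathcal{F}}_t$-adapted --- should follow from a Neyman--Pearson argument applied to the sub-experiment generated by $(\tilde x(\cdot), U_1)$. First I would observe that, by the Girsanov/Cameron--Martin theorem, the log-likelihood ratio for the process $\tilde x(\cdot)$ observed up to the stopping time $\tau$, between the drift corresponding to $a^\intercal h = c$ and the null drift, is exactly $c\sigma^{-1}\tilde x(\tau) - \tfrac{c^2}{2\sigma^2}\tau$ (using that $\tilde x(t) = \sigma^{-1} a^\intercal h\, t + W(t)$ has unit variance and drift $\sigma^{-1} c$ under the alternative, with the $\tau$-stopped version obtained from optional stopping since $\tau \le T$ is bounded). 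Since $\tau$ is $\tilde{\mathcal{F}}_t$-adapted, the pair $(\tau, \tilde x(\tau))$ is a measurable functional of the $\tilde x$-path, so conditioning on it loses nothing, and a slight rescaling gives the stated form $c\tilde x(\tau) - \tfrac{c^2}{2\sigma}\tau$ (the $\sigma$ versus $\sigma^2$ in the denominator is just the normalization convention for $\tilde x$; I would verify this bookkeeping carefully). The Neyman--Pearson lemma then identifies $\varphi_c^*$ as the most powerful test of the two simple hypotheses, and since the rejection region does not depend on which null drift in $\{h : a^\intercal h = 0\}$ one picks (the null law of $\tilde x(\cdot)$ is $W(\cdot)$ regardless), it is UMP against the composite null as well; one also needs that the critical value $\gamma_c$ can be chosen so the size is exactly attained, which holds because the null distribution of the statistic is continuous (or, if not, one randomizes).

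The second claim is the power bound, and here the key is to chain together Theorem~\ref{Thm: ART} with the first part of the proposition. I would argue as follows. Fix a test $\varphi_n$ of $H_0: a^\intercal \theta = 0$ in the original experiment and a local parameter $h$. By Theorem~\ref{Thm: ART}, along any subsequence there is a further subsequence and a limit test $\varphi$ depending only on $(\tau, x(\tau))$ with $\beta_{n_{j_m}}(h) \to \mathbb{E}_h[\varphi(\tau, x(\tau))]$, and moreover $\varphi$ inherits the size constraint: since $\varphi_n$ has (asymptotic) size $\le \alpha$ on the whole null set $\{a^\intercal \theta = 0\}$, in particular along the curve $\theta_0 + h'/\sqrt n$ for any $h'$ with $a^\intercal h' = 0$, the limit test satisfies $\mathbb{E}_{h'}[\varphi] \le \alpha$ for all such $h'$. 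Now I would compare $\mathbb{E}_h[\varphi(\tau, x(\tau))]$ with $\beta^*(a^\intercal h)$. Because $\tau$ is $\tilde{\mathcal{F}}_t$-adapted and $\tilde{\mathcal{F}}_t \subseteq \mathcal{F}_t$, the test $\varphi$ lives in a larger experiment than the one-dimensional sub-experiment on which $\varphi_c^*$ is UMP; the issue is that $\varphi$ may use components of $x(\tau)$ orthogonal to $a^\intercal I^{-1/2} x(\tau)$. The resolution is that the drift $h$ enters the law of $x(\cdot)$ only through $I^{1/2} h$, and conditional on $\tilde{\mathcal{F}}_\infty$ (equivalently, conditional on $\tau$ and $\tilde x(\tau)$ together with the orthogonal-to-$a$ part of the path) the remaining randomness in $x(\tau)$ carries no information about $a^\intercal h$ when $\tau$ is $\tilde{\mathcal F}_t$-adapted --- more precisely, the family of laws of $x(\cdot)$ indexed by $h$, restricted to the events relevant for testing $a^\intercal h = 0$ vs $a^\intercal h = c$, is dominated by a sub-family that only moves $\tilde x$. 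So by a sufficiency / conditioning argument (take the conditional expectation of $\varphi$ given $\tilde{\mathcal{F}}_\tau$, which does not change its power against the relevant drifts but reduces it to a test in the one-dimensional experiment, preserving the size constraint there), one produces a test $\tilde\varphi(\tau, \tilde x(\tau))$ with the same power at $h$ and size $\le \alpha$; applying the first part of the proposition with $c = a^\intercal h$ gives $\mathbb{E}_h[\varphi] = \mathbb{E}_h[\tilde\varphi] \le \beta^*(a^\intercal h)$. Finally, since every subsequence of $\{\beta_n(h)\}$ has a further subsequence converging to a limit bounded by $\beta^*(a^\intercal h)$, the full sequence satisfies $\limsup_n \beta_n(h) \le \beta^*(a^\intercal h)$, which is the claim.

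The main obstacle I anticipate is the sufficiency/conditioning step in the second claim: making rigorous that reducing $\varphi(\tau, x(\tau))$ to its conditional expectation given $\tilde{\mathcal{F}}_\tau$ does not decrease its power against the drift $h$ while preserving the size constraint uniformly over the composite null $\{a^\intercal h' = 0\}$. This requires knowing that $\tilde{\mathcal{F}}_t$ is a sufficient sub-filtration for the sub-model $\{h : \text{only } a^\intercal h \text{ varies}\}$, and that the stopping time's $\tilde{\mathcal{F}}_t$-adaptedness is exactly what makes the stopped $\sigma$-algebra $\tilde{\mathcal F}_\tau$ sufficient --- a statement that is intuitively clear from the independent-increments structure of Brownian motion and the orthogonal decomposition $x(t) = I^{1/2}(\text{proj onto } a)\,\tilde x(t)\,\sigma + (\text{orthogonal Brownian part})$, but whose careful verification (especially the interplay with the stopping time and the auxiliary randomization $U_1$ versus $U$) is where the real work lies. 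Everything else --- the Girsanov computation, optional stopping under the bound $\tau \le T$, and the subsequence argument --- is routine.
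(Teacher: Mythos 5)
Your overall architecture (Girsanov computation, Neyman--Pearson, pivotality of the null law of $\tilde{x}$, then chaining with Theorem \ref{Thm: ART} and a subsequence argument) matches the paper's, but you are missing the one non-routine idea that makes the argument close, and the step you flag as the ``main obstacle'' is exactly where your proposal breaks down. The paper does \emph{not} work in the sub-experiment generated by $\tilde{x}(\cdot)$ and then lift via sufficiency. Instead, for each $h_{1}$ with $a^{\intercal}h_{1}=c$ it chooses the explicit least-favorable null point $h_{0}=h_{1}-(a^{\intercal}h_{1}/a^{\intercal}I^{-1}a)I^{-1}a$ and computes, via Girsanov over the \emph{full} filtration $\mathcal{F}_{\tau}$,
\[
\ln\frac{d\mathbb{P}_{h_{1}}}{d\mathbb{P}_{h_{0}}}(\mathcal{F}_{\tau})=\frac{c}{\sigma}\tilde{x}(\tau)-\frac{c^{2}}{2\sigma^{2}}\tau .
\]
Because this likelihood ratio between two measures on the whole path space depends only on $(\tau,\tilde{x}(\tau))$, the Neyman--Pearson lemma immediately yields that $\varphi_{c}^{*}$ is most powerful against $h_{1}$ among \emph{all} tests of the simple null $h=h_{0}$, including tests using the components of $x(\tau)$ orthogonal to $a^{\intercal}I^{-1/2}x(\tau)$. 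Pivotality of $(\tau,\tilde{x}(\tau))$ under the composite null then upgrades this to UMP for $H_{0}:a^{\intercal}h=0$, and the second claim follows at once: any level-$\alpha$ limit test produced by Theorem \ref{Thm: ART} is in particular level-$\alpha$ at $h_{0}$, so its power at $h_{1}$ is bounded by the NP bound $\beta^{*}(c)$. No conditioning or sufficiency reduction is needed.

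Two concrete problems with your version. First, your proof of the first claim establishes UMP only within the class of tests measurable with respect to $(\tau,\tilde{x}(\tau))$, whereas the proposition asserts UMP in the limit experiment, where tests may depend on all of $x(\tau)$; the gap between these two classes is the whole content of the result. Second, your proposed repair --- replacing $\varphi$ by its conditional expectation given $\tilde{\mathcal{F}}_{\tau}$ --- does not preserve power as stated: if the conditional expectation is taken under a generic null measure $\mathbb{P}_{h'}$, then for an alternative $h$ whose component orthogonal to $I^{-1}a$ differs from that of $h'$, the conditional law of $x(\tau)$ given $(\tau,\tilde{x}(\tau))$ differs between $\mathbb{P}_{h}$ and $\mathbb{P}_{h'}$, so $\mathbb{E}_{h}[\varphi]\neq\mathbb{E}_{h}[\mathbb{E}_{h'}[\varphi\mid\tilde{\mathcal{F}}_{\tau}]]$ in general. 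The conditioning trick works only if you condition under the specific $h_{0}$ that is the projection of $h_{1}$, because only then is $d\mathbb{P}_{h_{1}}/d\mathbb{P}_{h_{0}}$ itself $\tilde{\mathcal{F}}_{\tau}$-measurable --- which is precisely the least-favorable-direction construction you did not supply. Once that construction is in hand, the conditioning detour is superfluous.
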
 

The above result suggests that $\tilde{x}(\tau)$ and $\tau$ are
sufficient statistics for the optimal test. An important caveat, however,
is that the class of stopping times are further constrained to only
depend on $\tilde{x}(t)$ in the limit. In practice, this would happen
if the stopping time $\hat{\tau}$ in the original experiment is a
function only of $\hat{\tilde{x}}_{n}(\cdot):=\sigma^{-1}a^{\intercal}I^{-1/2}x_{n}(\cdot)$.
Fortunately, this is the case in a number of examples. 

It is straightforward to show that the same power envelope, $\beta^{*}(\cdot)$,
also applies to tests of the composite hypothesis $H_{0}:a^{\intercal}\theta\le0$. 

\subsubsection{Unbiased tests}

A test is said to be unbiased if its power is greater than size under
all alternatives. The following result describes a useful property
of unbiased tests in the limit experiment: 

\begin{prop} \label{Prop: Unbiased} Any unbiased test of $H_{0}:h=0$
vs $H_{1}:h\neq0$ in the limit experiment must satisfy $\mathbb{E}_{0}[x(\tau)\varphi(\tau,x(\tau))]=0$.
\end{prop}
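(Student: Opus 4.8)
The plan is to combine the classical observation that an unbiased test has a power function minimized at the null with the explicit Girsanov likelihood ratio for the limit experiment. Write $\beta(h):=\mathbb{E}_h[\varphi(\tau,x(\tau))]$ for the power function of the candidate test. If $\varphi$ is unbiased at level $\alpha$, then $\beta(h)\ge\alpha$ for all $h\neq0$ while $\beta(0)\le\alpha$; once $\beta(\cdot)$ is shown to be continuous (indeed real-analytic) at the origin, letting $h\to0$ forces $\beta(0)=\alpha$, so that $h=0$ is a global minimizer of $\beta$ over $\mathbb{R}^d$. Being an interior minimizer of a differentiable function, it must satisfy the first-order condition $\nabla_h\beta(0)=0$, and the whole proof reduces to computing this gradient.

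To that end, I would represent $\beta(h)$ as an expectation under $\mathbb{P}_0$. In the limit experiment $x(t)=I^{1/2}ht+W(t)$, so by Girsanov's theorem the Radon--Nikodym derivative of $\mathbb{P}_h$ with respect to $\mathbb{P}_0$, restricted to $\mathcal{F}_t$, is $\exp\!\big(h^\intercal I^{1/2}x(t)-\tfrac{t}{2}h^\intercal Ih\big)$; since the associated density process is a $\mathbb{P}_0$-martingale and $\tau\le T$ is bounded (being the limit of the bounded stopping times $\hat\tau\le T$), optional stopping gives the same formula with $t$ replaced by $\tau$. This is exactly the limiting log-likelihood displayed just after Theorem \ref{Thm: ART}, and the diffusion analogue of the SLAN expansion \eqref{eq:SLAN property}. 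Because $\varphi(\tau,x(\tau))$ is $\mathcal{F}_\tau$-measurable, this yields
\[
\beta(h)=\mathbb{E}_0\!\left[\varphi(\tau,x(\tau))\,\exp\!\Big(h^\intercal I^{1/2}x(\tau)-\tfrac{\tau}{2}h^\intercal Ih\Big)\right].
\]
I would then differentiate under the expectation at $h=0$. The gradient of the quadratic term vanishes there, so, granting the interchange,
\[
\nabla_h\beta(0)=\mathbb{E}_0\!\left[\varphi(\tau,x(\tau))\,I^{1/2}x(\tau)\right]=I^{1/2}\,\mathbb{E}_0\!\left[x(\tau)\varphi(\tau,x(\tau))\right],
\]
and setting this to zero and inverting the positive-definite matrix $I^{1/2}$ delivers $\mathbb{E}_0[x(\tau)\varphi(\tau,x(\tau))]=0$.

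The one step requiring care is the justification for differentiating under the expectation (which also supplies the continuity of $\beta$ used above). Since $\varphi\le1$ and $-\tfrac{\tau}{2}h^\intercal Ih\le0$, the integrand is dominated in absolute value by $\exp\!\big(h^\intercal I^{1/2}x(\tau)\big)\le\exp\!\big(|h|\,\|I^{1/2}\|\,M\big)$, where $M:=\sup_{t\le T}|x(t)|$; by Doob's maximal inequality $M$ has Gaussian tails, so $\mathbb{E}_0[e^{\lambda M}]<\infty$ for every $\lambda$, furnishing a dominating function locally uniformly in $h$. Standard dominated-convergence arguments then legitimize both the continuity of $\beta$ at $0$ and the interchange of $\nabla_h$ with $\mathbb{E}_0$. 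A minor point worth stating explicitly is that unbiasedness only gives $\beta(h)\ge\alpha$ \emph{away} from the null, so the continuity of $\beta$ at the origin is genuinely needed to conclude $\beta(0)=\alpha$ and make $0$ an interior minimizer.
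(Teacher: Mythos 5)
Your proof is correct and follows essentially the same route as the paper's: represent $\beta(h)$ under $\mathbb{P}_0$ via Girsanov, observe that unbiasedness makes $h=0$ an interior minimizer of the differentiable power function, and read off the first-order condition $\mathbb{E}_0[x(\tau)\varphi(\tau,x(\tau))]=0$. You additionally spell out the domination argument justifying differentiation under the expectation and the continuity step giving $\beta(0)=\alpha$, which the paper leaves as ``can be verified''; these details are accurate and welcome but do not change the argument.
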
 

See Section \ref{subsec:Sequential-linear-boundary} for an application
of the above result.

\subsubsection{Weighted average power}

Suppose we specify a weight function, $w(\cdot)$, over alternatives
$h\neq0$. Then, the test of $H_{0}:h=0$ in the limit experiment
that maximizes weighted average power is given by
\[
\varphi_{w}^{*}(\tau,x(\tau))=\mathbb{I}\left\{ \int e^{h^{\intercal}I^{1/2}x(\tau)-\frac{\tau}{2}h^{\intercal}Ih}dw(h)\ge\gamma\right\} .
\]
The value of $\gamma$ is determined by the size requirement. 

\subsection{Alpha-spending criterion\label{subsec:Conditional-inference}}

In this section, we study inference under a stronger version of the
size constraint, inspired by the $\bm{\alpha}$-spending approach
in group sequential trials (\citealp{gordon1983discrete}). Suppose
that the stopping time is discrete, taking only the values $t=1,2,\dots,T$.
Then, instead of an overall size constraint of the form $\mathbb{E}_{nT,\bm{0}}[\varphi_{n}]\le\alpha$,
we may specify a `spending-vector' $\bm{\alpha}:=(\alpha_{1},\dots,\alpha_{T})$
satisfying $\sum_{t=1}^{T}\alpha_{t}=\alpha$, and require
\begin{equation}
\mathbb{E}_{nT,\bm{0}}[\mathbb{I}\{\hat{\tau}=t\}\varphi_{n}]\le\alpha_{t}\ \forall\ t.\label{eq:alpha spending}
\end{equation}
In what follows, we call a test, $\varphi_{n}$, satisfying (\ref{eq:alpha spending})
a level-$\bm{\alpha}$ test (with a boldface $\bm{\alpha}$). Intuitively,
if each $t$ corresponds to a different stage of the experiment, the
$\bm{\alpha}$-spending constraint prescribes the maximum amount of
Type-I error that may be expended at stage $t$. As a practical matter,
it enables us to characterize a UMP or UMP unbiased test in settings
where such tests do not otherwise exist. We also envision the criterion
as a useful conceptual device: even if we are ultimately interested
in a standard level-$\alpha$ test, we can obtain this by optimizing
a chosen power criterion (average power, etc.) over the spending vectors
$\bm{\alpha}:=(\alpha_{1},\dots,\alpha_{K})$ satisfying $\sum_{k}\alpha_{k}\le\alpha$. 

A particularly interesting example of an $\bm{\alpha}$-spending vector
is $(\alpha P_{nT,0}(\hat{\tau}=1),\dots,\alpha P_{nT,0}(\hat{\tau}=k))$;
this corresponds to the requirement that $\mathbb{E}_{nT,\bm{0}}\left[\left.\varphi_{n}\right|\hat{\tau}=t\right]\le\alpha$
for all $t$, i.e., the test be conditionally level-$\alpha$ given
any realization of the stopping time. This may have some intuitive
appeal, though it does disregard any information provided by the stopping
time for discriminating between the hypotheses. 

Under the $\bm{\alpha}$-spending constraint, a test that maximizes
expected power also maximizes expected power conditional on each realization
of stopping time. This is a simple consequence of the law of iterated
expectations. Consequently, we focus on conditional power in this
section. Our main result here is a generalization of Theorem \ref{Thm: ART}
to $\bm{\alpha}$-spending restrictions. The limit experiment is the
same as in Section \ref{subsec:Asymptotic-representation-theore}.

\begin{thm} \label{Thm: ART-conditional inference}Suppose Assumptions
1, 2 hold, and the stopping times are discrete, taking only the values
$1,2,\dots,T$. Let $\varphi_{n}$ be some level-$\bm{\alpha}$ test
defined on the sample space $Y_{1},\dots,Y_{n\hat{\tau}}$, and $\beta_{n}(h\vert t)$,
its conditional power against $P_{nT,h}$ given $\hat{\tau}=t$. Then,
there exists a level-$\bm{\alpha}$ test, $\varphi(\cdot)$, in the
limit experiment depending only on $\tau,x(\tau)$ such that, for
every $h\in\mathbb{R}^{d}$ and $t\in\{1,2,\dots,T\}$ for which $\mathbb{P}_{0}(\tau=t)\neq0$,
$\beta_{n}(h\vert t)$ converges to $\beta(h\vert t)$ on subsequences,
where $\beta(h\vert t):=\mathbb{E}_{h}[\varphi(\tau,x(\tau))\vert\tau=t]$
is the conditional power of $\varphi(\cdot)$ in the limit experiment.\end{thm}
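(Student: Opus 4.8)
The plan is to reduce Theorem~\ref{Thm: ART-conditional inference} to Theorem~\ref{Thm: ART} by a conditioning argument, taking care of the fact that conditioning on $\{\hat\tau=t\}$ is a positive-probability event (since the stopping times are discrete) so the conditional power functions are well-behaved. First I would record the basic structural consequence of the $\bm\alpha$-spending constraint: writing $\beta_n(h\mid t)$ for the conditional power and $P_{nT,h}(\hat\tau=t)$ for the conditional probability of stopping at stage $t$, the overall power decomposes as $\beta_n(h)=\sum_{t=1}^T P_{nT,h}(\hat\tau=t)\,\beta_n(h\mid t)$, and the constraint (\ref{eq:alpha spending}) reads $P_{nT,0}(\hat\tau=t)\,\beta_n(0\mid t)\le\alpha_t$ for each $t$. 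Since $T$ is finite and fixed, there are only finitely many subsequential limits to track, and a diagonal argument lets me extract a single subsequence along which everything below converges.

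Next I would invoke Theorem~\ref{Thm: ART}: along a suitable subsequence $\{n_{j_m}\}$ there is an $\mathcal F_t$-adapted limiting stopping time $\tau$ with $(\hat\tau,x_n(\hat\tau))\xrightarrow{d}(\tau,x(\tau))$ under $P_{nT,0}$, and a limit-experiment test $\varphi(\tau,x(\tau))$ whose \emph{unconditional} power $\mathbb E_h[\varphi(\tau,x(\tau))]$ matches $\lim\beta_{n_{j_m}}(h)$. The key point is that the construction in Theorem~\ref{Thm: ART} actually delivers joint convergence of $(\hat\tau, x_n(\hat\tau), \varphi_n)$ under every $P_{nT,h}$ (via SLAN/Le~Cam's third lemma, contiguity giving the change of measure from $P_{nT,0}$ to $P_{nT,h}$), so I get joint weak convergence of the \emph{pair} $\big(\mathbb{I}\{\hat\tau=t\},\varphi_n\big)$ to $\big(\mathbb{I}\{\tau=t\},\varphi\big)$ under $P_{nT,h}$. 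Because $\hat\tau$ is discrete with values in $\{1,\dots,T\}$ and $\mathbb P_0(\tau=t)\neq0$, the indicator $\mathbb{I}\{\hat\tau=t\}$ is (asymptotically) a continuity-set functional — here I would need a small lemma that $\mathbb P_h(\tau=t)\neq 0$ whenever $\mathbb P_0(\tau=t)\neq 0$, which follows from mutual absolute continuity of $\mathbb P_h$ and $\mathbb P_0$ on $\mathcal F_T$ (the limiting likelihood ratio $\exp\{h^\intercal I^{1/2}x(\tau)-\tfrac{\tau}{2}h^\intercal Ih\}$ is a.s.\ positive and finite). Hence $P_{nT,h}(\hat\tau=t)\to\mathbb P_h(\tau=t)>0$ and $\mathbb E_{nT,h}[\mathbb{I}\{\hat\tau=t\}\varphi_n]\to\mathbb E_h[\mathbb{I}\{\tau=t\}\varphi]$.

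With these two convergences in hand the conditional powers converge by the continuous mapping theorem applied to the ratio:
\[
\beta_n(h\mid t)=\frac{\mathbb E_{nT,h}[\mathbb{I}\{\hat\tau=t\}\varphi_n]}{P_{nT,h}(\hat\tau=t)}\;\longrightarrow\;\frac{\mathbb E_h[\mathbb{I}\{\tau=t\}\varphi]}{\mathbb P_h(\tau=t)}=\mathbb E_h[\varphi(\tau,x(\tau))\mid\tau=t]=:\beta(h\mid t),
\]
the denominator being bounded away from $0$. Finally I would check that the limit test $\varphi$ inherits the $\bm\alpha$-spending property: taking $h=0$ in the numerator convergence gives $\mathbb E_0[\mathbb{I}\{\tau=t\}\varphi]=\lim \mathbb E_{nT,0}[\mathbb{I}\{\hat\tau=t\}\varphi_n]\le\alpha_t$, so $\varphi$ is level-$\bm\alpha$ in the limit experiment, as claimed. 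The main obstacle — and the step deserving the most care — is the joint weak convergence of $(\mathbb{I}\{\hat\tau=t\},\varphi_n)$ under the alternative $P_{nT,h}$ together with the portmanteau/continuity argument for the \emph{discrete} functional $\mathbb{I}\{\hat\tau=t\}$: one must ensure no mass escapes to the boundary between stages $t$ and $t+1$ in the limit, which is exactly where discreteness of $\hat\tau$ (built into the hypothesis) and positivity of $\mathbb P_h(\tau=t)$ are used. Everything else is bookkeeping: passing to a common subsequence for the finitely many $t$'s, and transferring absolute continuity from $P_{nT,0}$ to the limit via SLAN (\ref{eq:SLAN property}) and contiguity.
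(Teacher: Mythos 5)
Your proposal is correct and follows essentially the same route as the paper: extract a subsequence along which $(\varphi_n,\hat\tau,\ln\frac{dP_{n\hat\tau,h}}{dP_{n\hat\tau,0}})$ converges jointly under $P_{nT,0}$, transfer to $P_{nT,h}$ via the likelihood-ratio/Le Cam third lemma argument to get $\mathbb{E}_{nT,h}[\mathbb{I}\{\hat\tau=t\}\varphi_n]\to\mathbb{E}_h[\mathbb{I}\{\tau=t\}\varphi]$ and $\mathbb{E}_{nT,h}[\mathbb{I}\{\hat\tau=t\}]\to\mathbb{P}_h(\tau=t)$, use mutual absolute continuity of $\mathbb{P}_0$ and $\mathbb{P}_h$ to ensure the denominator is positive, and take the ratio. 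The points you flag for care (discreteness of $\hat\tau$ making $\{\hat\tau=t\}$ an effective continuity set, and verifying the level-$\bm{\alpha}$ property by setting $h=0$) are exactly the ones the paper's proof relies on.
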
 

It may be possible to extend the above result to continuous stopping
times using Le Cam's discretization device, though we do not take
this up here. 

\subsubsection{Power envelope}

By the Neyman-Pearson lemma, the uniformly most powerful level-$\bm{\alpha}$
(UMP-$\bm{\alpha}$) test of $H_{0}:h=0$ vs $H_{1}:h=h_{1}$ in the
limit experiment is given by
\[
\varphi_{h_{1}}^{*}(t,x(t))=\begin{cases}
1 & \textrm{if }\mathbb{P}_{0}(\tau=t)\le\alpha_{t}\\
\mathbb{I}\left\{ h_{1}^{\intercal}I^{1/2}x(t)\ge\gamma(t)\right\}  & \textrm{if }\mathbb{P}_{0}(\tau=t)>\alpha_{t}
\end{cases}.
\]
Here, $\gamma(t)\in\mathbb{R}$ is chosen by the $\bm{\alpha}$-spending
requirement that $\mathbb{E}_{0}[\varphi_{h_{1}}^{*}(\tau,x(\tau))\vert\tau=t]\le\alpha_{t}/\mathbb{P}_{0}(\tau=t)$
for each $t$. If we take $\beta^{*}(h_{1}\vert t)$ to be the power
function of $\varphi_{h_{1}}^{*}(\cdot)$, Theorem \ref{Thm: ART-conditional inference}
implies $\beta^{*}(\cdot\vert t)$ is an upper bound on the limiting
conditional power function of any level-$\bm{\alpha}$ test of $H_{0}:\theta=\theta_{0}$. 

\subsubsection{Testing linear combinations}

A stronger result is possible for tests of linear combinations of
$\theta$. Recall the definitions of $\tilde{x}(t)$ and $\tilde{\mathcal{F}_{t}}$
from Section \ref{subsec:Testing-linear-combinations}. If the limiting
stopping time is $\tilde{\mathcal{F}_{t}}$ -adapted, we have, as
in Proposition \ref{Prop: Linear combinations}, that the sufficient
statistics are only $\tilde{x}(\tau),\tau$, and the UMP-$\bm{\alpha}$
test of $H_{0}:a^{\intercal}h=0$ vs $H_{1}:a^{\intercal}h=c\ (>0)$
in the limit experiment is
\[
\breve{\varphi}^{*}(t,\tilde{x}(t))=\begin{cases}
1 & \textrm{if }\mathbb{P}_{0}(\tau=t)\le\alpha_{t}\\
\mathbb{I}\left\{ c\tilde{x}(t)\ge\gamma_{c}(t)\right\} \equiv\mathbb{I}\left\{ \tilde{x}(t)\ge\tilde{\gamma}(t)\right\}  & \textrm{if }\mathbb{P}_{0}(\tau=t)>\alpha_{t}
\end{cases}.
\]
Here, $\tilde{\gamma}(t)$ is chosen such that $\mathbb{E}_{0}[\breve{\varphi}^{*}(\tau,\tilde{x}(\tau))\vert\tau=t]=\alpha_{t}/\mathbb{P}_{0}(\tau=t)$.
Clearly, $\tilde{\gamma}(t)$ it is independent of $c$ for $c>0$.
Since $\breve{\varphi}^{*}(\cdot)$ is thereby also independent of
$c$ for $c>0$, we conclude that it is UMP-$\bm{\alpha}$ for testing
the composite one-sided alternative $H_{0}:a^{\intercal}h=0$ vs $H_{1}:a^{\intercal}h>0$.
Thus, a UMP-$\bm{\alpha}$ test exists in this scenario even as a
UMP test doesn't. What is more, by Theorem \ref{Thm: ART-conditional inference},
the conditional power function, $\breve{\beta}^{*}(c\vert t)$, of
$\breve{\varphi}^{*}(\cdot)$ is an asymptotic upper bound on the
conditional power of any level-$\bm{\alpha}$ test, $\varphi_{n}$,
of $H_{0}:a^{\intercal}\theta=0$ vs $H_{1}:a^{\intercal}\theta>0$
in the original experiment against local alternatives $\theta\equiv\theta_{0}+h/\sqrt{n}$
satisfying $a^{\intercal}\theta=c/\sqrt{n}$. 

\subsubsection{Conditionally unbiased tests\label{subsec:Conditionally-unbiased-tests}}

We call a test conditionally unbiased if it is unbiased conditional
on any possible realization of the stopping time. In analogy with
Proposition \ref{Prop: Unbiased}, a necessary condition for $\varphi(\cdot)$
being conditionally unbiased in the limit experiment is that
\begin{equation}
\mathbb{E}_{0}\left[x(\tau)\left(\varphi(\tau,x(\tau))-\alpha\right)\vert\tau=t\right]=0\ \forall\ t.\label{eq:Conditional unbiasedness definition}
\end{equation}
Then, by a similar argument as in \citet[Section 4.2]{lehmann2005testing},
the UMP conditionally unbiased (level-$\bm{\alpha}$) test of $H_{0}:a^{\intercal}h=0$
vs $H_{1}:a^{\intercal}h\neq0$ in the limit experiment can be shown
to be
\[
\bar{\varphi}^{*}(t,\tilde{x}(t))=\begin{cases}
1 & \textrm{if }\mathbb{P}_{0}(\tau=t)\le\alpha_{t}\\
\mathbb{I}\left\{ \tilde{x}(t)\notin\left[\gamma_{L}(t),\gamma_{U}(t)\right]\right\}  & \textrm{if }\mathbb{P}_{0}(\tau=t)>\alpha_{t}
\end{cases}.
\]
The quantities $\gamma_{L}(t),\gamma_{U}(t)$ are chosen to satisfy
both (\ref{eq:alpha spending}) and (\ref{eq:Conditional unbiasedness definition}).
In practice, this requires simulating the distribution of $\tilde{x}(\tau)$
given $\tau=t$. Also, $\gamma_{L}(\cdot)=-\gamma_{U}(\cdot)$ if
the distribution of $\tilde{x}(\tau)$ given $\tau=t$ is symmetric
around 0 under the null. 

\subsection{On the choice of $\theta_{0}$ and employing a drifting null}

Earlier in this section, we took $\theta_{0}\in\Theta_{0}$ to be
some reference parameter in the null set. However, such a choice may
result in the limiting stopping time, $\tau$, collapsing to $0$.
Consider, for example, the case of costly sampling (Example 1 in Section
\ref{subsec:Examples}). In this experiment, the stopping time, $\hat{\tau}$,
is itself chosen around a reference parameter $\theta_{0}$ (typically
chosen so that the effect of interest is $0$ at $\theta_{0}$). But
suppose we are interested in testing $H_{0}:\theta=\bar{\theta}_{0}$,
for some $\bar{\theta}_{0}\neq\theta_{0}$. Under this null, $\hat{\tau}$
converges to $0$ in probability as $\bar{\theta}_{0}$ is a fixed
distance away from $\theta_{0}$. This issue with the stopping time
arises because the null hypothesis and the stopping time are not centered
around the same reference parameter. 

One way to still provide inference in such settings is to set the
reference parameter to $\theta_{0}$, but employ a drifting null $H_{0}:h=h_{0}/\sqrt{n}$,
where $h_{0}$ is taken to be fixed over $n$, and is calibrated as
$\sqrt{n}(\bar{\theta}-\theta_{0})$. The null, $H_{0}$, thus changes
with $n$, but for the observed sample size we are still testing $\theta=\bar{\theta}_{0}$.
It is then straightforward to show that Theorems \ref{Thm: ART} and
\ref{Thm: ART-conditional inference} continue to apply in this setting;
asymptotically, the inference problem is equivalent to testing that
the drift of $x(\cdot)$ is $I^{1/2}h_{0}$ in the limit experiment.
The asymptotic approximation is expected to be more accurate the closer
$\bar{\theta}_{0}$ is to $\theta_{0}$; but for distant values of
$\bar{\theta}_{0}$, we caution that local asymptotics may not provide
a good approximation. 

\subsection{Attaining the bound\label{subsec:Attaining-the-bound}}

So far we have described upper bounds on the asymptotic power functions
of tests. Now, given a UMP test, $\varphi^{*}(\tau,x(\tau))$, in
the limit experiment, we can construct a finite sample version of
this, $\varphi_{n}^{*}:=\varphi^{*}(\hat{\tau},x_{n}(\hat{\tau}))$,
by replacing $\tau,x(\tau)$ with $\hat{\tau},x_{n}(\hat{\tau})$.
Since $x_{n}(\hat{\tau})$ depends on the information matrix, $I$,
one would need to either calibrate it to $I(\theta_{0})$ (if $\theta_{0}$
is known), or replace it with a consistent estimate. We discuss variance
estimators in Appendix \ref{subsec:Variance-estimators}. 

The test, $\varphi_{n}^{*}$, would then be asymptotically optimal,
in the sense of attaining the power envelope, under mild assumptions.
In particular, we only require that $\varphi^{*}(\cdot,\cdot)$ satisfy
the conditions for an extended continuous mapping theorem. Together
with (\ref{eq:SLAN property}) and the first part of Theorem \ref{Thm: ART},
this implies
\[
\left(\begin{array}{c}
\varphi^{*}(\hat{\tau},x_{n}(\hat{\tau}))\\
\sum_{i=1}^{\left\lfloor n\hat{\tau}\right\rfloor }\ln\frac{dp_{\theta_{0}+h/\sqrt{n}}}{dp_{\theta_{0}}}(Y_{i})
\end{array}\right)\xrightarrow[P_{nT,0}]{d}\left(\begin{array}{c}
\varphi^{*}(\tau,x(\tau))\\
h^{\intercal}I^{1/2}x(\tau)-\frac{\tau}{2}h^{\intercal}Ih
\end{array}\right),
\]
for any $h\in\mathbb{R}^{d}$. Then, a similar argument as in the
proof of Theorem \ref{Thm: ART} shows that the local power of $\varphi_{n}^{*}$
converges to that of $\varphi^{*}$ in the limit experiment. 

\section{Testing in non-parametric settings\label{sec:Testing-in-non-parametric}}

We now turn to the setting where the distribution of outcomes is non-parametric.
Let $\mathcal{P}$ denote a candidate class of probability measures
for the outcome $Y$, with bounded variance, and dominated by some
measure $\nu$. We are interested in conducting inference on some
regular functional, $\mu:=\mu(P)$, of the unknown data distribution
$P\in\mathcal{P}$. We assume for simplicity that $\mu$ is scalar.
Let $P_{0}\in\mathcal{P}$ denote some reference probability distribution
on the boundary of the null hypothesis so that $\mu(P_{0})=0$. Following
\citet[Section 25.6]{van2000asymptotic}, we consider the power of
tests against smooth one-dimensional sub-models of the form $\{P_{s,h}:s\le\eta\}$
for some $\eta>0$, where $h(\cdot)$ is a measurable function satisfying
\begin{equation}
\int\left[\frac{dP_{s,h}^{1/2}-dP_{0}^{1/2}}{s}-\frac{1}{2}hdP_{0}^{1/2}\right]^{2}d\nu\to0\ \textrm{as}\ s\to0.\label{eq:qmd non-parametrics}
\end{equation}

By \citet{van2000asymptotic}, (\ref{eq:qmd non-parametrics}) implies
$\int hdP_{0}=0$ and $\int h^{2}dP_{0}<\infty$. The set of all such
candidate $h$ is termed the tangent space $T(P_{0})$. This is a
subset of the Hilbert space $L^{2}(P_{0})$, endowed with the inner
product $\left\langle f,g\right\rangle =\mathbb{E}_{P_{0}}[fg]$ and
norm $\left\Vert f\right\Vert =\mathbb{E}_{P_{0}}[f^{2}]^{1/2}$.
For any $h\in T(P_{0})$, let $P_{nT,h}$ denote the joint probability
measure over $Y_{1},\dots,Y_{nT}$, when each $Y_{i}$ is an iid draw
from $P_{1/\sqrt{n},h}$. Also, take $\mathbb{E}_{nT,h}[\cdot]$ to
be its corresponding expectation. An important implication of (\ref{eq:qmd non-parametrics})
is the SLAN property that for all $h\in T(P_{0})$,
\begin{align}
\sum_{i=1}^{\left\lfloor nt\right\rfloor }\ln\frac{dP_{1/\sqrt{n},h}}{dP_{0}}(Y_{i}) & =\frac{1}{\sqrt{n}}\sum_{i=1}^{\left\lfloor nt\right\rfloor }h(Y_{i})-\frac{t}{2}\left\Vert h\right\Vert ^{2}+o_{P_{nT,0}}(1),\ \textrm{ uniformly over }t.\label{eq:SLAN nonparametric setting}
\end{align}
See \citet[Lemma 2]{adusumilli2021risk} for the proof. 

Let $\psi\in T(P_{0})$ denote the efficient influence function corresponding
to estimation of $\mu$, in the sense that for any $h\in T(P_{0})$,
\begin{equation}
\frac{\mu(P_{s,h})-\mu(P_{0})}{s}-\left\langle \psi,h\right\rangle =o(s).\label{eq:influence function}
\end{equation}
Denote $\sigma^{2}=\mathbb{E}_{P_{0}}[\psi^{2}]$. The analogue of
the score process in the non-parametric setting is the efficient influence
function process 
\[
x_{n}(t):=\frac{\sigma^{-1}}{\sqrt{n}}\sum_{i=1}^{\left\lfloor nt\right\rfloor }\psi(Y_{i}).
\]

At a high level, the theory for inference in non-parametric settings
is closely related to that for testing linear combinations in parametric
models (see, Section \ref{subsec:Characterization-of-optimal}). It
is not entirely surprising, then, that the assumptions described below
are similar to those used in Proposition \ref{Prop: Linear combinations}:

\begin{asm3} (i) The sub-models $\{P_{s,h};h\in T(P_{0})\}$ satisfy
(\ref{eq:qmd non-parametrics}). Furthermore, they admit an efficient
influence function, $\psi$, such that (\ref{eq:influence function})
holds. 

(ii) The stopping time $\hat{\tau}$ is a continuous function of $x_{n}(\cdot)$
in the sense that $\hat{\tau}=\tau(x_{n}(\cdot))$, where $\tau(\cdot)$
satisfies the conditions for an extended continuous mapping theorem
\citep[Theorem 1.11.1]{van1996weak}.\end{asm3}

Assumption 3(i) is a mild regularity condition that is common in non-parametric
analysis. Assumption 3(ii), which is substantive, states that the
stopping times depend only on the efficient influence function process.
This is indeed the case for the examples considered in Section \ref{sec:Applications}.
More generally, however, it may be that $\hat{\tau}$ depends on other
statistics beyond $x_{n}(\cdot)$. In such situations, the set of
asymptotically sufficient statistics should be expanded to include
these additional ones. We remark that an extension of our results
to these situations is straightforward, see Section \ref{subsec:Non-parametric-tests-batched}
for an illustration.

We call a test, $\varphi_{n}$, of $H_{0}:\mu=0$ asymptotically level-$\alpha$
if 
\[
\sup_{\left\{ h\in T(P_{0}):\left\langle \psi,h\right\rangle =0\right\} }\limsup_{n}\int\varphi_{n}dP_{nT,h}\le\alpha.
\]
Our first result in this section is a power envelope for asymptotically
level-$\alpha$ tests. Consider a limit experiment where one observes
a stopping time $\tau$, which is the weak limit of $\hat{\tau}$,
and a Gaussian process $x(\cdot)\sim\sigma^{-1}\mu\cdot+W(\cdot)$,
where $W(\cdot)$ denotes 1-dimensional Brownian motion. By Assumption
3(ii), $\tau$ is adapted to the filtration generated by the sample
paths of $x(\cdot)$. For any $\mu\in\mathbb{R}$, let $\mathbb{E}_{\mu}[\cdot]$
denote the induced distribution over the sample paths of $x(\cdot)$
between $[0,T]$. Also, define
\begin{equation}
\varphi_{\mu}^{*}(\tau,x(\tau)):=\mathbb{I}\left\{ \mu x(\tau)-\frac{\mu^{2}}{2\sigma}\tau\ge\gamma\right\} ,\label{eq:UMP test: non-parametrics}
\end{equation}
with $\gamma$ being determined by the requirement $\mathbb{E}_{0}[\varphi_{\mu}^{*}]=\alpha$,
and set $\beta^{*}(\mu):=\mathbb{E}_{\mu}[\varphi_{\mu}^{*}]$. 

\begin{prop} \label{Prop: Non-parametric}Suppose Assumption 3 holds.
Let $\beta_{n}(h)$ the power of some asymptotically level-$\alpha$
test, $\varphi_{n}$, of $H_{0}:\mu=0$ against local alternatives
$P_{\delta/\sqrt{n},h}$. Then, for every $h\in T(P_{0})$ and $\mu:=\delta\left\langle \psi,h\right\rangle $,
$\limsup_{n\to\infty}\beta_{n}(h)\le\beta^{*}\left(\mu\right)$.\end{prop}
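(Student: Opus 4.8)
The plan is to reduce Proposition \ref{Prop: Non-parametric} to the parametric case already handled in Proposition \ref{Prop: Linear combinations}, or rather to reprove the latter's argument directly in the non-parametric language, via the SLAN expansion \eqref{eq:SLAN nonparametric setting}. The key observation is that although $T(P_0)$ is infinite-dimensional, only the scalar quantity $\langle \psi, h\rangle$ matters for both the testing problem and (asymptotically) the stopping time. Concretely, I would first note that by \eqref{eq:SLAN nonparametric setting} applied with the single direction $h$, the log-likelihood ratio of $P_{nT,h}$ against $P_{nT,0}$ along the submodel, evaluated at any $t$, converges jointly with $x_n(\cdot)$ to $\sigma\langle \psi, h\rangle x(t) - \tfrac{t}{2}\|h\|^2 + \text{(terms in the component of $h$ orthogonal to $\psi$)}$. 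Decomposing $h = \langle\psi,h\rangle\psi/\sigma^2 + h_\perp$, the piece of the log-likelihood that interacts with $x_n(\cdot)$ depends on $h$ only through $\mu = \delta\langle\psi,h\rangle$; the orthogonal part contributes an independent drift that is irrelevant to any test measurable with respect to $(\hat\tau, x_n(\hat\tau))$.

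Next I would invoke the asymptotic representation machinery. By the same LeCam-type argument underlying Theorem \ref{Thm: ART} (now applied in the non-parametric SLAN framework, using Assumption 3(ii) so that $\hat\tau = \tau(x_n(\cdot))$ and the extended continuous mapping theorem gives $(\hat\tau, x_n(\hat\tau)) \xrightarrow{d} (\tau, x(\tau))$ under $P_{nT,0}$), along every subsequence there is a further subsequence on which $\beta_n(h) \to \mathbb{E}_\mu[\varphi(\tau,x(\tau))]$ for some test $\varphi$ in the limit experiment, where under $\mathbb{P}_\mu$ the process $x(\cdot)$ has drift $\sigma^{-1}\mu$. Crucially the asymptotic level-$\alpha$ requirement passes to the limit: taking $h$ in the null set $\{\langle\psi,h\rangle = 0\}$ shows $\mathbb{E}_0[\varphi] \le \alpha$. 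Then, within the limit experiment, the family $\{\mathbb{P}_\mu : \mu \in \mathbb{R}\}$ is a one-parameter exponential family in the sufficient statistic $(x(\tau), \tau)$ — this is exactly the structure exploited in Proposition \ref{Prop: Linear combinations} — so the Neyman–Pearson lemma gives that $\varphi_\mu^*$ of \eqref{eq:UMP test: non-parametrics} is the most powerful level-$\alpha$ test of $\mu = 0$ against the specific alternative $\mu$. Hence $\mathbb{E}_\mu[\varphi] \le \beta^*(\mu)$, and since this holds on the extracted subsequence and $\limsup_n \beta_n(h)$ is achieved along some subsequence, we conclude $\limsup_n \beta_n(h) \le \beta^*(\mu)$.

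The main obstacle is the first step: establishing the SLAN-based asymptotic representation in the non-parametric setting rigorously, i.e., showing that a test $\varphi_n$ which is allowed to depend on the full data $Y_1,\dots,Y_{n\hat\tau}$ (not just on $x_n(\hat\tau), \hat\tau$) nevertheless has its limiting power matched by a limit-experiment test depending only on $(\tau, x(\tau))$. This requires the convergence of experiments argument — that the log-likelihood process converges in the appropriate sense jointly with the statistic of interest, with the orthogonal nuisance directions not helping — together with an application of LeCam's third lemma to transfer the weak limit of $(\hat\tau, x_n(\hat\tau), \varphi_n)$ from $P_{nT,0}$ to $P_{nT,h}$. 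This is precisely the content established in the proof of Theorem \ref{Thm: ART}, adapted from \eqref{eq:SLAN property} to \eqref{eq:SLAN nonparametric setting}; so the bulk of the work is to check that the earlier proof goes through verbatim once we replace $h^\intercal I^{1/2} x_n(t)$ by $\sigma \langle\psi,h\rangle x_n(t)$ and $h^\intercal I h$ by $\|h\|^2$, and that the stopping-time convergence from Assumption 3(ii) substitutes cleanly for the LeCam stopping-time representation used in Theorem \ref{Thm: ART}(i). Everything downstream — the exponential-family/Neyman–Pearson step and the subsequence bookkeeping — is then routine.
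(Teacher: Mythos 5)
Your proposal follows essentially the same route as the paper's proof: decompose $h$ into its component along $\psi$ and an orthogonal remainder, embed the problem in a parametric sub-model, transfer the test to the limit experiment via the SLAN expansion \eqref{eq:SLAN nonparametric setting} together with Assumption 3(ii) and the extended continuous mapping theorem, pass the asymptotic level-$\alpha$ constraint to the limit, and then invoke the Neyman--Pearson argument of Proposition \ref{Prop: Linear combinations} to bound the limiting power by $\beta^{*}(\mu)$. The paper does exactly this by forming the two-dimensional sub-model $P_{1/\sqrt{n},\bm{h}^{\intercal}\bm{g}}$ with $\bm{g}=(\psi/\sigma,\tilde{g}_{1}/\Vert\tilde{g}_{1}\Vert)^{\intercal}$.

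One intermediate claim in your write-up is stated too strongly and would fail if taken literally: the limit test matched to $\varphi_{n}$ is \emph{not} a function of $(\tau,x(\tau))$ alone. Since $\varphi_{n}$ may use all the data and the log-likelihood ratio under $P_{nT,h}$ contains the term driven by the orthogonal direction $\tilde{g}_{1}$, the matched test in the limit experiment depends on the stopped value of the full two-dimensional score process $G(\tau)=(x(\tau),\tilde{G}(\tau))$; one cannot condition down to $(\tau,x(\tau))$ before applying Girsanov, because the change of measure itself involves $\tilde{G}(\tau)$ when $h_{\perp}\neq 0$. What rescues the argument --- and what the paper actually proves --- is that the \emph{optimal} level-$\alpha$ test of the composite null $(1,0)^{\intercal}\bm{h}=0$ against the point alternative, among all tests based on $(\tau,G(\tau))$, reduces to $\varphi_{\mu}^{*}(\tau,x(\tau))$ of \eqref{eq:UMP test: non-parametrics} precisely because $\tau$ is adapted to the filtration of $x(\cdot)$ alone (Assumption 3(ii)); the matched test is therefore \emph{dominated by}, not equal to, a test in $(\tau,x(\tau))$. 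Your final chain of inequalities uses only this domination, so the conclusion stands once the phrase ``matched by a test depending only on $(\tau,x(\tau))$'' is replaced by ``matched by a test depending on $(\tau,G(\tau))$ whose power is bounded by $\beta^{*}(\mu)$''.
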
 

A similar result holds for unbiased tests. Following \citet{choi1996asymptotically},
we say that a test $\varphi_{n}$ of $H_{0}:\mu=0$ vs $H_{1}:\mu\neq0$
is asymptotically unbiased if 
\begin{align*}
\sup_{\left\{ h\in T(P_{0}):\left\langle \psi,h\right\rangle =0\right\} }\limsup_{n}\int\varphi_{n}dP_{nT,h} & \le\alpha,\ \textrm{and}\\
\inf_{\left\{ h\in T(P_{0}):\left\langle \psi,h\right\rangle \ne0\right\} }\liminf_{n}\int\varphi_{n}dP_{nT,h} & \ge\alpha.
\end{align*}
The next result states that the local power of such a test is bounded
by that of a best unbiased in the limit experiment, assuming one exists.

\begin{prop} \label{Prop: Non-parametric-unbiased}Suppose Assumption
3 holds and there exists a best unbiased test, $\tilde{\varphi}^{*}$,
in the limit experiment with power function $\bar{\beta}^{*}(\mu)$.
Let $\beta_{n}(h)$ denote the power of some asymptotically unbiased
test, $\varphi_{n}$, of $H_{0}:\mu=0$ vs $H_{1}:\mu\neq0$ over
local alternatives $P_{\delta/\sqrt{n},h}$. Then, for every $h\in T(P_{0})$
and $\mu:=\delta\left\langle \psi,h\right\rangle $, $\limsup_{n\to\infty}\beta_{n}(h)\le\tilde{\beta}^{*}\left(\mu\right)$.\end{prop}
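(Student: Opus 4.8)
The plan is to mirror the proof strategy of Proposition \ref{Prop: Non-parametric}, substituting the Neyman--Pearson power envelope with the (assumed to exist) best unbiased power function. The starting point is the same asymptotic representation machinery. First I would invoke the non-parametric SLAN property \eqref{eq:SLAN nonparametric setting}, together with Assumption 3(ii) and the stopping-time representation theorem underlying Theorem \ref{Thm: ART}, to pass along a subsequence $\{n_{j_m}\}$ to a limit experiment in which one observes the stopped process $(\tau, x(\tau))$ with $x(\cdot)\sim \sigma^{-1}\mu\,\cdot + W(\cdot)$. Exactly as in the parametric case, this yields a limit test $\varphi$ depending only on $(\tau,x(\tau))$ whose power $\beta(h):=\mathbb{E}_{\mu}[\varphi(\tau,x(\tau))]$ (with $\mu=\delta\langle\psi,h\rangle$) satisfies $\beta_{n_{j_m}}(h)\to\beta(h)$ for every $h\in T(P_0)$. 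The key point, as in the body of the text, is that the likelihood ratios along one-dimensional submodels indexed by $h$ converge to $\mu x(\tau)-\tfrac{\mu^2}{2\sigma}\tau$, which depends on $h$ only through the scalar $\mu=\delta\langle\psi,h\rangle$; hence the limit experiment is the one-dimensional Gaussian drift experiment described before Proposition \ref{Prop: Non-parametric}.

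The second step is to check that the limit test $\varphi$ inherits asymptotic unbiasedness, so that it is a legitimate competitor to $\tilde{\varphi}^*$. For the size side: for every $h$ in the null tangent set $\{h:\langle\psi,h\rangle=0\}$ we have $\mu=0$, so $\beta(h)=\mathbb{E}_0[\varphi]\le\limsup_n\int\varphi_n\,dP_{nT,h}\le\alpha$; taking a supremum over such $h$ (in particular there is at least one direction realizing any prescribed perturbation) shows $\mathbb{E}_0[\varphi]\le\alpha$. For the unbiasedness side: fix any $\mu\ne0$ and pick $h\in T(P_0)$ with $\delta\langle\psi,h\rangle=\mu$ (possible since $\psi\ne0$ as $\sigma^2>0$); then $\mathbb{E}_\mu[\varphi]=\beta(h)\ge\liminf_n\int\varphi_n\,dP_{nT,h}\ge\alpha$ by the asymptotic unbiasedness of $\varphi_n$. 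Hence $\varphi$ is a level-$\alpha$ unbiased test in the limit experiment, and since $\tilde{\varphi}^*$ is by hypothesis a best such test, $\beta(h)=\mathbb{E}_\mu[\varphi]\le\bar\beta^*(\mu)\equiv\tilde\beta^*(\mu)$ for every $\mu$, where I am using $\tilde\beta^*$ and $\bar\beta^*$ interchangeably as in the statement. Combining with $\beta_{n_{j_m}}(h)\to\beta(h)$ gives $\lim_m \beta_{n_{j_m}}(h)\le\tilde\beta^*(\mu)$ along the subsequence.

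The final step upgrades the subsequential conclusion to a genuine $\limsup$ bound. This is the standard subsequence argument: suppose, for contradiction, that $\limsup_{n}\beta_n(h)>\tilde\beta^*(\mu)$ for some $h$; then there is a sequence $\{n_j\}$ along which $\beta_{n_j}(h)$ converges to a limit strictly exceeding $\tilde\beta^*(\mu)$. Apply the representation argument above to this sequence to extract a further subsequence $\{n_{j_m}\}$ and a limit unbiased test $\varphi$ with $\beta_{n_{j_m}}(h)\to\beta(h)\le\tilde\beta^*(\mu)$, contradicting the choice of $\{n_j\}$. Therefore $\limsup_n\beta_n(h)\le\tilde\beta^*(\mu)$ for every $h\in T(P_0)$, which is the claim.

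The main obstacle I anticipate is the second step: verifying that the limit test $\varphi$ is unbiased \emph{in the limit experiment} requires that for each target drift $\mu$ one can realize it via an admissible submodel direction $h\in T(P_0)$ and that the convergence of powers holds along that direction, and one must be careful that asymptotic unbiasedness of $\varphi_n$ (which involves $\liminf$ and $\limsup$ over the whole tangent set) translates cleanly into the pointwise inequalities needed for each fixed $h$. A secondary technical point is ensuring the extended continuous mapping theorem conditions in Assumption 3(ii) are genuinely what is needed to get joint weak convergence of $(\hat\tau, x_n(\hat\tau))$ together with the log-likelihood ratios; but this is already handled in the proof of Theorem \ref{Thm: ART} and Proposition \ref{Prop: Non-parametric}, so here it is a matter of citing rather than redoing.
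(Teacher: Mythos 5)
Your overall architecture --- pass to a limit experiment along subsequences, verify that the limit test inherits level and unbiasedness, invoke the assumed best unbiased test, then upgrade to a $\limsup$ by the standard subsequence extraction --- matches the proof the paper intends (it omits the proof, declaring it ``analogous to that of Proposition \ref{Prop: Non-parametric}''). Your second and third steps are essentially correct as written. The gap is in your first step, and it is the step you did not flag as the main obstacle.

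You assert that the log-likelihood ratio along the submodel $P_{\delta/\sqrt{n},h}$ converges to $\mu x(\tau)-\tfrac{\mu^{2}}{2\sigma}\tau$, ``which depends on $h$ only through the scalar $\mu$,'' and you use this to conclude that a limit test measurable with respect to $(\tau,x(\tau))$ alone matches the asymptotic power of $\varphi_{n}$ against every $h\in T(P_{0})$. This is false unless $h$ is proportional to $\psi$. By the SLAN property (\ref{eq:SLAN nonparametric setting}) the log-likelihood ratio is $n^{-1/2}\sum_{i}h(Y_{i})-\tfrac{t}{2}\Vert h\Vert^{2}+o_{P}(1)$; writing $h=\sigma^{-2}\langle\psi,h\rangle\psi+h^{\perp}$, the term $n^{-1/2}\sum_{i}h^{\perp}(Y_{i})$ converges to $\Vert h^{\perp}\Vert\tilde{G}(\tau)$ for a Brownian motion $\tilde{G}$ independent of $x(\cdot)$, so the limiting likelihood ratio is \emph{not} $(\tau,x(\tau))$-measurable. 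Consequently the projection $\varphi:=E[\bar{\varphi}\mid\tau,x(\tau)]$ does not preserve the limiting power: $E[\bar{\varphi}V_{h}]\neq E[\varphi V_{h}]$ when $V_{h}$ involves $\tilde{G}(\tau)$. This is precisely why the paper's proof of Proposition \ref{Prop: Non-parametric} embeds $g_{1}$ into the two-dimensional submodel with score $\bm{g}=\left(\psi/\sigma,\tilde{g}_{1}/\Vert\tilde{g}_{1}\Vert\right)^{\intercal}$ and obtains a limit test depending on the two-dimensional statistic $(\tau,G(\tau))$, $G=(x,\tilde{G})$. The repair is to do the same here: your step two then shows that this richer limit test is a level-$\alpha$ unbiased test of $H_{0}:(1,0)^{\intercal}\bm{h}=0$ vs $H_{1}:(1,0)^{\intercal}\bm{h}\neq0$ in the two-dimensional limit experiment, and the hypothesized best unbiased test bounds its power by $\tilde{\beta}^{*}(\mu)$ (with the understanding, implicit in the proposition's hypothesis, that the best unbiased test's power function depends on $\bm{h}$ only through $\mu$, as in the level-$\alpha$ case handled by Lemma-type Girsanov arguments). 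With that substitution the remainder of your argument goes through.
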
 

The proof is analogous to that of Proposition \ref{Prop: Non-parametric},
and is therefore omitted. Also, both propositions can be extended
to $\bm{\alpha}$-spending constraints but we omit formal statements
for brevity.

By similar reasoning as in Section \ref{subsec:Attaining-the-bound}
(using parametric sub-models), it follows that we can attain the power
bounds $\beta^{*}(\cdot),\tilde{\beta}^{*}(\cdot)$ by employing plug-in
versions of the corresponding UMP tests. This process simply involves
replacing $\tau,x(\tau)$ with $\hat{\tau},x_{n}(\hat{\tau})$. The
statistic $x_{n}(\hat{\tau})$ depends on the variance, $\sigma$,
so we must substitute it with a consistent estimate. We discuss various
estimators for $\sigma$ in Appendix \ref{subsec:Variance-estimators}. 

\section{Non-parametric two-sample tests\label{sec:Two-sample-tests}}

In many sequential experiments it is common to test two treatments
simultaneously. We may then be interested in conducting inference
on the difference between some regular functionals of the two treatments.
A salient example of this is inference on the expected treatment effect. 

To make matters precise, let $a\in\{0,1\}$ denote the two treatments,
with $P^{(a)}$ being the corresponding outcome distribution. Suppose
that at each period, the experimenter samples treatment 1 at some
fixed proportion $\pi$. It is without loss of generality to suppose
that the outcomes from the two treatments are independent as we can
only ever observe the effect of a single treatment. We are interested
in conducting inference on the difference, $\mu(P^{(1)})-\mu(P^{(0)})$,
where $\mu(\cdot)$ is some regular functional of the data distribution.
As before, we take $\mu$ to be scalar.

Let $P_{0}^{(1)},P_{0}^{(0)}$ denote some reference probability distributions
on the boundary of the null hypothesis so that $\mu(P_{0}^{(1)})-\mu(P_{0}^{(0)})=0$.
Following \citet[Section 25.6]{van2000asymptotic}, we consider the
power of tests against smooth one-dimensional sub-models of the form
$\left\{ \left(P_{s,h_{1}}^{(1)},P_{s,h_{0}}^{(0)}\right):s\le\eta\right\} $
for some $\eta>0$, where $h_{a}(\cdot)$ is a measurable function
satisfying 
\begin{equation}
\int\left[\frac{\sqrt{dP_{s,h_{a}}^{(a)}}-\sqrt{dP_{0}^{(a)}}}{s}-\frac{1}{2}h_{a}\sqrt{dP_{0}^{(a)}}\right]^{2}d\nu\to0\ \textrm{as}\ s\to0.\label{eq:qmd non-parametrics-1}
\end{equation}

As before, the set of all possible $h_{a}$ satisfying $\int h_{a}dP_{0}^{(a)}=0$
and $\int h_{a}^{2}dP_{0}^{(a)}<\infty$ forms a tangent space $T(P_{0}^{(a)})$.
This is a subset of the Hilbert space $L^{2}(P_{0}^{(a)})$, endowed
with the inner product $\left\langle f,g\right\rangle _{a}=\mathbb{E}_{P_{0}^{(a)}}[fg]$
and norm $\left\Vert f\right\Vert _{a}=\mathbb{E}_{P_{0}^{(a)}}[f^{2}]^{1/2}$.
Let $\psi_{a}\in T(P_{0}^{(a)})$ denote the efficient influence function
satisfying
\begin{equation}
\frac{\mu(P_{s,h_{a}}^{(a)})-\mu(P_{0}^{(a)})}{s}-\left\langle \psi_{a},h_{a}\right\rangle _{a}=o(s)\label{eq:influence function-1}
\end{equation}
for any $h_{a}\in T(P_{0}^{(a)})$. Denote $\sigma_{a}^{2}=\mathbb{E}_{P_{0}^{(a)}}[\psi_{a}^{2}]$.
The sufficient statistic here is the differenced efficient influence
function process 
\begin{equation}
x_{n}(t):=\frac{1}{\sigma}\left(\frac{1}{\pi\sqrt{n}}\sum_{i=1}^{\left\lfloor n\pi t\right\rfloor }\psi_{1}(Y_{i}^{(1)})-\frac{1}{(1-\pi)\sqrt{n}}\sum_{i=1}^{\left\lfloor n(1-\pi)t\right\rfloor }\psi_{0}(Y_{i}^{(0)})\right),\label{eq:two-sample tests sufficient statistic}
\end{equation}
where $\sigma^{2}:=\left(\frac{\sigma_{1}^{2}}{\pi}+\frac{\sigma_{0}^{2}}{1-\pi}\right)$.
Note that the number of observations from each treatment at time $t$
is $\left\lfloor n\pi t\right\rfloor ,\left\lfloor n(1-\pi)t\right\rfloor $.
The assumptions below are analogous to Assumption 3:

\begin{asm4} (i) The sub-models $\{P_{s,h_{a}}^{(a)};h_{a}\in T(P_{0}^{(a)})\}$
satisfy (\ref{eq:qmd non-parametrics-1}). Furthermore, they admit
an efficient influence function, $\psi_{a}$, such that (\ref{eq:influence function-1})
holds. 

(ii) The stopping time $\hat{\tau}$ is a continuous function of $x_{n}(\cdot)$
in the sense that $\hat{\tau}=\tau(x_{n}(\cdot))$, where $\tau(\cdot)$
satisfies the conditions for an extended continuous mapping theorem
\citep[Theorem 1.11.1]{van1996weak}.\end{asm4}

Set $\mu_{a}:=\mu(P^{(a)})$. A test, $\varphi_{n}$, of $H_{0}:\mu_{1}-\mu_{0}=0$
is asymptotically level-$\alpha$ if 
\begin{equation}
\sup_{\left\{ \bm{h}:\left\langle \psi_{1},h_{1}\right\rangle _{1}-\left\langle \psi_{0},h_{0}\right\rangle _{0}=0\right\} }\limsup_{n}\int\varphi_{n}dP_{nT,\bm{h}}\le\alpha.\label{eq:Level -alpha - two sample definition}
\end{equation}
Similarly, a test, $\varphi_{n}$, of $H_{0}:\mu_{1}-\mu_{0}=0$ vs
$H_{1}:\mu_{1}-\mu_{0}\neq0$ is asymptotically unbiased if 
\begin{align}
\sup_{\left\{ \bm{h}:\left\langle \psi_{1},h_{1}\right\rangle _{1}-\left\langle \psi_{0},h_{0}\right\rangle _{0}=0\right\} }\limsup_{n}\int\varphi_{n}dP_{nT,\bm{h}} & \le\alpha,\ \textrm{and}\nonumber \\
\inf_{\left\{ \bm{h}:\left\langle \psi_{1},h_{1}\right\rangle _{1}-\left\langle \psi_{0},h_{0}\right\rangle _{0}\neq0\right\} }\liminf_{n}\int\varphi_{n}dP_{nT,\bm{h}} & \ge\alpha.\label{eq:Unbiased test definition - tow sample}
\end{align}
Consider the limit experiment where one observes $x(\cdot)\sim\sigma^{-1}(\mu_{1}-\mu_{0})\cdot+W(\cdot)$
and a $\mathcal{F}_{t}\equiv\sigma\{x(s);s\le t\}$ adapted stopping
time $\tau$ that is the weak limit of $\hat{\tau}$. Then, setting
$\mu:=\mu_{1}-\mu_{0}$, define the power functions $\beta^{*}(\cdot),\tilde{\beta}^{*}(\cdot)$
as in the previous section. The following results provide upper bounds
on asymptotically level-$\alpha$ and asymptotically unbiased tests. 

\begin{prop} \label{Prop: Two-sample non-parametric}Suppose Assumption
4 holds. Let $\beta_{n}(\bm{h})$ the power of some asymptotically
level-$\alpha$ test, $\varphi_{n}$, of $H_{0}:\mu_{1}-\mu_{0}=0$
against local alternatives $P_{\delta_{1}/\sqrt{n},h_{1}}^{(1)}\times P_{\delta_{0}/\sqrt{n},h_{0}}^{(0)}$.
Then, for every $\bm{h}\in T(P_{0}^{(1)})\times T(P_{0}^{(0)})$ and
$\mu:=\delta_{1}\left\langle \psi_{1},h_{1}\right\rangle _{1}-\delta_{0}\left\langle \psi_{0},h_{0}\right\rangle _{0}$,
$\limsup_{n\to\infty}\beta_{n}(\bm{h})\le\beta^{*}\left(\mu\right)$.\end{prop}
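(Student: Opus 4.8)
The plan is to reduce the two-sample non-parametric problem to the one-sample non-parametric case already handled in Proposition \ref{Prop: Non-parametric}, by treating the pair of independent sub-models as a single product sub-model and showing that the relevant SLAN expansion collapses onto the scalar statistic $x_n(\cdot)$ defined in \eqref{eq:two-sample tests sufficient statistic}. First I would write down the joint log-likelihood ratio for the product alternative $P_{\delta_1/\sqrt n,h_1}^{(1)}\times P_{\delta_0/\sqrt n,h_0}^{(0)}$ against the reference $P_0^{(1)}\times P_0^{(0)}$. Because the two treatment streams are independent and at each period treatment $1$ is sampled with fixed proportion $\pi$, after $\lfloor nt\rfloor$ periods we have $\lfloor n\pi t\rfloor$ draws from stream $1$ and $\lfloor n(1-\pi)t\rfloor$ draws from stream $0$. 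Applying the one-sample SLAN property \eqref{eq:SLAN nonparametric setting} separately to each stream (with perturbation scale $\delta_a/\sqrt n$, which rescales the drift by $\delta_a$), and adding, gives
\[
\sum_{\text{obs up to }t}\ln\frac{dP^{(\cdot)}}{dP_0^{(\cdot)}}
=\frac{\delta_1}{\sqrt n}\sum_{i=1}^{\lfloor n\pi t\rfloor}h_1(Y_i^{(1)})
+\frac{\delta_0}{\sqrt n}\sum_{i=1}^{\lfloor n(1-\pi)t\rfloor}h_0(Y_i^{(0)})
-\frac{\pi t}{2}\delta_1^2\|h_1\|_1^2-\frac{(1-\pi)t}{2}\delta_0^2\|h_0\|_0^2
+o_{P_{nT,0}}(1),
\]
uniformly over $t\le T$.

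Next I would project each $h_a$ onto the efficient influence function: write $h_a=\langle\psi_a,h_a\rangle_a\,\psi_a/\sigma_a^2+r_a$ with $r_a\perp\psi_a$ in $L^2(P_0^{(a)})$. By a functional CLT the empirical process of $r_a$ converges jointly with that of $\psi_a$ to independent Brownian motions, so the $r_a$ components contribute an asymptotically ancillary Gaussian term that is independent of the $x_n$ process; by the standard Le Cam argument (as in the one-sample proof and in \citet[Section 25.6]{van2000asymptotic}) the optimal test cannot exploit these ancillary directions, and the effective local parameter is $\mu=\delta_1\langle\psi_1,h_1\rangle_1-\delta_0\langle\psi_0,h_0\rangle_0$. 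The key algebraic check is that the ``signal'' part of the log-likelihood ratio, evaluated at the $\psi_a$-directions, equals $\mu\,x_n(t)-\tfrac{\mu^2}{2\sigma}t+o_{P_{nT,0}}(1)$ uniformly over $t$: one verifies that $\sigma^{-1}\bigl(\pi^{-1}n^{-1/2}\sum^{\lfloor n\pi t\rfloor}\psi_1-(1-\pi)^{-1}n^{-1/2}\sum^{\lfloor n(1-\pi)t\rfloor}\psi_0\bigr)$ has asymptotic variance $t$ by the definition $\sigma^2=\sigma_1^2/\pi+\sigma_0^2/(1-\pi)$, so $x_n(\cdot)\Rightarrow W(\cdot)$ under the null, while the drift and quadratic terms assemble exactly into the scalar LAN form with information $1/\sigma$. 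This is precisely the structure assumed in the limit experiment of Section \ref{sec:Testing-in-non-parametric}.

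Given this, I would invoke Assumption 4(ii): since $\hat\tau=\tau(x_n(\cdot))$ with $\tau(\cdot)$ satisfying the extended continuous mapping theorem, $(\hat\tau,x_n(\hat\tau))\xrightarrow{d}(\tau,x(\tau))$ under $P_{nT,0}$, and by Le Cam's third lemma together with the LAN expansion above, this convergence transfers to the local alternatives with $x(\cdot)\sim\sigma^{-1}\mu\cdot+W(\cdot)$. From here the argument is verbatim that of Proposition \ref{Prop: Non-parametric}: along any subsequence the power $\beta_n(\bm h)$ converges (pass to a further subsequence if needed), the limit is the power in the limit experiment of some level-$\alpha$ test depending only on $(\tau,x(\tau))$, and that power is bounded by $\beta^*(\mu)$ because $\varphi_\mu^*$ of \eqref{eq:UMP test: non-parametrics} is Neyman--Pearson optimal for $H_0:\mu=0$ vs.\ the point alternative; hence $\limsup_n\beta_n(\bm h)\le\beta^*(\mu)$. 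The main obstacle is the joint weak-convergence and contiguity bookkeeping across the two independent streams with floor functions at rates $\pi$ and $1-\pi$ — ensuring the two empirical partial-sum processes converge jointly (they do, by independence) and that the $o_{P_{nT,0}}(1)$ remainders in each stream's SLAN expansion remain uniform in $t\le T$ after the time-change $t\mapsto\pi t$ (resp.\ $(1-\pi)t$); once that is in place, the reduction to the one-sample result is mechanical.
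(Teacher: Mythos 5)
Your overall route is the same as the paper's: decompose each $h_a$ into its component along $\psi_a$ plus an orthogonal remainder, pass to the induced bivariate parametric sub-models, add the per-arm SLAN expansions using independence of the two streams, invoke Assumption 4(ii) with the extended continuous mapping theorem, and finish with a Neyman--Pearson argument in the limit experiment. However, two of your intermediate claims are incorrect as stated, and they sit exactly where the paper has to do real work (Lemmas \ref{Lem: Two-sample ART} and \ref{Lem: Two-sample optimal test}). First, the ``signal'' part of the log-likelihood ratio against the reference $P_0^{(1)}\times P_0^{(0)}$ does \emph{not} collapse to $\mu x_n(t)-\tfrac{\mu^2}{2\sigma}t$ for a general alternative: it is a linear combination of the \emph{two} partial-sum processes $n^{-1/2}\sum_i\psi_1(Y_i^{(1)})$ and $n^{-1/2}\sum_i\psi_0(Y_i^{(0)})$ with coefficients $\delta_a\langle\psi_a,h_a\rangle_a/\sigma_a^2$, and these match the coefficients $(\sigma\pi)^{-1}$ and $-(\sigma(1-\pi))^{-1}$ appearing in $x_n$ only along the least-favorable direction $\delta_1\langle\psi_1,h_1\rangle_1\propto\sigma_1^2/\pi$, $\delta_0\langle\psi_0,h_0\rangle_0\propto-\sigma_0^2/(1-\pi)$. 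Second, the remainder directions are not ancillary (under the alternative they carry drifts $\Vert\tilde g_a\Vert_a$), and the test matched by the asymptotic representation depends on the full stopped bivariate processes $(\tau,\bm{G}_1(\tau),\bm{G}_0(\tau))$, not only on $(\tau,x(\tau))$. Both issues are resolved simultaneously by the composite-null Neyman--Pearson argument of Lemma \ref{Lem: Two-sample optimal test}: one pairs the given alternative point with a suitably chosen \emph{nonzero} null point so that the resulting likelihood ratio collapses onto $(\tau,x(\tau))$, and then uses pivotality of $x(\cdot)$ under the null together with $\mathcal{F}_t$-adaptedness of $\tau$ to verify that this test is level-$\alpha$ over the entire composite null and does not depend on the chosen pair except through $\mu$. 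With that substitution your argument goes through; as written, the ``key algebraic check'' you propose would only establish the bound for alternatives in the least-favorable direction.
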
 

\begin{prop} \label{Prop: Non-parametric-unbiased-two sample}Suppose
Assumption 4 holds and there exists a best unbiased test $\tilde{\varphi}^{*}$
in the limit experiment. Let $\beta_{n}(\bm{h})$ the power of some
asymptotically unbiased test, $\varphi_{n}$, of $H_{0}:\mu_{1}-\mu_{0}=0$
vs $H_{1}:\mu_{1}-\mu_{0}\neq0$ against local alternatives $P_{\delta_{1}/\sqrt{n},h_{1}}^{(1)}\times P_{\delta_{0}/\sqrt{n},h_{0}}^{(0)}$.
Then, for every $\bm{h}\in T(P_{0}^{(1)})\times T(P_{0}^{(0)})$ and
$\mu:=\delta_{1}\left\langle \psi_{1},h_{1}\right\rangle _{1}-\delta_{0}\left\langle \psi_{0},h_{0}\right\rangle _{0}$,
$\limsup_{n\to\infty}\beta_{n}(\bm{h})\le\tilde{\beta}^{*}\left(\mu\right)$.\end{prop}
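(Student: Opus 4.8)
The plan is to run the argument behind Proposition~\ref{Prop: Two-sample non-parametric} and add to it the unbiasedness transfer used in the proof of Proposition~\ref{Prop: Non-parametric-unbiased}; below I lay out the pieces and flag the one delicate point. First I would fix the pair of directions $\bm{h}=(h_1,h_0)$ and work inside the finite-dimensional sub-model generated by $\{h_1,\psi_1\}$ in arm~$1$ and $\{h_0,\psi_0\}$ in arm~$0$. By Assumption~4(i) these are valid tangent directions, so the sub-model is differentiable in quadratic mean, and $\mu$ ranges over all of $\mathbb{R}$ within it (taking $h_1=c\psi_1$, $h_0=0$ yields drift $\mu=\delta_1 c\sigma_1^2$). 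Applying the SLAN expansion \eqref{eq:SLAN nonparametric setting} separately to the $\lfloor n\pi t\rfloor$ draws from arm~$1$ and the $\lfloor n(1-\pi)t\rfloor$ draws from arm~$0$, and summing, gives a uniform-in-$t$ quadratic approximation of the joint log-likelihood-ratio process $\Lambda_n(\cdot)$ of the data up to time $nt$ under $P_{nT,\bm{h}}$ against $P_{nT,\bm{0}}$. A functional central limit theorem then yields $x_n(\cdot)\Rightarrow W(\cdot)$ under $P_{nT,\bm{0}}$ -- the normalisation by $\sigma$ in \eqref{eq:two-sample tests sufficient statistic} is exactly what makes the limit standard Brownian motion -- while Le~Cam's third lemma, via $\mathrm{Cov}(x_n(t),\Lambda_n(T))\to \mu t/\sigma$ with $\mu=\delta_1\langle\psi_1,h_1\rangle_1-\delta_0\langle\psi_0,h_0\rangle_0$, gives $x_n(\cdot)\Rightarrow\sigma^{-1}\mu\cdot+W(\cdot)$ under the local alternative.

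Next I would carry out the linear-combination reduction of Proposition~\ref{Prop: Linear combinations} in this two-sample setting. Although $\Lambda_n(\cdot)$ is not a function of $x_n(\cdot)$ alone -- its leading term weights the partial sums of $\psi_1(Y^{(1)}_i)$ and $\psi_0(Y^{(0)}_i)$ differently -- it splits asymptotically as $\tfrac{\mu}{\sigma}x_n(t)-\tfrac{\mu^2}{2\sigma^2}t$ plus a term that is asymptotically independent of $x_n(\cdot)$: the residual of the $\psi$-part after projecting onto the increments of $x_n(\cdot)$ is uncorrelated with $x_n(\cdot)$ precisely because of the weights $1/\pi,1/(1-\pi)$ and the definition $\sigma^2=\sigma_1^2/\pi+\sigma_0^2/(1-\pi)$, and by SLAN the quadratic term splits accordingly, so that residual -- together with the components along directions orthogonal to $\psi_1,\psi_0$ -- is the log-likelihood ratio of an experiment whose parameters do not affect $(\tau,x(\tau))$. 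Combining this with Assumption~4(ii) and the extended continuous mapping theorem (so $\hat\tau=\tau(x_n(\cdot))\Rightarrow\tau=\tau(x(\cdot))$), the argument of Theorem~\ref{Thm: ART} applied to the sub-model produces, along a subsequence $\{n_{j_m}\}$, a test $\varphi$ in the limit experiment depending only on $(\tau,x(\tau))$ with $x(\cdot)\sim\sigma^{-1}\mu\cdot+W(\cdot)$, such that $\beta_n(\bm{h}')\to\mathbb{E}_{\mu(\bm{h}')}[\varphi(\tau,x(\tau))]=:g(\mu(\bm{h}'))$ along that subsequence for every $\bm{h}'$ in the sub-model.

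Finally I would transfer the size and unbiasedness constraints and invoke optimality. Feeding the supremum in \eqref{eq:Unbiased test definition - tow sample} over the null points of the sub-model through the subsequential limit forces $g(0)=\mathbb{E}_0[\varphi]\le\alpha$, and the $\liminf$ part over the alternatives forces $g(\mu)=\mathbb{E}_\mu[\varphi]\ge\alpha$ for every $\mu\neq0$; hence $\varphi$ is an unbiased level-$\alpha$ test in the limit experiment, so $\mathbb{E}_\mu[\varphi]\le\tilde{\beta}^*(\mu)$ for all $\mu$ by the assumed optimality of $\tilde{\varphi}^*$. In particular $\lim_m\beta_{n_{j_m}}(\bm{h})=g(\mu)\le\tilde{\beta}^*(\mu)$, and since every sequence admits such a subsequence, $\limsup_n\beta_n(\bm{h})\le\tilde{\beta}^*(\mu)$, which is the claim. (When $\tilde{\varphi}^*$ exists its construction follows \citet[Section~4.2]{lehmann2005testing}, as in Section~\ref{subsec:Conditionally-unbiased-tests}.) The one genuinely delicate step, already required for Proposition~\ref{Prop: Two-sample non-parametric}, is the linear-combination reduction: confirming that $(\tau,x(\tau))$ remains asymptotically sufficient for $\mu$ -- that the part of the log-likelihood ratio not carried by $x_n(\cdot)$ is ancillary for $\mu$ -- despite the entanglement of the two independent arms, which is where both the variance normalisation in \eqref{eq:two-sample tests sufficient statistic} and Assumption~4(ii) are essential. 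The remaining items -- quadratic-mean differentiability of the sub-model, joint convergence with the log-likelihood ratios, and the subsequence bookkeeping for the unbiasedness transfer -- are routine and mirror the proofs of Propositions~\ref{Prop: Two-sample non-parametric} and \ref{Prop: Non-parametric-unbiased}.
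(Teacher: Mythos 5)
The paper itself omits this proof, pointing to the proofs of Propositions \ref{Prop: Two-sample non-parametric} and \ref{Prop: Non-parametric-unbiased}; your architecture matches that intended route exactly: two-dimensional parametric sub-models spanned by $\psi_a/\sigma_a$ and the orthogonalized residual $\tilde g_a$, the SLAN expansion and FCLT, an ART-type matching along subsequences, transfer of the size and unbiasedness constraints to the limit via portmanteau, and an appeal to the assumed best unbiased test. The final paragraph (transferring \eqref{eq:Unbiased test definition - tow sample} to $g(0)\le\alpha$ and $g(\mu)\ge\alpha$) is fine as bookkeeping.

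The gap is in your second paragraph, at precisely the step you flag as delicate. The ART applied to the sub-model $\left\{P_{1/\sqrt{n},\bm{h}_1^{\intercal}\bm{g}_1}\times P_{1/\sqrt{n},\bm{h}_0^{\intercal}\bm{g}_0}\right\}$ delivers a matched test $\varphi(\tau,\bm{G}(\tau))$ depending on the \emph{full} four-dimensional statistic, with $\beta_{n_{j_m}}(\bm{h}')\to\mathbb{E}_{\bm{h}'}[\varphi]$, and this limiting power genuinely depends on the nuisance coordinates of $\bm{h}'$ (e.g.\ a test using $\tilde G_1(\tau)$ has power varying with the drift of $\tilde G_1$). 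So there is in general \emph{no} single test of $(\tau,x(\tau))$ alone whose power equals $\lim\beta_{n_{j_m}}(\bm{h}')$ for \emph{every} $\bm{h}'$ in the sub-model, which is what you assert when you write $\beta_n(\bm h')\to g(\mu(\bm h'))$. Orthogonality of the residual to $x_n(\cdot)$ and the fact that the law of $(\tau,x(\tau))$ depends only on $\mu$ do not by themselves license discarding the residual from the \emph{test}. The repair is a slice-wise sufficiency reduction: fix the nuisance coordinates $\nu$ of $\bm h$ and note that, along the slice $\{\mathbb{P}_{(\mu',\nu)}:\mu'\in\mathbb{R}\}$, the likelihood ratio in $\mu'$ factors through $(\tau,x(\tau))$, so $\bar\varphi_\nu:=\mathbb{E}_{(0,\nu)}[\varphi\mid\tau,x(\tau)]$ is a test of $(\tau,x(\tau))$ with $\mathbb{E}_{(\mu',\nu)}[\varphi]=\mathbb{E}_{\mu'}[\bar\varphi_\nu]$ for all $\mu'$. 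The transferred constraints give $\mathbb{E}_{0}[\bar\varphi_\nu]\le\alpha$ and $\mathbb{E}_{\mu'}[\bar\varphi_\nu]\ge\alpha$ for $\mu'\neq0$, so $\bar\varphi_\nu$ is unbiased in the reduced experiment and $\lim_m\beta_{n_{j_m}}(\bm h)=\mathbb{E}_{\mu}[\bar\varphi_\nu]\le\tilde\beta^*(\mu)$ by the assumed optimality of $\tilde\varphi^*$. This is the unbiased analogue of the role played by Lemma \ref{Lem: Two-sample optimal test} in the level-$\alpha$ proof; without it (or something equivalent), the appeal to $\tilde\varphi^*$, which lives in the reduced $(\tau,x(\tau))$ experiment, does not connect to the matched test.
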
 

We prove Proposition \ref{Prop: Two-sample non-parametric} in Appendix
A. The proof of Proposition \ref{Prop: Non-parametric-unbiased-two sample}
is similar and therefore omitted. Both Propositions \ref{Prop: Two-sample non-parametric}
and \ref{Prop: Non-parametric-unbiased-two sample} can be extended
to $\bm{\alpha}$-spending constraints. We omit the formal statements
for brevity.

\section{Optimal tests in batched experiments\label{sec:ART:2}}

We now analyze sequential experiments with multiple treatments and
where the sampling rule, i.e., the number of units allocated to each
treatment, also changes over the course of the experiment. Since our
results here draw on \citet{hirano2023asymptotic}, we restrict attention
to batched experiments, where the sampling strategy is only allowed
to be changed at some fixed, discrete set of times. 

Suppose there are $K$ treatments under consideration. We take $K=2$
to simplify the notation, but all our results extend to any fixed
$K$. The outcomes, $Y^{(a)}$, under treatment $a\in\{0,1\}$ are
distributed according to some parametric model $\{P_{\theta^{(a)}}^{(a)}\}$.
Here $\theta^{(a)}\in\mathbb{R}^{d}$ is some unknown parameter vector;
we assume for simplicity that the dimension of $\theta^{(1)},\theta^{(0)}$
is the same, but none of our results actually require this. It is
without loss of generality to suppose that the outcomes from each
treatment are independent conditional on $\theta^{(1)},\theta^{(0)}$,
as we only ever observe one of the two potential outcomes for any
given observation. In the batch setting, the DM divides the observations
into batches of size $n$, and registers a sampling rule $\{\hat{\pi}_{j}^{(a)}\}_{j}$
that prescribes the fraction of observations allocated to treatment
$a$ in batch $j$ based on information from the previous batches
$1,\dots,j-1$. The experiment ends after $J$ batches. It is possible
to set $\pi_{j}^{(a)}=0$ for some or all treatments (e.g., the experiment
may be stopped early); we only require $\sum_{a}\hat{\pi}_{j}^{(a)}\le1$
for each $j$. We develop asymptotic representation theorems for tests
of $H_{0}:\theta=\Theta_{0}$ vs $H_{1}:\theta\in\Theta_{1}$, where
$\theta:=(\theta^{(1)},\theta^{(0)})$. Let $(\theta_{0}^{(1)},\theta_{0}^{(0)})\in\Theta_{0}$
denote some reference parameter in the null set.

Take $\hat{q}_{j}^{(a)}$ to be the proportion of observations allocated
to treatment $a$ up-to batch $j$, as a fraction of $n$. Let $Y_{j}^{(a)}$
denote the $j$-th observation of treatment $a$ in the experiment.
Any candidate test, $\delta(\cdot)$, is required to be 
\[
\sigma\left\{ \left(Y_{1}^{(0)},\dots,Y_{nq_{J}^{(0)}}^{(0)}\right),\left(Y_{1}^{(1)},\dots,Y_{nq_{J}^{(1)}}^{(1)}\right)\right\} 
\]
measurable. As in the previous sections, we measure the performance
of tests against local perturbations of the form $\{\theta_{0}^{(a)}+h_{a}/\sqrt{n};h_{a}\in\mathbb{R}^{d}\}$.
Let $\nu$ denote a dominating measure for $\{P_{\theta}^{(a)}:\theta\in\mathbb{R}^{d},a\in\{0,1\}\}$,
and set $p_{\theta}^{(a)}:=dP_{\theta}^{(a)}/d\nu$. We require $\{P_{\theta}^{(a)}\}$
to be quadratically mean differentiable (qmd): 

\begin{asm5} The class $\{P_{\theta}^{(a)}:\theta\in\mathbb{R}^{d}\}$
is qmd around $\theta_{0}^{(a)}$ for each $a\in\{0,1\}$, i.e., there
exists a score function $\psi_{a}(\cdot)$ such that for each $h_{a}\in\mathbb{R}^{d},$
\[
\int\left[\sqrt{p_{\theta_{0}^{(a)}+h_{a}}^{(a)}}-\sqrt{p_{\theta_{0}^{(a)}}^{(a)}}-\frac{1}{2}h_{a}^{\intercal}\psi_{a}\sqrt{p_{\theta_{0}^{(a)}}}\right]^{2}d\nu=o(\vert h_{a}\vert^{2}).
\]
Furthermore, the information matrix $I_{a}:=\mathbb{E}_{0}[\psi_{a}\psi_{a}^{\intercal}]$
is invertible for $a\in\{0,1\}$. \end{asm5}

Define $z_{j,n}^{(a)}(\hat{\pi}_{j})$ as the standardized score process
from each batch, where
\[
z_{j,n}^{(a)}(t):=\frac{I_{a}^{-1/2}}{\sqrt{n}}\sum_{i=1}^{\left\lfloor nt\right\rfloor }\psi_{a}(Y_{i,j}^{(a)})
\]
for each $t\in[0,1]$. Let $Y_{i,j}^{(a)}$ denote the $i$-th outcome
observation from arm $a$ in batch $j$. At each batch $j$, one can
imagine that there is a potential set of outcomes, $\{{\bf y}_{j}^{(1)},{\bf y}_{j}^{(0)}\}$
with ${\bf y}_{j}^{(a)}:=\{Y_{i,j}^{(a)}\}_{i=1}^{n}$, that could
be sampled from both arms, but only a sub-collection, $\{Y_{i,j}^{(a)};i=1,\dots,n\hat{\pi}_{j}^{(a)}\}$,
of these are actually sampled. Let $\bm{h}:=(h_{1},h_{0})$, take
$P_{n,\bm{h}}$ to be the joint probability measure over 
\[
\{{\bf y}_{1}^{(1)},{\bf y}_{1}^{(0)},\dots,{\bf y}_{J}^{(1)},{\bf y}_{J}^{(0)}\}
\]
when each $Y_{i,j}^{(a)}\sim P_{\theta_{0}^{(a)}+h_{a}/\sqrt{n}}$,
and take $\mathbb{E}_{n,\bm{h}}[\cdot]$ to be its corresponding expectation.
Then, by a standard functional central limit theorem, 
\begin{equation}
z_{j,n}^{(a)}(t)\xrightarrow[P_{n,0}]{d}z(t);\ z(\cdot)\sim W_{j}^{(a)}(\cdot),\label{eq:Convergence of score process-1}
\end{equation}
where $\{W_{j}^{(a)}\}_{j,a}$ are independent $d$-dimensional Brownian
motions. 

\subsection{Asymptotic representation theorem\label{subsec:Asymptotic-representation-theorem-batched}}

Consider a limit experiment where $\bm{h}:=(h_{1},h_{0})$ is unknown,
and for each batch $j$, one observes the stopped process $z_{j}^{(a)}(\pi_{j}^{(a)})$,
where 
\begin{equation}
z_{j}^{(a)}(t):=I_{a}^{1/2}h_{a}t+W_{j}^{(a)}(t),\label{eq:distribution z_j in limit experiment}
\end{equation}
and $\{W_{j}^{(a)};j=1,\dots,J;a=0,1\}$ are independent Brownian
motions. Each $\pi_{j}^{(a)}$ is required to satisfy $\sum_{a}\pi_{j}^{(a)}\le1$
and also to be
\[
\sigma\left\{ (z_{1}^{(1)},z_{1}^{(0)},U_{1}),\dots,(z_{j-1}^{(1)},z_{j-1}^{(0)},U_{j-1})\right\} 
\]
measurable, where $U_{j}\sim\textrm{Uniform}[0,1]$ is exogenous to
all the past values $\left\{ z_{j^{\prime}}^{(a)},U_{j^{\prime}}:j^{\prime}<j\right\} $.
Let $\varphi$ denote a test statistic for $H_{0}:h=0$ that depends
only on: (i) $q_{a}=\sum_{j}\pi_{j}^{(a)}$, i.e., the number of times
each arm was pulled; and (ii) $x_{a}=\sum_{j}z_{j}^{(a)}(\pi_{j}^{(a)})$,
i.e., the sum of outcomes from each arm. Let $\mathbb{P}_{\bm{h}}$
denote the joint probability measure over $\{z_{j}^{(a)}(\cdot);a\in\{0,1\},j\in\{1,\dots,J\}\}$
when each $z_{j}^{(a)}(\cdot)$ is distributed as in (\ref{eq:distribution z_j in limit experiment}),
and take $\mathbb{E}_{\bm{h}}[\cdot]$ to be its corresponding expectation. 

The following theorem shows that the power function of any test $\varphi_{n}$
in the original testing problem can be matched by one such test, $\varphi$,
in the limit experiment.

\begin{thm} \label{Thm: ART-Batched}Suppose Assumption 5 holds.
Let $\varphi_{n}$ be some test function in the original batched experiment,
and $\beta_{n}(\bm{h})$, its power against $P_{n,\bm{h}}$. Then,
for every sequence $\{n_{j}\}$, there is a further sub-sequence $\{n_{j_{m}}\}$
such that: \\
(i) \citep{hirano2023asymptotic} There exists a batched policy function
$\pi=\{\pi_{j}^{(a)}\}_{j}$ and processes $\{z_{j}^{(a)}(\cdot)\}_{j,a}$
defined on the limit experiment for which
\begin{align*}
 & \left(\left(\hat{\pi}_{1}^{(1)},\hat{\pi}_{1}^{(0)},z_{1,n}^{(1)}(\hat{\pi}_{1}^{(1)}),z_{1,n}^{(0)}(\hat{\pi}_{1}^{(0)})\right),\dots,\left(\hat{\pi}_{J}^{(1)},\hat{\pi}_{J}^{(0)},z_{J,n}^{(1)}(\hat{\pi}_{J}^{(1)}),z_{J,n}^{(0)}(\hat{\pi}_{J}^{(0)})\right)\right)\\
 & \xrightarrow[P_{n,0}]{d}\left(\left(\pi_{1}^{(1)},\pi_{1}^{(0)},z_{1}^{(1)}(\pi_{1}^{(1)}),z_{1}^{(0)}(\pi_{1}^{(0)})\right),\dots,\left(\pi_{J}^{(1)},\pi_{J}^{(0)},z_{J}^{(1)}(\pi_{J}^{(1)}),z_{J}^{(0)}(\pi_{J}^{(0)})\right)\right).
\end{align*}
(ii) There exists a test $\varphi$ in the limit experiment depending
only on $q_{1},q_{0},x_{1},x_{0}$ such that $\beta_{n_{j_{m}}}(\bm{h})\to\beta(\bm{h})$
for every $\bm{h}\in\mathbb{R}^{d}\times\mathbb{R}^{d}$, where $\beta(\bm{h}):=\mathbb{E}_{\bm{h}}[\varphi]$
is the power of $\varphi$ in the limit experiment.\end{thm}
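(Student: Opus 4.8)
The plan is to mimic the two-step structure of Theorem~\ref{Thm: ART}: first invoke part~(i) (due to \citet{hirano2023asymptotic}) to obtain joint weak convergence of the per-batch allocations and stopped score processes to their limit-experiment counterparts, then construct the matching test $\varphi$ in the limit experiment by combining this convergence with the SLAN expansion. The key point to establish is that the \emph{unconditional} likelihood ratio $\sum_{j,a}\sum_{i=1}^{\lfloor n\hat\pi_j^{(a)}\rfloor}\ln(dp^{(a)}_{\theta_0^{(a)}+h_a/\sqrt n}/dp^{(a)}_{\theta_0^{(a)}})(Y^{(a)}_{i,j})$ depends, in the limit, only on the reduced statistics $x_a=\sum_j z_j^{(a)}(\pi_j^{(a)})$ and $q_a=\sum_j \pi_j^{(a)}$. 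By the SLAN property (the batched analogue of \eqref{eq:SLAN property} applied arm-by-arm, batch-by-batch, and summed), this log-likelihood equals $\sum_a\big(h_a^\intercal I_a^{1/2} x_{a,n} - \tfrac{q_{a,n}}{2} h_a^\intercal I_a h_a\big) + o_{P_{n,0}}(1)$, where $x_{a,n}:=\sum_j z^{(a)}_{j,n}(\hat\pi_j^{(a)})$ and $q_{a,n}:=\sum_j \hat\pi_j^{(a)}$. The crucial subtlety here is that $\hat\pi_j^{(a)}$ is random and adapted to earlier batches, so the per-batch expansion must hold uniformly over $t\le 1$ (which \eqref{eq:SLAN property} gives) and one then evaluates at the data-dependent endpoint $t=\hat\pi_j^{(a)}$; since there are only finitely many batches $J$, summing these expansions preserves the $o_{P_{n,0}}(1)$ error.

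Given this, the argument proceeds as follows. By part~(i), along the chosen subsequence $(\hat\pi_j^{(a)}, z^{(a)}_{j,n}(\hat\pi_j^{(a)}))_{j,a} \xrightarrow{d} (\pi_j^{(a)}, z_j^{(a)}(\pi_j^{(a)}))_{j,a}$ under $P_{n,0}$; by the continuous mapping theorem this yields $(x_{1,n},x_{0,n},q_{1,n},q_{0,n}) \xrightarrow{d} (x_1,x_0,q_1,q_0)$. Combining with the SLAN expansion and Slutsky, the joint vector $\big(x_{1,n},x_{0,n},q_{1,n},q_{0,n}, \sum_{j,a}\sum_i \ln(\cdot)\big)$ converges in distribution under $P_{n,0}$ to $\big(x_1,x_0,q_1,q_0, \sum_a(h_a^\intercal I_a^{1/2}x_a - \tfrac{q_a}{2}h_a^\intercal I_a h_a)\big)$, jointly for any fixed finite collection of values of $\bm h$. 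Next I would pass to a further subsequence along which the (bounded) test $\varphi_n$ converges, jointly with the above vector, to some limit; the standard device (as in \citet{van2000asymptotic}, Ch.~15, and in the proof of Theorem~\ref{Thm: ART}) is to represent this joint limit on a common probability space and identify the limit of $\varphi_n$ as a (possibly randomized) function $\varphi$ measurable with respect to $(x_1,x_0,q_1,q_0)$ together with an independent uniform randomization — here one needs that all the information $\varphi_n$ can use beyond $(x_{a,n},q_{a,n})$ becomes asymptotically independent of the likelihood ratios, which again follows because the limiting log-likelihood is a function of $(x_a,q_a)$ alone.

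With $\varphi$ in hand, Le~Cam's third lemma (in the form used throughout this literature) transfers the convergence from $P_{n,0}$ to $P_{n,\bm h}$: since $dP_{n,\bm h}/dP_{n,0}$ has the log-likelihood limit above, we get $\beta_n(\bm h) = \mathbb{E}_{n,\bm h}[\varphi_n] \to \mathbb{E}_{\bm h}[\varphi(x_1,x_0,q_1,q_0)] =: \beta(\bm h)$ along the subsequence, for each $\bm h\in\mathbb{R}^d\times\mathbb{R}^d$. It remains to check that $\beta(\bm h)$ is exactly the limit-experiment power of $\varphi$, i.e.\ that the joint law of $(x_1,x_0,q_1,q_0)$ under the Girsanov-tilted measure $\mathbb{P}_{\bm h}$ coincides with the limiting law under $P_{n,\bm h}$; this is immediate because tilting the Gaussian processes $z_j^{(a)}$ by drift $I_a^{1/2}h_a$ produces precisely the density $\exp(\sum_a h_a^\intercal I_a^{1/2}x_a - \tfrac{q_a}{2}h_a^\intercal I_a h_a)$ relative to $\mathbb{P}_0$, matching the log-likelihood limit.

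**Main obstacle.** The step I expect to be most delicate is controlling the likelihood-ratio expansion at the \emph{random} batch endpoints $\hat\pi_j^{(a)}$: one must argue that the uniform-in-$t$ SLAN remainder, evaluated at a data-dependent (and non-$\mathcal F$-stopping-time, merely previous-batch-measurable) argument, is still $o_{P_{n,0}}(1)$, and that the sum over the finitely many batches of these expansions telescopes correctly into $\sum_a(h_a^\intercal I_a^{1/2}x_a - \tfrac{q_a}{2}h_a^\intercal I_a h_a)$ — in particular that the cross-batch terms vanish because the Brownian increments across batches are independent. This is where the restriction to finitely many batches $J$ is essential and where the argument genuinely extends, rather than merely restates, \citet{hirano2023asymptotic}: their representation keeps all per-batch coordinates, whereas we must show the reduced pair $(x_a,q_a)$ is asymptotically sufficient for the unconditional testing problem.
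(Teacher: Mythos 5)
Your proposal follows essentially the same route as the paper's proof: batch-by-batch SLAN expansion evaluated at the random allocations $\hat{\pi}_{j}^{(a)}$, joint tightness of $(\varphi_{n},\textrm{log-likelihood ratio})$ and passage to a convergent subsequence, reduction to $(q_{1},q_{0},x_{1},x_{0})$ by observing that the limiting likelihood ratio is a function of these statistics alone (the paper implements this as $\varphi:=E[\bar{\varphi}\vert q_{1},q_{0},x_{1},x_{0}]$, which is what your ``randomized function of the sufficient statistics'' amounts to), and a Girsanov/third-lemma identification of the limit power. The only step you gloss over is the verification that the limit likelihood ratio $V$ has $E[V]=1$ (needed to invoke the third lemma), which the paper establishes by iteratively conditioning the per-batch exponential martingales on past batches, using that each $W_{j}^{(a)}$ is independent of $\pi_{j}^{(a)}$ --- a point your remark on cross-batch independence essentially anticipates.
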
 

The first part of Theorem \ref{Thm: ART-Batched} is due to \citet{hirano2023asymptotic};
we only modify the terminology slightly. Note that the results of
\citet{hirano2023asymptotic} already imply that any $\varphi_{n}$
can be asymptotically matched by a test $\varphi$ in the limit experiment
that is $\sigma\left\{ (z_{1}^{(1)},z_{1}^{(0)},U_{1}),\dots,(z_{J}^{(1)},z_{J}^{(0)},U_{J})\right\} $
measurable. The novel result here is the second part of Theorem \ref{Thm: ART-Batched},
which shows that a further dimension reduction is possible. A naive
application of \citet{hirano2023asymptotic} would require sufficient
statistics that grow linearly with the number of batches, leading
to a vector of dimension $2dJ+1$ (the uniform random variables $U_{1},\dots,U_{J}$
can be subsumed into a single $U\sim\textrm{Uniform}[0,1]$). Here,
we show that one only need condition on $q_{1},q_{0},x_{1},x_{0}$,
which are of a fixed dimension $2d+2$ (or $2d+1$ if we impose $q^{(1)}+q^{(0)}=J$).
This is a substantial reduction in dimension.

\subsubsection{An alternative representation of the limit experiment\label{subsec:An-alternative-representation}}

From the distribution of $z_{j}^{(a)}(\cdot)$ given in (\ref{eq:distribution z_j in limit experiment}),
it is easy to verify that 
\[
z_{j}^{(a)}(\pi_{j}^{(a)})\sim I_{a}^{1/2}h_{a}\pi_{j}^{(a)}+W_{j}^{(a)}(\pi_{j}^{(a)}).
\]
Combined with the definition $q_{a}=\sum_{j}\pi_{j}^{(a)}$ and the
fact $\{W_{j}^{(a)};j=1,\dots,J;a=0,1\}$ are independent Brownian
motions, we obtain 
\begin{equation}
x_{a}=\sum_{j}z_{j}^{(a)}(\pi_{j}^{(a)})\sim I_{a}^{1/2}h_{a}q_{a}+W_{a}(q_{a}),\label{eq:distribution of x_a  in limit experiment}
\end{equation}
where $W_{1}(\cdot).W_{0}(\cdot)$ are standard $d$-dimensional Brownian
motions that are again independent of each other. In view of the above,
we can alternatively think of the limit experiment as observing $\{q_{a}\}_{a}$
along with $\{x_{a}\}_{a}$, with the latter distributed as in (\ref{eq:distribution of x_a  in limit experiment}).
The advantage of this formulation is that it is independent of the
number of batches. It therefore provides suggestive evidence that
the asymptotic representation in Theorem \ref{Thm: ART-Batched} would
remain valid under continuous experimentation (however, our proof
only applies to a finite number of batches). 

\subsection{Characterization of optimal tests in the limit experiment\label{subsec:Characterization-of-optimal-batched}}

It is generally unrealistic in batched sequential experiments for
the sampling rule to depend on fewer statistics than $q_{1},q_{0},x_{1},x_{0}$.
Consequently, we do not have sharp results for testing linear combinations
as in Proposition \ref{Prop: Linear combinations}. We do, however,
have analogues to the other results in Section \ref{subsec:Characterization-of-optimal}.

\subsubsection{Power envelope}

Consider testing $H_{0}:\bm{h}=0$ vs $H_{1}:\bm{h}=\bm{h}_{1}$ in
the limit experiment. By the Neyman-Pearson lemma, and the Girsanov
theorem applied on (\ref{eq:distribution of x_a  in limit experiment}),
the optimal test is given by 
\begin{equation}
\varphi_{\bm{h}_{1}}^{*}=\mathbb{I}\left\{ \sum_{a\in\{0,1\}}\left(h_{a}^{\intercal}I_{a}^{1/2}x_{a}-\frac{q_{a}}{2}h_{a}^{\intercal}I_{a}h_{a}\right)\ge\gamma_{h_{1}}\right\} ,\label{eq:UMP test batched}
\end{equation}
where $\gamma_{\bm{h}_{1}}$ is chosen such that $\mathbb{E}_{0}[\varphi_{h_{1}}^{*}]=\alpha$.
Take $\beta^{*}(\bm{h}_{1})$ to be the power function of $\varphi_{\bm{h}_{1}}^{*}$
against $H_{1}:\bm{h}=\bm{h}_{1}$. Theorem \ref{Thm: ART-Batched}
shows that $\beta^{*}(\cdot)$ is an asymptotic power envelope for
any test of $H_{0}:\theta=\theta_{0}$ in the original experiment. 

\subsubsection{Unbiased tests}

Suppose $\varphi(q_{1},q_{0},x_{1},x_{0})$ is an unbiased test of
$H_{0}:\bm{h}=0$ vs $H_{1}:\bm{h}\neq0$ in the limit experiment.
Then, in analogy with Proposition \ref{Prop: Unbiased}, it needs
to satisfy the following property:

\begin{prop}\label{Prop: Unbiased-Batched} Any unbiased test of
$H_{0}:\bm{h}=0$ vs $H_{1}:\bm{h}\neq0$ in the limit experiment
must satisfy $\mathbb{E}_{0}[x_{a}\varphi(q_{1},q_{0},x_{1},x_{0})]=0$
where $x_{a}\sim W_{a}(q_{a})$ under $\mathbb{P}_{0}$. \end{prop}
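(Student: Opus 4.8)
The plan is to mimic the standard argument for the two-sided unbiased restriction in exponential families (as in Lehmann--Romano), adapted to the Gaussian-diffusion limit experiment via the alternative representation \eqref{eq:distribution of x_a  in limit experiment}. First I would fix the relevant parametrization: by \eqref{eq:distribution of x_a  in limit experiment}, conditional on the realized allocations $q_1,q_0$ (which are themselves functions of the sample paths, but whose law does not depend on $\bm h$ up to the Girsanov change of measure), the pair $(x_1,x_0)$ has a density with respect to its law under $\mathbb{P}_0$ given by the Girsanov/Cameron--Martin factor
\[
\frac{d\mathbb{P}_{\bm h}}{d\mathbb{P}_0}\Bigl(q_1,q_0,x_1,x_0\Bigr)=\exp\!\left\{\sum_{a\in\{0,1\}}\left(h_a^{\intercal}I_a^{1/2}x_a-\frac{q_a}{2}h_a^{\intercal}I_ah_a\right)\right\}.
\]
This exhibits the family, indexed by $\bm h$, as a curved-exponential-type family in the natural statistic $x:=(x_1,x_0)$ (with $\bm h=0$ the null point). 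The power function $\beta(\bm h):=\mathbb{E}_{\bm h}[\varphi]$ can then be written as an integral of $\varphi$ against this exponential factor.

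The key step is to differentiate $\beta(\bm h)$ in $\bm h$ at $\bm h=0$. Because $\varphi$ is unbiased with $\beta(0)=\mathbb{E}_0[\varphi]\le\alpha$ and $\beta(\bm h)\ge\alpha$ for all $\bm h\neq0$, the function $\bm h\mapsto\beta(\bm h)$ attains an interior minimum over $\mathbb{R}^d\times\mathbb{R}^d$ at $\bm h=0$ (after noting $\beta(0)=\alpha$ must in fact hold by continuity of $\beta$, which follows from dominated convergence since the exponential factor is continuous in $\bm h$ and, using boundedness of $\varphi$ together with $q_a\le J$, uniformly integrable on a neighborhood of $0$). Hence $\nabla_{\bm h}\beta(\bm h)\big|_{\bm h=0}=0$, provided differentiation under the integral sign is justified. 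Carrying out the differentiation: $\partial_{h_a}$ of the exponential factor at $\bm h=0$ brings down $I_a^{1/2}x_a$ (the quadratic term $-\tfrac{q_a}{2}h_a^{\intercal}I_ah_a$ contributes nothing at $\bm h=0$), so
\[
0=\nabla_{h_a}\beta(\bm h)\big|_{\bm h=0}=\mathbb{E}_0\!\left[I_a^{1/2}x_a\,\varphi(q_1,q_0,x_1,x_0)\right]=I_a^{1/2}\,\mathbb{E}_0\!\left[x_a\,\varphi(q_1,q_0,x_1,x_0)\right].
\]
Since $I_a$ is invertible by Assumption 5, multiplying through by $I_a^{-1/2}$ gives $\mathbb{E}_0[x_a\varphi(q_1,q_0,x_1,x_0)]=0$ for each $a\in\{0,1\}$, which is the claim; and under $\mathbb{P}_0$ we have $x_a\sim W_a(q_a)$ by \eqref{eq:distribution of x_a in limit experiment} with $h_a=0$.

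The main obstacle is the rigorous justification of interchanging differentiation and expectation, i.e.\ dominating $\|I_a^{1/2}x_a\|\exp\{\sum_a(h_a^{\intercal}I_a^{1/2}x_a-\tfrac{q_a}{2}h_a^{\intercal}I_ah_a)\}$ uniformly for $\bm h$ in a neighborhood of $0$ by an integrable function. Here one uses that, conditional on $(q_1,q_0)$, each $x_a$ is Gaussian with covariance $q_aI_d$ and $q_a\le J$ is bounded, so all exponential moments $\mathbb{E}_0[\exp\{t\|x_a\|\}]$ are finite and locally bounded; a standard application of the dominated convergence theorem (or the differentiation-under-the-integral-sign lemma, e.g.\ in the form used for exponential families) then applies, also covering the continuity claim used above. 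A secondary, minor point is to handle the fact that $(q_1,q_0)$ are random rather than fixed --- but since $\varphi$ and the Girsanov factor are written as functions of $(q_1,q_0,x_1,x_0)$ jointly, one simply conditions on $(q_1,q_0)$, applies the above to the conditional expectation, and then takes the outer expectation, so no additional difficulty arises.
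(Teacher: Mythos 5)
Your proposal is correct and follows essentially the same route as the paper: the paper proves the batched version "in analogy with" Proposition \ref{Prop: Unbiased}, whose proof is exactly your argument --- write $\beta(\bm{h})=\mathbb{E}_{0}[\varphi\,e^{\sum_{a}(h_{a}^{\intercal}I_{a}^{1/2}x_{a}-\frac{q_{a}}{2}h_{a}^{\intercal}I_{a}h_{a})}]$ via Girsanov, note that unbiasedness forces an interior minimum of $\beta$ at $\bm{h}=0$, and conclude that the gradient $\mathbb{E}_{0}[I_{a}^{1/2}x_{a}\varphi]$ vanishes. Your additional care with the domination argument and the invertibility of $I_{a}$ only fills in details the paper leaves implicit.
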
 

\subsubsection{Weighted average power}

Let $w(\cdot)$ denote a weight function over alternatives $\bm{h}\neq0$.
Then, the uniquely optimal test of $H_{0}:\bm{h}=0$ that maximizes
weighted average power over $w(\cdot)$ is given by
\[
\varphi_{w}^{*}=\mathbb{I}\left\{ \int\exp\left\{ \sum_{a\in\{0,1\}}\left(h_{a}^{\intercal}I_{a}^{1/2}x_{a}-\frac{q_{a}}{2}h_{a}^{\intercal}I_{a}h_{a}\right)\right\} dw(\bm{h})\ge\gamma\right\} .
\]
The value of $\gamma$ is chosen to satisfy $\mathbb{E}_{0}[\varphi_{w}^{*}]=\alpha$.
In practice, it can be computed by simulation. 

\subsection{Non-parametric tests\label{subsec:Non-parametric-tests-batched}}

For the non-parametric setting, we make use of the same notation as
in Section \ref{sec:Two-sample-tests}. We are interested in conducting
inference on some regular vector of functionals, $\left(\mu(P^{(1)}),\mu(P^{(0)})\right)$,
of the outcome distributions $P^{(1)},P^{(0)}$ for the two treatments.
To simplify matters, we take $\mu_{a}:=\mu(P^{(a)})$ to be scalar.
The definition of asymptotically level-$\alpha$ and unbiased tests
is unchanged from (\ref{eq:Level -alpha - two sample definition})
and (\ref{eq:Unbiased test definition - tow sample}). 

Let $\psi_{a},\sigma_{a}$ be defined as in Section \ref{sec:Two-sample-tests}.
Set
\[
z_{j,n}^{(a)}:=\frac{1}{\sigma_{a}\sqrt{n}}\sum_{i=1}^{\left\lfloor nt\right\rfloor }\psi_{a}(Y_{i,j}^{(a)}),
\]
and take $s_{n}(\cdot)=\left\{ x_{n,1}(\cdot),x_{n,0}(\cdot),q_{n,1}(\cdot),q_{n,0}(\cdot)\right\} $
to be the vector of state variables, where
\begin{align*}
x_{n,a}(k) & :=\sum_{j=1}^{k}z_{n,j}^{(a)}(\hat{\pi}_{j}^{(a)}),\ \textrm{and }\ q_{n,a}(k):=\sum_{j=1}^{k}\hat{\pi}_{j}^{(a)}.
\end{align*}
\begin{asm6}(i) The sub-models $\{P_{s,h_{a}}^{(a)};h_{a}\in T(P_{0}^{(a)})\}$
satisfy (\ref{eq:qmd non-parametrics-1}). Furthermore, they admit
an efficient influence function, $\psi_{a}$, such that (\ref{eq:influence function-1})
holds. 

(ii) The sampling rule $\hat{\pi}_{j+1}$ in batch $j$ is a continuous
function of $s_{n}(j)$ in the sense that $\hat{\pi}_{j+1}=\pi_{j+1}(s_{n}(j))$,
where $\pi_{j+1}(\cdot)$ satisfies the conditions for an extended
continuous mapping theorem \citep[Theorem 1.11.1]{van1996weak} for
each $j=0,\dots,K-1$.\end{asm6}

Assumption 6(i) is standard. Assumption 6(ii) implies that the sampling
rule depends on a vector of four state variables. This is in contrast
to the single sufficient statistic used in Section \ref{sec:Two-sample-tests}.
We impose Assumption 6(ii) as it is more realistic; many commonly
used algorithms, e.g., Thompson sampling, depend on all four statistics.
The assumption still imposes a dimension reduction as it requires
the sampling rule to be independent of the data conditional on knowing
$s_{n}(\cdot)$. In practice, any Bayes or minimax optimal algorithm
would only depend on $s_{n}(\cdot)$ anyway, as noted in \citet{adusumilli2021risk}.
In fact, we are not aware of any commonly used algorithm that requires
more statistics beyond these four. 

The reliance of the sampling rule on the vector $s_{n}(\cdot)$ implies
that the optimal test should also depend on the full vector, and cannot
be reduced further. The relevant limit experiment is the one described
in Section \ref{subsec:An-alternative-representation}, with $\mu_{a}$
replacing $h_{a}$. Also, let 
\[
\varphi_{\bar{\mu_{1}},\bar{\mu_{0}}}=\mathbb{I}\left\{ \sum_{a\in\{0,1\}}\left(\frac{\bar{\mu_{a}}}{\sigma_{a}}x_{a}-\frac{q_{a}}{2\sigma_{a}^{2}}\bar{\mu}_{a}^{2}\right)\ge\gamma_{\bar{\mu}_{1},\bar{\mu}_{0}}\right\} 
\]
denote the Neyman-Pearson test of $H_{0}:(\mu_{1},\mu_{0})=(0,0)$
vs $H_{1}:(\mu_{1},\mu_{0})=(\bar{\mu}_{1},\bar{\mu}_{0})$ in the
limit experiment, with $\gamma_{\bar{\mu}_{1},\bar{\mu}_{0}}$ determined
by the size requirement. Take $\beta(\bar{\mu}_{1},\bar{\mu}_{0}$)
to be its corresponding power. 

\begin{prop} \label{Prop: Non-parametric batched experiments}Suppose
Assumption 6 holds. Let $\beta_{n}(\bm{h})$ the power of some asymptotically
level-$\alpha$ test, $\varphi_{n}$, of $H_{0}:(\mu_{1},\mu_{0})=(0,0)$
against local alternatives $P_{\delta_{1}/\sqrt{n},h_{1}}^{(1)}\times P_{\delta_{0}/\sqrt{n},h_{0}}^{(0)}$.
Then, for every $\bm{h}\in T(P_{0}^{(1)})\times T(P_{0}^{(0)})$ and
$\mu_{a}:=\delta_{a}\left\langle \psi_{a},h_{a}\right\rangle _{a}$
for $a\in\{0,1\}$, $\limsup_{n\to\infty}\beta_{n}(\bm{h})\le\beta^{*}\left(\mu_{1},\mu_{0}\right)$.\end{prop}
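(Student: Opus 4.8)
The plan is to reduce Proposition \ref{Prop: Non-parametric batched experiments} to the parametric batched result, Theorem \ref{Thm: ART-Batched}, via a parametric submodel argument, exactly as Proposition \ref{Prop: Non-parametric} is obtained from the parametric theory in Section \ref{sec:Diffusion-asymptotics-and}. First I would fix an arbitrary pair $\bm{h}=(h_1,h_0)\in T(P_0^{(1)})\times T(P_0^{(0)})$ together with the local scaling constants $\delta_1,\delta_0$, and consider the one-dimensional submodels $s\mapsto P_{s,h_a}^{(a)}$ for $a\in\{0,1\}$. These are quadratically mean differentiable by Assumption 6(i), with score $h_a$ and "information" $\|h_a\|_a^2=\mathbb{E}_{P_0^{(a)}}[h_a^2]$; moreover (\ref{eq:influence function-1}) says the directional derivative of $\mu_a$ along this submodel is $\langle\psi_a,h_a\rangle_a$. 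So along the reparametrisation $\theta^{(a)}=\delta_a s$ we are exactly in the setting of Section \ref{sec:ART:2} with $d=1$, score $\delta_a h_a$, and information $I_a=\delta_a^2\|h_a\|_a^2$; the local alternative $P_{\delta_a/\sqrt n,h_a}^{(a)}$ corresponds to the parameter value $\theta^{(a)}=1/\sqrt n$, i.e. to $h_a=1$ in the notation of Theorem \ref{Thm: ART-Batched}.

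Next I would verify that Assumption 5 holds for this induced parametric family (qmd with invertible information, which reduces to $\|h_a\|_a>0$; the degenerate case $h_a=0$ is handled trivially since then the submodel is constant in $a$ and contributes nothing), and that the sampling-rule condition of Theorem \ref{Thm: ART-Batched} is inherited from Assumption 6(ii): the rule $\hat\pi_{j+1}=\pi_{j+1}(s_n(j))$ depends on the data only through $s_n(\cdot)$, whose components are precisely the efficient-influence-function partial-sum processes $x_{n,a}(\cdot)$ and the allocation fractions $q_{n,a}(\cdot)$ — these play the role of the standardized score processes and cumulative allocations in the batched parametric ART. Applying Theorem \ref{Thm: ART-Batched} to the induced family then gives, along a subsequence, a limiting sampling policy and independent Brownian motions such that the finite-sample state vector converges in distribution to its limit-experiment analogue, and a test $\varphi$ in the limit experiment depending only on $(q_1,q_0,x_1,x_0)$ whose power matches $\beta_n(\bm h)$ along that subsequence. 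In the limit experiment, by the alternative representation (\ref{eq:distribution of x_a in limit experiment}) with $\mu_a/\sigma_a$ in place of $I_a^{1/2}h_a$, one observes $x_a\sim(\mu_a/\sigma_a)q_a+W_a(q_a)$; here $\mu_a=\delta_a\langle\psi_a,h_a\rangle_a$ is the drift delivered by the submodel calculation.

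Then I would invoke the power-envelope step: $\varphi$ is an asymptotically level-$\alpha$ test in the limit experiment (the level constraint (\ref{eq:Level -alpha - two sample definition}) passes to the limit along all null submodels, as in the proof of Proposition \ref{Prop: Non-parametric}), so by the Neyman-Pearson lemma applied to the Gaussian likelihood ratio — which by Girsanov on (\ref{eq:distribution of x_a in limit experiment}) is $\exp\{\sum_a(\bar\mu_a\sigma_a^{-1}x_a-q_a\bar\mu_a^2/(2\sigma_a^2))\}$ — its power against $(\mu_1,\mu_0)$ is bounded by that of $\varphi_{\mu_1,\mu_0}$, namely $\beta^*(\mu_1,\mu_0)$. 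Combining, $\lim_m\beta_{n_m}(\bm h)=\mathbb{E}_{\bm\mu}[\varphi]\le\beta^*(\mu_1,\mu_0)$; since every subsequence of $\{\beta_n(\bm h)\}$ has a further subsequence with this bound, $\limsup_n\beta_n(\bm h)\le\beta^*(\mu_1,\mu_0)$, which is the claim.

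The main obstacle I anticipate is the bookkeeping in the reduction, not a conceptual difficulty: I must confirm that the four-dimensional state vector $s_n(\cdot)$ of Assumption 6(ii) maps cleanly onto the $(z_{j,n}^{(a)},\hat\pi_j^{(a)})$ data that Theorem \ref{Thm: ART-Batched} conditions the policy on — in particular that the efficient-influence-function partial sums satisfy the functional CLT (\ref{eq:Convergence of score process-1}) (which follows from (\ref{eq:qmd non-parametrics-1}), boundedness of variance, and Donsker), and that the extended continuous mapping theorem applies to $\pi_{j+1}(\cdot)$ jointly with the weak convergence of $s_n(j)$. One must also take care that the subsequencing in Theorem \ref{Thm: ART-Batched} is compatible with extracting a single $\limsup$ bound that is uniform over the fixed alternative; this is the standard "every subsequence has a convergent sub-subsequence" argument and poses no real trouble. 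The genuinely new content is entirely contained in Theorem \ref{Thm: ART-Batched}, whose second part already delivers the crucial dimension reduction to $(q_1,q_0,x_1,x_0)$; the present proposition is its non-parametric shadow, obtained by the by-now-routine submodel device.
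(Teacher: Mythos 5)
There is a genuine gap, and it sits at the heart of the argument: the one\nobreakdash-dimensional submodel $s\mapsto P^{(a)}_{s,h_a}$ cannot deliver the bound $\beta^*(\mu_1,\mu_0)$. When you apply Theorem \ref{Thm: ART-Batched} to that induced family, the score process it hands you is the partial-sum process of $h_a$ itself, normalized by $I_a^{1/2}=\delta_a\Vert h_a\Vert_a$; the resulting statistic $x_a$ is \emph{not} the efficient-influence-function statistic $\sigma_a^{-1}n^{-1/2}\sum_i\psi_a(Y_i^{(a)})$, and its drift under the alternative is $\delta_a\Vert h_a\Vert_a q_a$, not $(\mu_a/\sigma_a)q_a=\delta_a\langle\psi_a,h_a\rangle_a q_a/\sigma_a$. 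By Cauchy--Schwarz the former signal strictly exceeds the latter unless $h_a\propto\psi_a$. Within the one-dimensional submodel the null is the single point $s=0$, so the Neyman--Pearson lemma only gives you the power of the likelihood-ratio test \emph{along the full direction} $h_a$, which is in general \emph{larger} than $\beta^*(\mu_1,\mu_0)$. Your final inequality $\mathbb{E}_{\bm\mu}[\varphi]\le\beta^*(\mu_1,\mu_0)$ therefore does not follow; you have silently substituted the $\psi_a$-based likelihood ratio for the $h_a$-based one. The remark that ``the level constraint passes to the limit along all null submodels'' cannot repair this, because the limit experiment you constructed only contains the one direction $h_a$ --- the null perturbations orthogonal to $\psi_a$, whose presence is precisely what forces the power down to $\beta^*$, are not embedded in it.

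The paper's proof avoids this by using a \emph{two}-dimensional submodel per arm: it decomposes $g_a=\sigma_a^{-1}\langle\psi_a,g_a\rangle_a(\psi_a/\sigma_a)+\tilde g_a$ with $\tilde g_a\perp\psi_a/\sigma_a$, sets $\bm g_a=(\psi_a/\sigma_a,\tilde g_a/\Vert\tilde g_a\Vert_a)^\intercal$, and works with $P_{1/\sqrt n,\bm h_a^\intercal\bm g_a}$ for $\bm h_a\in\mathbb{R}^2$. In the induced limit experiment the null $H_0:((\sigma_1,0)^\intercal\bm h_1,(\sigma_0,0)^\intercal\bm h_0)=(0,0)$ is \emph{composite}, with the orthogonal components as nuisance directions, and Lemma \ref{Lem: Batched experiments optimal test} shows --- by verifying that the distribution of $(q_1,q_0,x_1,x_0)$ is pivotal over the whole null --- that the optimal level-$\alpha$ test of this composite null depends only on the $\psi_a$-components and has power exactly $\beta^*(\mu_1,\mu_0)$. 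This two-dimensional embedding also makes the sampling rule $\hat\pi_{j+1}=\pi_{j+1}(s_n(j))$ of Assumption 6(ii) a continuous function of the submodel's own score process (its first components are the $x_{n,a}$), which is needed for the extended continuous mapping step; with a generic one-dimensional direction $h_a$ this measurability fails as well. To fix your proof, replace the one-dimensional submodel by the two-dimensional one and supply (or cite) the composite-null Neyman--Pearson argument of Lemma \ref{Lem: Batched experiments optimal test}.
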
 

Proposition \ref{Prop: Non-parametric batched experiments} describes
the power envelope for testing that the parameter vector takes on
a given value. Suppose, however, that one is only interested in providing
inference for single component of that vector, say $\mu_{1}$. Then
$\mu_{0}$ is a nuisance parameter under the null, and one would need
to employ the usual strategies for getting rid of the dependence on
$\mu_{0}$, e.g., through conditional inference or minimax tests.
We leave the discussion of these possibilities for future research. 

\section{Applications\label{sec:Applications}}

\subsection{Horizontal boundary designs\label{subsec:Sequential-linear-boundary}}

As a first illustration of our methods, consider the class of horizontal
boundary designs with a fixed sampling rule, $\pi$, and the stopping
time $\hat{\tau}=\inf\left\{ t:\vert x_{n}(t)\vert\ge\gamma\right\} $,
where $x_{n}(t)$ is defined as in (\ref{eq:two-sample tests sufficient statistic}).
As a concrete example, suppose $\mu_{1},\mu_{0}$ denote the mean
values of outcomes from each treatment, with $\sigma_{1},\sigma_{0}$
their corresponding standard deviations. If the goal of the experiment
is to determine the treatment with the largest mean while minimizing
the number of samples, which are costly, then, as shown in \citet{adusumilli2022sample},
the minimax optimal sampling strategy is the Neyman allocation $\pi_{1}^{*}=\sigma_{1}/(\sigma_{1}+\sigma_{0})$,
and optimal stopping rule is $\hat{\tau}=\inf\left\{ t:\vert x_{n}(t)\vert\ge\gamma\right\} $
with the efficient influence functions $\psi_{1}(Y)=\psi_{0}(Y)=Y$. 

We are interested in testing the null of no treatment effect, $H_{0}:\mu_{1}-\mu_{0}=0$
vs $H_{1}:\mu_{1}-\mu_{0}\neq0$. Let $F_{\mu}(\cdot)$ denote the
distribution of $\tau$ in the limit experiment where $x(t)\sim\sigma^{-1}\mu t+W(t)$
and $\tau=\inf\{t:\vert x(t)\vert\ge\gamma\}$. In \citet{adusumilli2022sample},
this author suggested employing the test function $\hat{\varphi}=\mathbb{I}\{\hat{\tau}\le F_{0}^{-1}(\alpha)\}$.
This corresponds to the test $\varphi^{*}=\mathbb{I}\{\tau\le F_{0}^{-1}(\alpha)\}$
in the limit experiment. However, no argument was given as to its
optimality. The following result, proved in Appendix \ref{subsec:Supporting-information-for-sequential-linear-boundary},
shows that $\hat{\varphi}$ is in fact the UMP asymptotically unbiased
test.

\begin{lem} \label{Lem: costly sampling} Consider the sequential
experiment described above with a fixed sampling rule $\pi$ and stopping
time $\hat{\tau}=\inf\left\{ t:\vert x_{n}(t)\vert\ge\gamma\right\} $.
The test, $\hat{\varphi}=\mathbb{I}\{\hat{\tau}\le F_{0}^{-1}(\alpha)\}$,
is the UMP asymptotically unbiased test (in the sense that it attains
the upper bound in Proposition \ref{Prop: Non-parametric}) of $H_{0}:\mu_{1}=\mu_{0}$
vs $H_{1}:\mu_{1}\neq\mu_{0}$ in this experiment.\end{lem}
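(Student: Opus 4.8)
The plan is to show that $\hat\varphi$ attains the power envelope $\beta^*(\cdot)$ of Proposition \ref{Prop: Non-parametric}, which by that proposition is an upper bound for all asymptotically unbiased tests. By the attainment argument of Section \ref{subsec:Attaining-the-bound} (applied with parametric sub-models, exactly as in the paragraph preceding Proposition \ref{Prop: Non-parametric}), it suffices to work entirely in the limit experiment and prove that $\varphi^*(\tau,x(\tau))=\mathbb{I}\{\tau\le F_0^{-1}(\alpha)\}$ is (i) asymptotically level-$\alpha$ and unbiased, and (ii) best unbiased, i.e.\ it coincides with the UMP unbiased test $\tilde\varphi^*$ whose power function $\tilde\beta^*$ appears in Proposition \ref{Prop: Non-parametric-unbiased}. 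The continuity/extended-continuous-mapping conditions in Assumption 3(ii) let us transport optimality in the limit experiment back to the original experiment.

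The heart of the matter is the following: in the limit experiment $x(t)\sim\sigma^{-1}\mu t+W(t)$, stopped at $\tau=\inf\{t:|x(t)|\ge\gamma\}$, the pair $(\tau,x(\tau))$ is a sufficient statistic, and conditional on $\tau=t$ the variable $x(\tau)$ takes only the two values $\pm\gamma$. So the relevant data reduce to $\tau$ together with the sign $s:=\mathrm{sgn}(x(\tau))\in\{-1,+1\}$. The likelihood ratio against the null, by Girsanov (or equivalently the SLAN expansion \eqref{eq:SLAN nonparametric setting} passed to the limit), is $\exp\{\sigma^{-1}\mu\, x(\tau)-\tfrac{\mu^2}{2\sigma}\tau\}=\exp\{\sigma^{-1}\mu\gamma s-\tfrac{\mu^2}{2\sigma}\tau\}$. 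I would first record the two-sided structure: by Proposition \ref{Prop: Unbiased} (the $\mathbb{E}_0[x(\tau)\varphi]=0$ condition) an unbiased test must be ``balanced'' across the two signs, and then invoke the standard theory of UMP unbiased tests for a one-parameter exponential-type family with a nuisance boundary (as in \citet[Section 4.2]{lehmann2005testing}) to conclude that the UMP unbiased test rejects when the complete sufficient statistic $\tau$ (after conditioning out the sign via the unbiasedness/similarity constraint) is extreme. The key computation is that, for $\mu\neq0$, smaller values of $\tau$ are more likely — heuristically the drift pushes $x(\cdot)$ to the boundary faster — so the optimal rejection region is $\{\tau\le c\}$; imposing size $\alpha$ gives $c=F_0^{-1}(\alpha)$, and this region does not depend on $\mu$ nor on its sign, hence the test is UMP over all $\mu\neq0$, i.e.\ best unbiased. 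Finally I would verify $\hat\varphi$ is unbiased: symmetry of $\tau$'s law in $\mu$ versus $-\mu$ gives $\beta^*(\mu)=\beta^*(-\mu)$, and $\mathbb{P}_\mu(\tau\le c)\ge\mathbb{P}_0(\tau\le c)$ for all $\mu$ (the drifted process hits the boundary stochastically earlier, which can be shown by a coupling or by monotonicity of the hitting-time density in $|\mu|$), so $\beta^*(\mu)\ge\alpha$.

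The main obstacle I expect is the monotonicity claim that $F_\mu(c)$ is increasing in $|\mu|$ for each fixed $c$ — equivalently, that a nonzero drift first-order stochastically decreases the two-sided exit time $\tau$. This is intuitively clear but needs a genuine argument: one route is to differentiate the known series/integral representation of the density of the two-sided exit time of Brownian motion with drift and check the sign; a cleaner route is a coupling/Girsanov argument showing the likelihood ratio $\exp\{\sigma^{-1}\mu\gamma s-\tfrac{\mu^2}{2\sigma}\tau\}$ is, conditional on the sign $s$, a decreasing function of $\tau$, so that tilting by it shifts mass toward small $\tau$ in the likelihood-ratio-ordering sense, which in particular implies the FOSD statement and simultaneously re-derives the Neyman–Pearson/UMP-unbiased form of the optimal region. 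I would lean on this second route since it kills two birds: it both establishes unbiasedness of $\hat\varphi$ and identifies $\{\tau\le F_0^{-1}(\alpha)\}$ as the best unbiased rejection region, matching $\tilde\varphi^*$ and hence attaining the bound of Proposition \ref{Prop: Non-parametric}. A secondary, more routine obstacle is checking the measurability/continuity hypotheses needed to run the Section \ref{subsec:Attaining-the-bound} attainment argument, i.e.\ that $\tau(\cdot)=\inf\{t:|\cdot(t)|\ge\gamma\}$ together with the indicator $\mathbb{I}\{\tau\le F_0^{-1}(\alpha)\}$ is a.s.\ continuous at the sample paths of $x(\cdot)$ under $\mathbb{P}_\mu$; this holds because, with probability one, Brownian motion with drift strictly crosses level $\gamma$, so the hitting time is a continuous functional at such paths and $F_0$ is continuous at $F_0^{-1}(\alpha)$.
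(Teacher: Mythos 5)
Your proposal is correct and, in its operative form (the ``cleaner route'' you end up preferring), is essentially the paper's own argument: reduce the limit-experiment data to $(\tau,\mathrm{sgn}(x(\tau)))$, observe that the Girsanov likelihood ratio is, for each fixed exit sign, a decreasing function of $\tau$, impose the balance condition from Proposition \ref{Prop: Unbiased}, conclude via Neyman--Pearson that the best unbiased region is $\{\tau\le F_{0}^{-1}(\alpha)\}$ independently of $\mu$, and transfer back by continuous mapping. Two points of comparison are worth recording. First, your primary route via \citet[Section 4.2]{lehmann2005testing} would not go through off the shelf: $(\mathrm{sgn}(x(\tau)),\tau)$ forms a \emph{curved} exponential family in the scalar $\mu$ (the two natural parameters are $\sigma^{-1}\gamma\mu$ and $-\mu^{2}/(2\sigma^{2})$), not a full-rank family with a genuine nuisance parameter, so the classical UMPU machinery does not apply directly; the paper instead writes down the constrained program (maximize power at a fixed $\mu$ subject to size and $\mathbb{E}_{0}[\delta\varphi]=0$) and solves it by hand, using the independence of the exit side and the exit time (cited from \citet{fudenberg2018speed}) to factorize the problem across the two signs --- your tilting computation implicitly re-derives that independence, since the likelihood ratio factors into a function of the sign times a function of $\tau$. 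Second, the monotonicity/FOSD step you identify as the main obstacle can be bypassed entirely: the paper simply notes that the trivial test $\varphi\equiv\alpha$ is feasible for the constrained program, so the optimizer has power at least $\alpha$ at every $\mu$, and since the optimizer turns out not to depend on $\mu$ it is automatically unbiased; no coupling or hitting-time density computation is needed (your MLR/stochastic-ordering argument is valid but does extra work).
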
 

\subsubsection{Numerical Illustration}

To illustrate the finite sample performance of this test, we ran Monte-Carlo
simulations with $Y_{i}^{(1)}=\delta+\epsilon_{i}^{(1)}$ and $Y_{i}^{(0)}=\epsilon_{i}^{(0)}$
where $\epsilon_{i}^{(1)},\epsilon_{i}^{(0)}\sim\sqrt{3}\times\textrm{Uniform}[-1,1]$.
The threshold, $\gamma$, was taken to be $0.536$ (this corresponds
to a sampling cost of $c=1$ for each observation in the costly sampling
framework), and the treatments were sampled in equal proportions $(\pi=1/2$).
Figure \ref{fig:Test}, Panel A plots the size of the test for different
values of $n$ under the nominal $5\%$ significance level. Even for
relatively small values of $n$, the size is close to nominal. We
also plot the size of the standard two-sample test for comparison;
due to the adaptive stopping rule, this test is not valid and its
actual size is close to 9\%. Panel B of the same figure plots the
finite sample power functions for $\hat{\varphi}$ under different
$n$. The power is computed against local alternatives; the reward
gap in the figure is the scaled one, $\mu=\sqrt{n}\vert\delta\vert$.
But for any given $n$, the actual difference in mean outcomes is
$\mu/\sqrt{n}$. The same plot also displays the asymptotic power
envelope for unbiased tests, obtained as the power function of the
best unbiased test, $\varphi{}^{*}=\mathbb{I}\{\tau\le F_{0}^{-1}(\alpha)\}$,
in the limit experiment. Even for small samples, the power function
of $\hat{\varphi}$ is close to the asymptotic upper bound. 
\begin{figure}
\includegraphics[height=5cm]{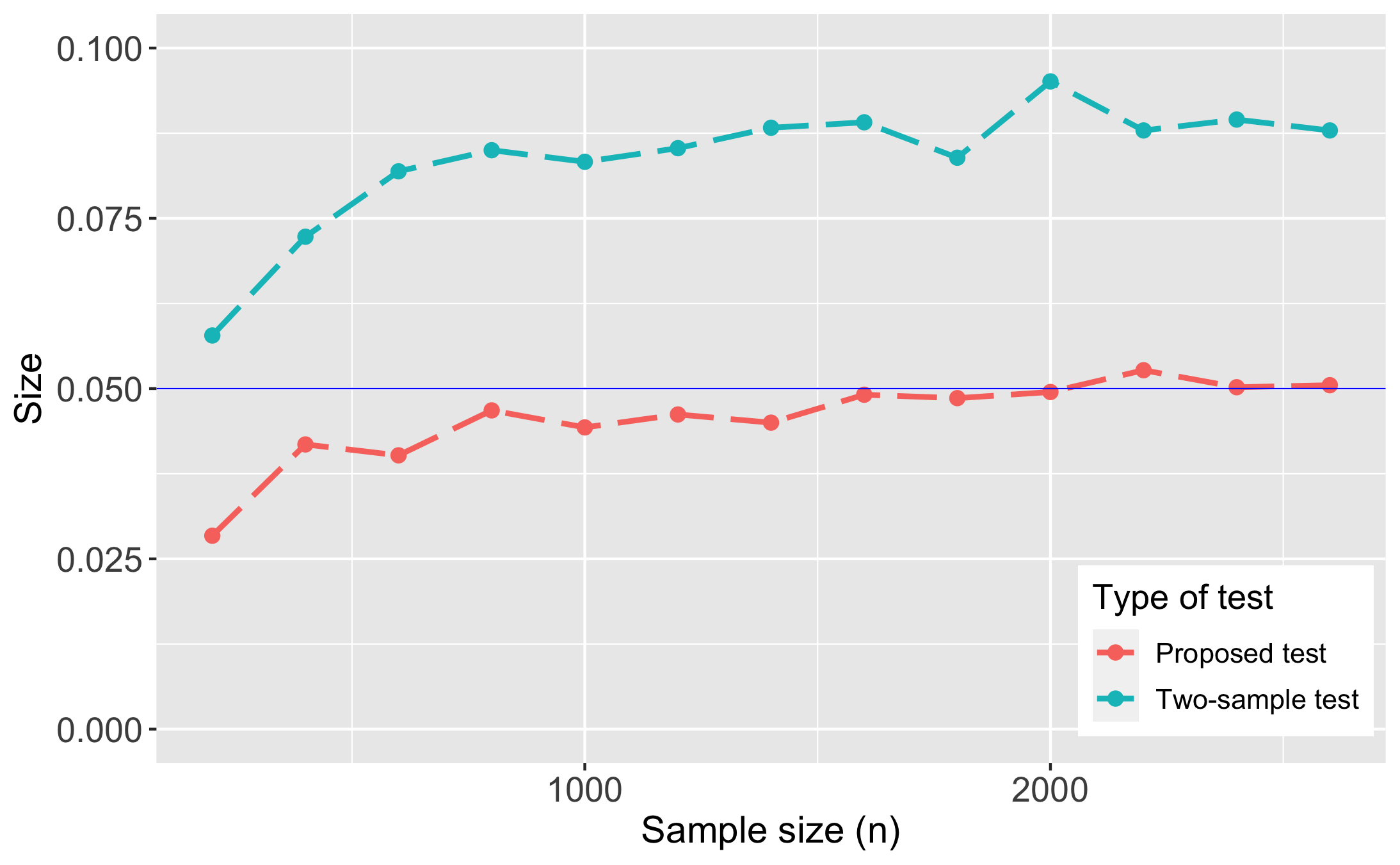}~\includegraphics[height=5cm]{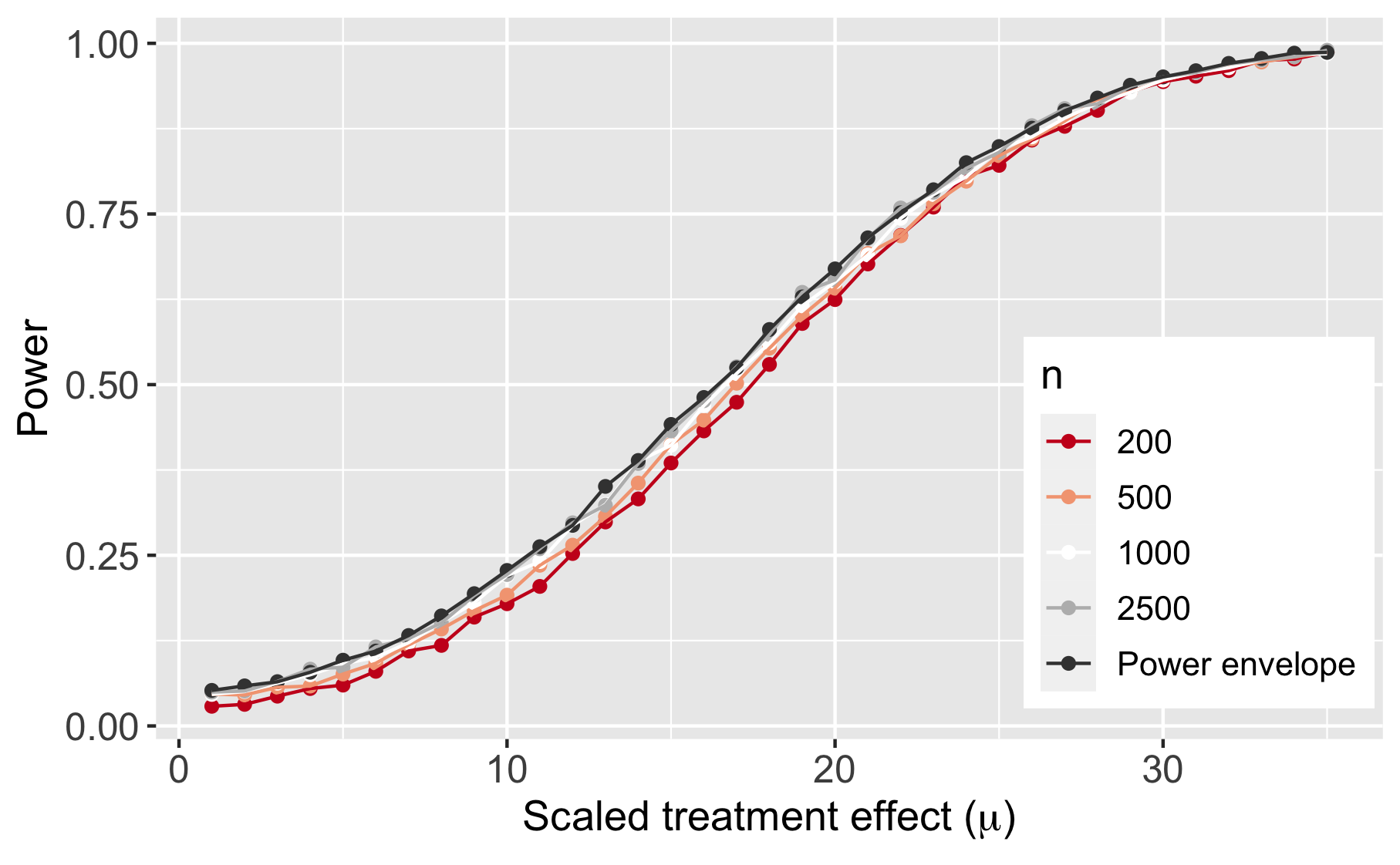}~~

\begin{tabular}{>{\centering}p{7.5cm}>{\centering}p{7.5cm}}
{\scriptsize{}A: Size} & {\scriptsize{}$\ $B: Power function}\tabularnewline
\end{tabular}

\vspace{0.5em}
\begin{raggedright}
{\scriptsize{}Note: Panel A plots the size of $\hat{\varphi}$ along
with that of the standard two-sample test at the nominal $5$\% level
(solid blue line) when the errors are drawn from a $\sqrt{3}\times\textrm{Uniform}[-1,1]$
distribution for each treatment. Panel B plots the finite sample power
envelopes of $\hat{\varphi}$ under different $n$, along with asymptotic
power envelope for unbiased tests. The scaled treatment effect  is
defined as $\mu=\sqrt{n}\vert\delta\vert$. }{\scriptsize\par}
\par\end{raggedright}
\caption{Finite sample performance of $\hat{\varphi}$ under horizontal boundary
designs\label{fig:Test}}
\end{figure}

\subsection{Group sequential experiments\label{subsec:Group-Sequential-Experiments}}

In this application, we suggest methods for inference on treatment
effects following group sequential experiments. To simplify matters,
suppose that the researchers assign the two treatments with equal
probability in each stage. Let $\mu_{1},\mu_{0}$ denote the expectation
of outcomes from the two treatments. Also, take $x_{n}(\cdot)$ to
be the scaled difference in sample means, i.e., it is the quantity
defined in (\ref{eq:two-sample tests sufficient statistic}) with
$\psi_{1}(Y)=\psi_{0}(Y)=Y$. While there are a number of different
group sequential designs, see, e.g., \citet{wassmer2016group} for
a textbook overview, the general construction is that the experiment
is terminated at the end of stage $t$ if $x_{n}(t)$ is outside some
interval $\mathcal{I}_{t}$. The stopping time $\hat{\tau}$ thus
satisfies $\{\hat{\tau}>t-1\}\equiv\cap_{l=1}^{t-1}\left\{ x_{n}(l)\in\mathcal{I}_{l}\right\} $.
The intervals $\{\mathcal{I}_{t}\}_{t=1}^{T}$ are pre-determined
and chosen by balancing various ethical, cost and power criteria.
We take them as given. 

We are interested in testing the drifting hypotheses $H_{0}:\mu_{1}-\mu_{0}=\bar{\mu}/\sqrt{n}$
vs $H_{1}:\mu_{1}-\mu_{0}>\bar{\mu}/\sqrt{n}$ at some spending level
$\bm{\alpha}$ that is chosen by experimenter.\footnote{In most examples of group sequential designs, the intervals $\mathcal{I}_{t}$
are themselves chosen to maximize power under some $\bar{\bm{\alpha}}$-spending
criterion, given the null of $\mu_{1}=\mu_{0}$. In general, our $\bm{\alpha}$
here may be different from $\bar{\bm{\alpha}}$. Furthermore, we are
interested in conducting inference on general null hypotheses of the
form $H_{0}:\mu_{1}-\mu_{0}=\bar{\mu}/\sqrt{n}$; these are different
from the null hypothesis of no average treatment effect used to motivate
the group sequential design.} We can then invert these tests to obtain one-sided confidence intervals
for the treatment effect $\mu_{1}-\mu_{0}$. The limit experiment
in this setting consists of observing $x(t)\sim\sigma^{-1}\mu t+W(t)$,
where $\mu:=\mu_{1}-\mu_{0}$, along with a discrete stopping time
$\tau\in\{1,\dots,T\}$ such that $\{\tau>t-1\}$ if and only if $x(l)\in\mathcal{I}_{l}$
for all $l=1,\dots,t-1$. Let $\mathbb{P}_{\mu}(\cdot)$ denote the
induced probability measure over the sample paths of $x(\cdot)$ between
$0$ and $T$, and $\mathbb{E}_{\mu}[\cdot]$ its corresponding expectation.
In view of the results in Section \ref{subsec:Conditional-inference},
the optimal level-$\bm{\alpha}$ test $\varphi^{*}(\cdot)$ of $H_{0}:\mu=\bar{\mu}$
vs $H_{1}:\mu>\bar{\mu}$ in the limit experiment is given by 
\begin{equation}
\varphi^{*}(\tau,x(\tau))=\begin{cases}
1 & \textrm{if }\mathbb{P}_{\bar{\mu}}(\tau=t)\le\alpha_{t}\\
\mathbb{I}\left\{ x(t)\ge\gamma(t)\right\}  & \textrm{if }\mathbb{P}_{\bar{\mu}}(\tau=t)>\alpha_{t},
\end{cases}\label{eq:alpha-spending optimal tests}
\end{equation}
where $\gamma(t)$ is chosen such that $\mathbb{E}_{\bar{\mu}}[\varphi^{*}(\tau,x(\tau))\vert\tau=t]=\alpha_{t}/\mathbb{P}_{\bar{\mu}}(\tau=t)$.

A finite sample version, $\hat{\varphi}$, of this test can be constructed
by replacing $\tau,x(\tau)$ in $\varphi^{*}$ with $\hat{\tau},x_{n}(\hat{\tau})$.
The resulting test would be asymptotically optimal under a suitable
non-parametric version of the $\bm{\alpha}$-spending requirement.
We refer to Appendix \ref{subsec:Supporting-information-for-GST}
for the details and for the proof that $\hat{\varphi}$ is asymptotically
optimal, in the sense that it attains the power of $\varphi^{*}$
in the limit experiment. A two-sided test for $H_{0}:\mu_{1}-\mu_{0}=\bar{\mu}/\sqrt{n}$
vs $H_{1}:\mu_{1}-\mu_{0}\neq\bar{\mu}/\sqrt{n}$ can be similarly
constructed by imposing a conditional unbiasedness restriction as
in Section \ref{subsec:Conditionally-unbiased-tests}. 

\subsubsection{Numerical Illustration}

To illustrate the methodology, consider a group sequential trial based
on the widely-used design of \citet{o1979multiple}, with $T=2$ stages.
This corresponds to setting $\mathcal{I}_{1}=[-2.797,2.797]$. We
would like to test $H_{0}:\mu_{1}-\mu_{0}=\bar{\mu}/\sqrt{n}$ vs
$H_{1}:\mu_{1}-\mu_{0}>\bar{\mu}/\sqrt{n}$ at the spending level
$(\alpha/\mathbb{P}_{\bar{\mu}}(\tau=1),\alpha/\mathbb{P}_{\bar{\mu}}(\tau=2))$,
equivalent to a conditional size constraint, $\mathbb{P}_{\bar{\mu}}(\varphi=1\vert\tau=t)=\alpha\ \forall\ t$.
Figure \ref{fig:GST} Panel A plots the thresholds, $(\gamma(1),\gamma(2))$,
for this test under $\alpha=0.05$ and $\sigma_{1}=\sigma_{0}=1$.
Unsurprisingly, the thresholds are increasing in $\bar{\mu}$ , but
it is interesting to observe that they cross at some $\bar{\mu}$. 

To describe the finite sample performance of this test, we ran Monte-Carlo
simulations with $Y_{i}^{(1)}=\bar{\mu}/\sqrt{n}+\epsilon_{i}^{(1)}$
and $Y_{i}^{(0)}=\epsilon_{i}^{(0)}$ where $\epsilon_{i}^{(1)},\epsilon_{i}^{(0)}\sim\sqrt{3}\times\textrm{Uniform}[-1,1]$.
The treatments were sampled in equal proportions $(\pi=1/2$). Since
$\sigma_{1},\sigma_{0}$ are unknown in practice, we estimate them
using data from the first stage. Figure \ref{fig:GST}, Panel B plots
the overall size of the test (which is the sum of the $\alpha$-spending
values at each stage) for different values of $n$ and $\bar{\mu}$
under the nominal $\alpha$-spending level of $(0.05/\mathbb{P}_{\bar{\mu}}(\tau=1),0.05/\mathbb{P}_{\bar{\mu}}(\tau=2))$.
We see that the asymptotic approximation worsens for larger values
of $\bar{\mu}$, but overall, the size is close to nominal even for
relatively small values of $n$. 

\begin{figure}
\includegraphics[height=5cm]{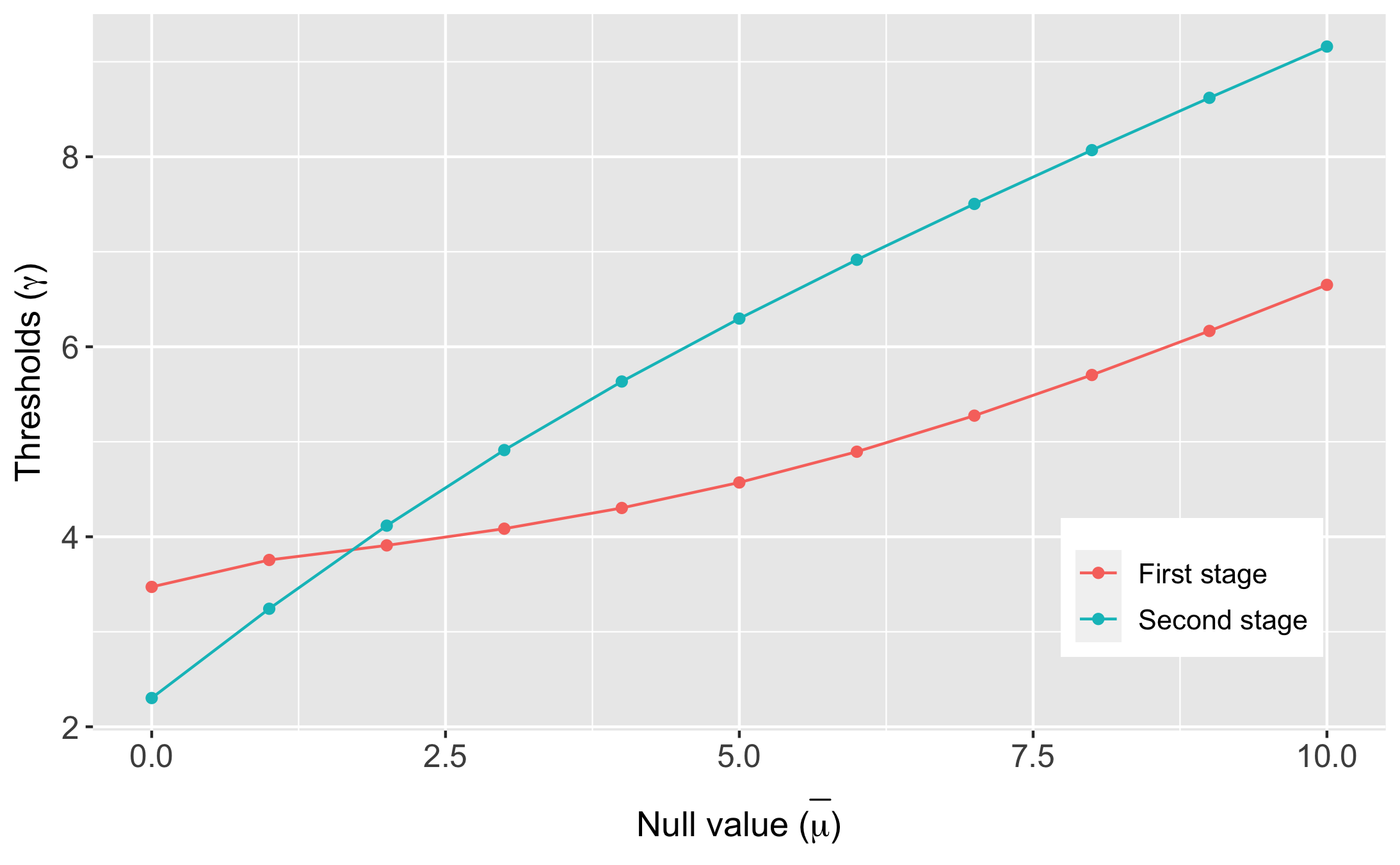}~\includegraphics[height=5cm]{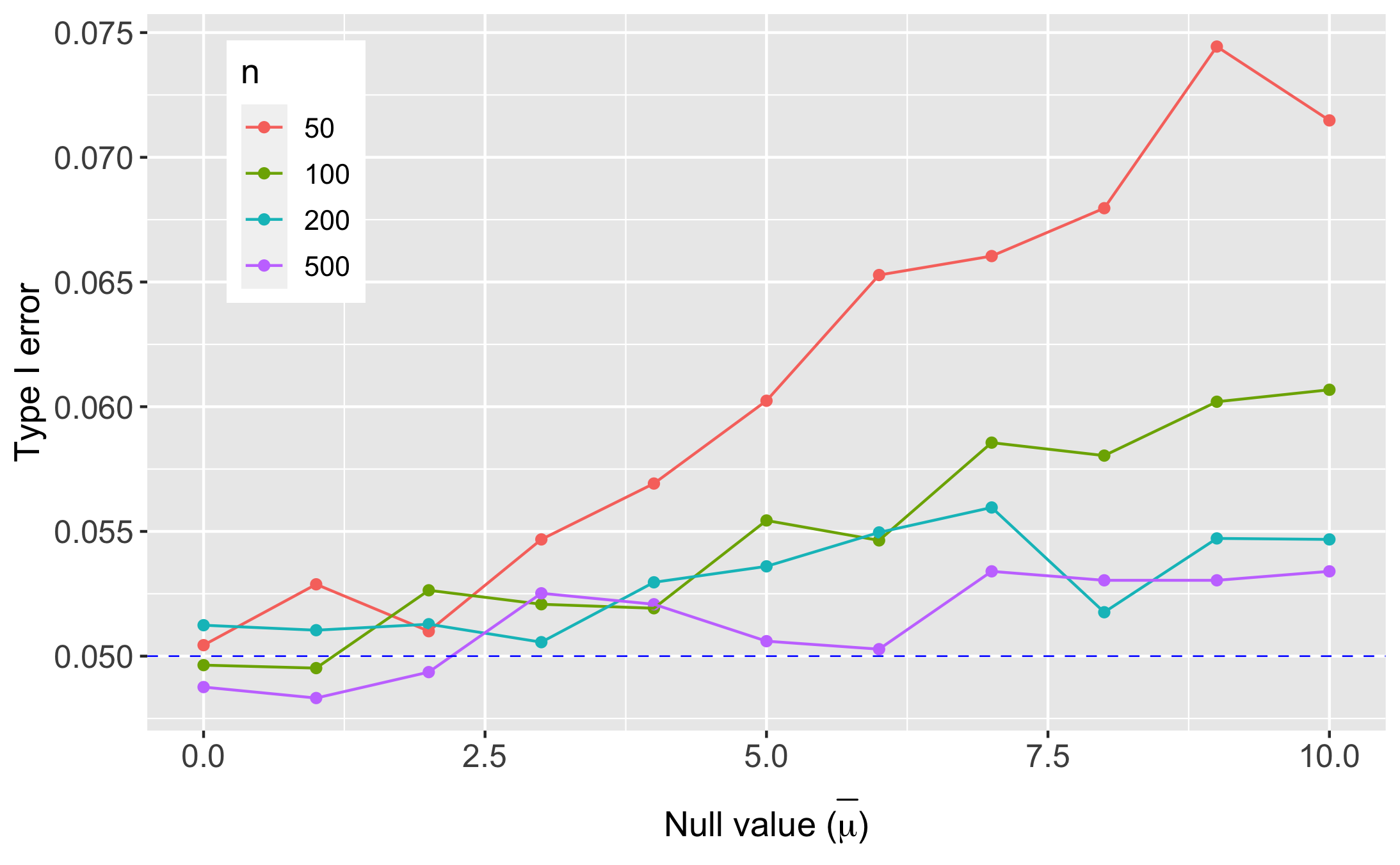}~~

\begin{tabular}{>{\centering}p{7.5cm}>{\centering}p{7.5cm}}
{\scriptsize{}A: Critical values} & {\scriptsize{}$\ $B: Finite sample size}\tabularnewline
\end{tabular}

\vspace{0.5em}
\begin{raggedright}
{\scriptsize{}Note: Panel A plots the threshold values in each stage
for the optimal, one-sided, level-$\bm{\alpha}$ test, (\ref{eq:alpha-spending optimal tests}),
at the $(0.05/\mathbb{P}_{\bar{\mu}}(\tau=1),0.05/\mathbb{P}_{\bar{\mu}}(\tau=2))$
spending level. Panel B plots the overall type-I error in finite samples
for different values of $n$ and null values, $\bar{\mu}$, when the
errors are drawn from a $\sqrt{3}\times\textrm{Uniform}[-1,1]$ distribution
for each treatment. }{\scriptsize\par}
\par\end{raggedright}
\caption{Testing in group sequential experiments\label{fig:GST}}
\end{figure}

\subsection{Bandit experiments}

Here, we describe inferential procedures for the batched Thompson-sampling
algorithm. For illustration, we employ $K=2$ treatments and $J=10$
batches. Let $(\bar{\mu}_{1},\bar{\mu}_{0})$ and $(\sigma_{1}^{2},\sigma_{0}^{2})$
denote the population means and variances for each treatment. For
simplicity, we take $\sigma_{1}^{2}=\sigma_{0}^{2}=1$. The limit
experiment can be described as follows: Suppose the decision maker
(DM) employs the sampling rule $\pi_{j}^{(a)}$ in batch $j$. The
DM then observes $Z_{j}^{(a)}\sim\mathcal{N}\left(\bar{\mu}_{a}\pi_{a},\pi_{a}\right)$
for $a\in\{0,1\}$ and updates the state variables $x_{a},q_{a}$
(which are initially set to $0$) as 
\[
x_{a}\leftarrow x_{a}+Z_{j}^{(a)},\quad q_{a}\leftarrow q_{a}+\pi_{a}.
\]
Under an under-smoothed prior, suggested by \citet{wager2021diffusion},
the Thompson sampling rule in batch $j+1$ is
\[
\pi_{j+1}^{(1)}=\Phi\left(\frac{q_{1}^{-1}x_{1}-q_{0}^{-1}x_{0}}{\sqrt{j/q_{1}q_{0}}}\right).
\]
We set $\pi_{1}^{(a)}=1/2$ for first batch. In what follows, we let
$\mu_{a}:=J\bar{\mu}_{a}$. We are interested in testing $H_{0}:(\mu_{1},\mu_{0})=(0,0)$.

Figure \ref{fig:Power-envelope-for}, Panel A plots the asymptotic
power envelope for testing $H_{0}:(\mu_{1},\mu_{2})=(0,0)$. Clearly,
the envelope is not symmetric; distinguishing $(a,0)$ from $(0,0)$
is easier than distinguishing $(-a,0)$ from $(0,0)$ for any $a>0$.
This is because of the asymmetry in treatment allocation under Thompson
sampling; under $(-a,0)$, treatment 1 is sampled more often than
treatment $0$ but the data from treatment $1$ is uninformative for
distinguishing $(-a,0)$ from $(0,0)$. 

\begin{figure}
\includegraphics[height=7.5cm]{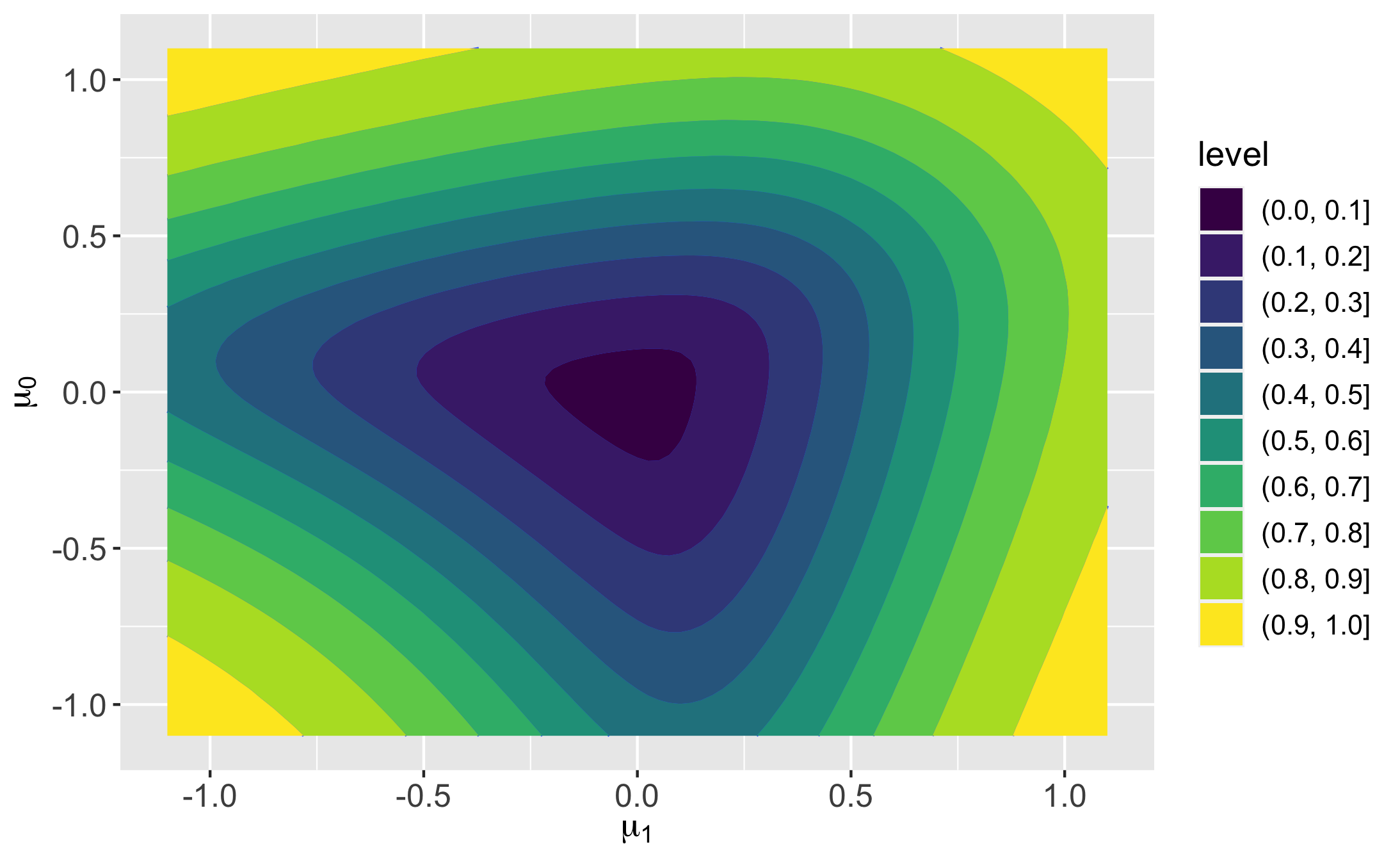}

\vspace{0.5em}
\begin{raggedright}
{\scriptsize{}Note: The figure plots the asymptotic power envelope
for any test of $H_{0}:(\mu,\mu)=(0,0)$ against different values
$(\mu_{1},\mu_{0})$ under the alternative. }{\scriptsize\par}
\par\end{raggedright}
\caption{Power envelope for Thompson-sampling with $10$ batches\label{fig:Power-envelope-for}}
\end{figure}

\subsubsection{Numerical illustration}

To determine the accuracy of our asymptotic approximations, we ran
Monte-Carlo simulations with $Y_{i}^{(a)}=\mu_{a}+\epsilon_{i}^{(a)}$
where $\epsilon_{i}^{(1)},\epsilon_{i}^{(0)}\sim\sqrt{3}\times\textrm{Uniform}[-1,1]$.
Figure \ref{fig:TS:Monte-Carlo}, Panel A plots the finite sample
performance of the Neyman-Pearson tests in the limit experiment for
testing $H_{0}:(\mu_{1},\mu_{0})=(0,0)$ vs $H_{1}:(\mu_{1},\mu_{0})=(\mu,\mu)$
under various values of $\mu$ (due to symmetry, we only report the
results for positive $\mu$). Panel B repeats the same calculation,
but against alternatives of the form $H_{1}:(\mu,0)$. As noted earlier,
power is higher here for $\mu>0$ as opposed to $\mu<0$. Both plots
show that the asymptotic approximation is quite accurate even for
$n$ as small as $20$ (note that the number of batches is $10$,
so this corresponds to $200$ observations overall). The approximation
is somewhat worse for testing $\mu<0$; this is because Thompson-sampling
allocates much fewer units to treatment 0 in this instance, even though
it is only data from this treatment that is informative for distinguishing
the two hypotheses. 

\begin{figure}
\includegraphics[height=5cm]{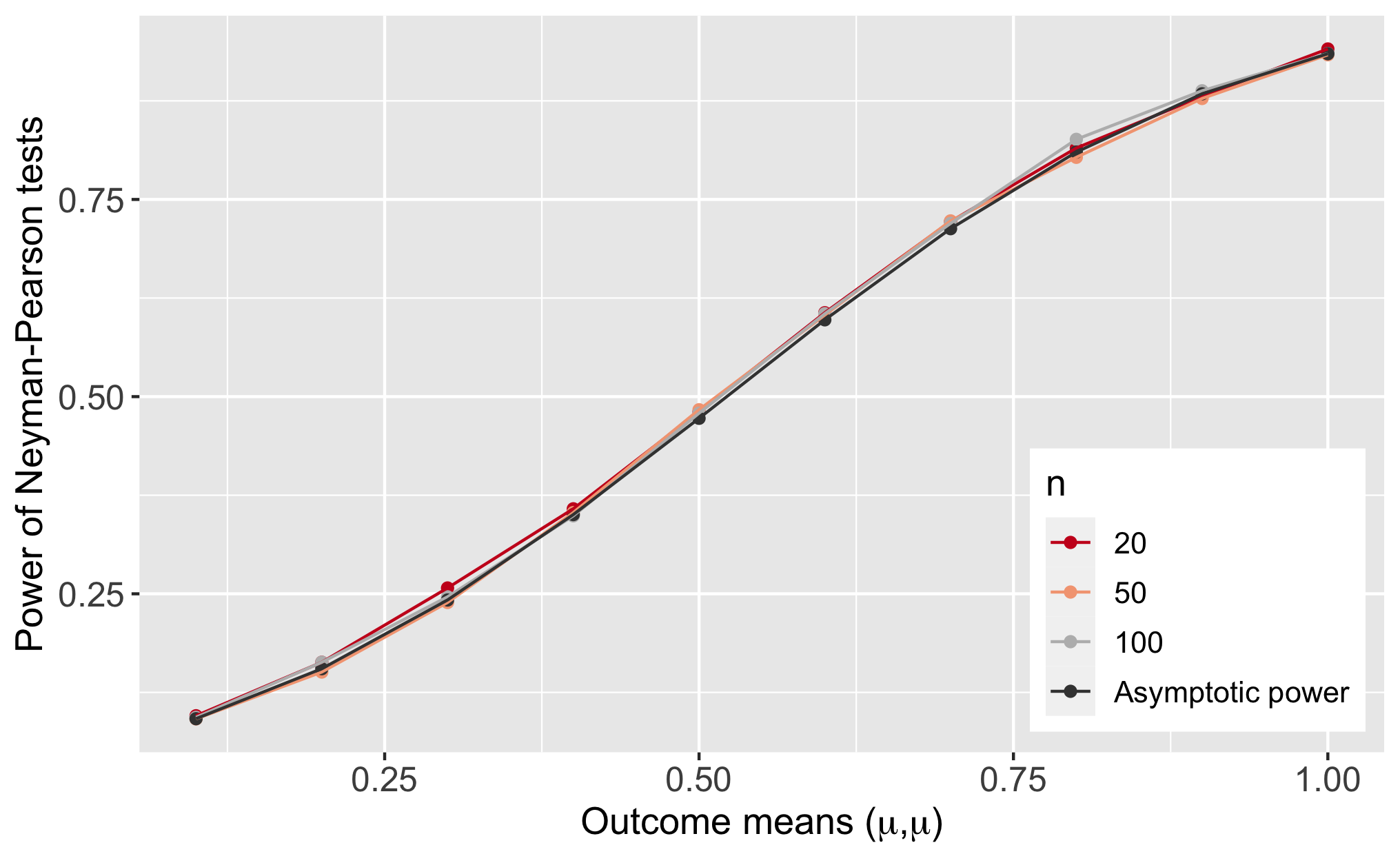}~\includegraphics[height=5cm]{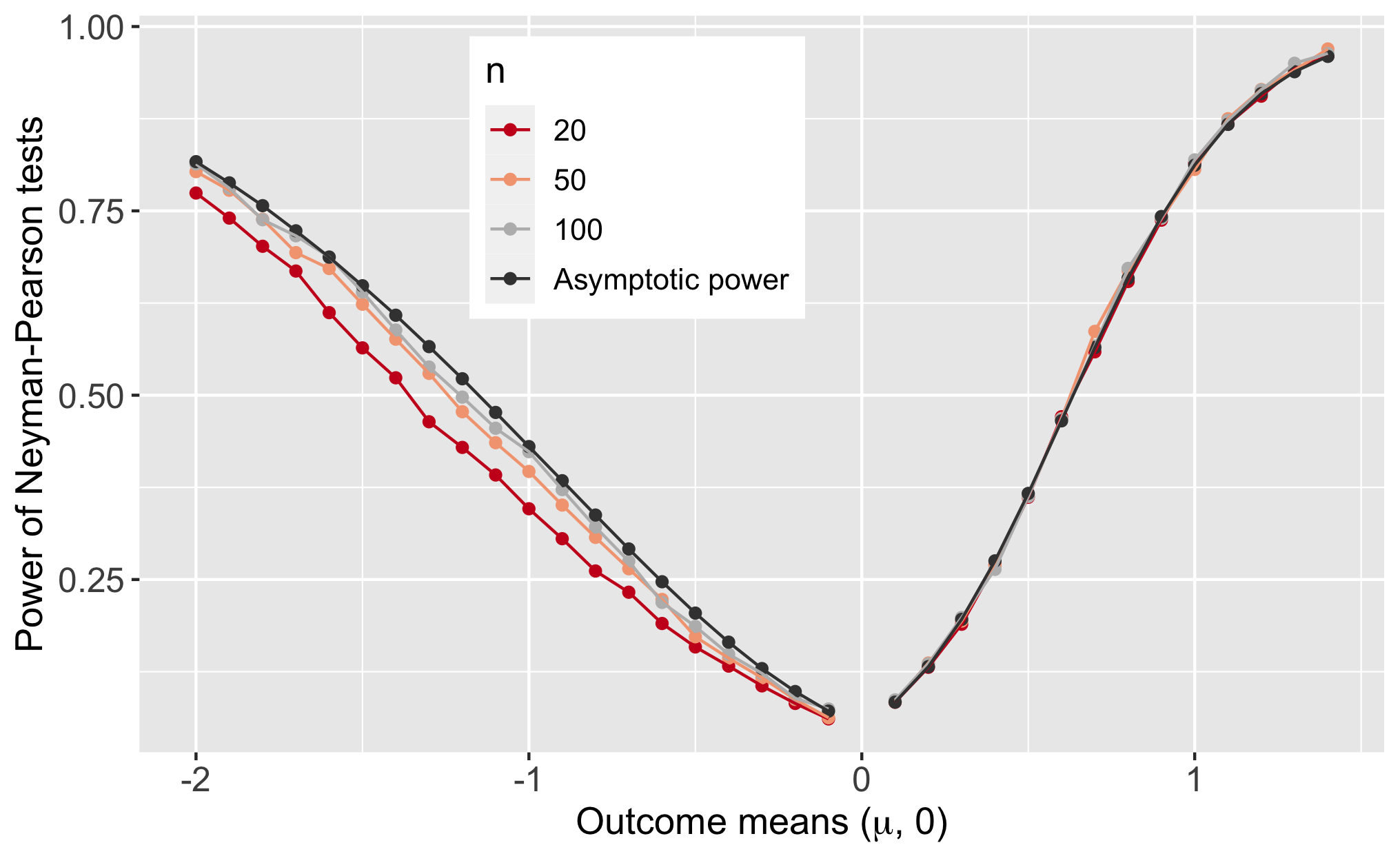}~~

\begin{tabular}{>{\centering}p{7.5cm}>{\centering}p{7.5cm}}
{\scriptsize{}A: Power against $H_{1}:(\mu,\mu)$} & {\scriptsize{}$\ $B: Power against $H_{1}:(\mu,0)$}\tabularnewline
\end{tabular}

\vspace{0.5em}
\begin{raggedright}
{\scriptsize{}Note: Panel A plots the finite sample power of Neyman-Pearson
tests at the nominal $5$\% level (solid blue line) for testing $H_{0}:(\mu_{1},\mu_{0})=(0,0)$
against $H_{1}:(\mu_{1},\mu_{0})=(\mu,\mu)$ when the errors are drawn
from a $\sqrt{3}\times\textrm{Uniform}[-1,1]$ distribution for each
treatment. Panel B repeats the same calculation for alternatives of
the form $H_{1}:(\mu_{1},\mu_{0})=(\mu,0)$. Both panels also display
the asymptotic power envelope.}{\scriptsize\par}
\par\end{raggedright}
\caption{Finite sample performance of Neyman-Pearson tests in bandit experiments\label{fig:TS:Monte-Carlo}}
\end{figure}

\section{Conclusion}

Conducting inference after sequential experiments is a challenging
task. However, significant progress can be made by analyzing the optimal
inference problem under an appropriate limit experiment. We showed
that the data from any sequential experiment can be condensed into
a finite number of sufficient statistics, while still maintaining
the power of tests. Furthermore, we were able to establish uniquely
optimal tests under reasonable constraints such as unbiasedness and
$\bm{\alpha}$-spending, in both parametric and non-parametric regimes.
Taken together, these findings offer a comprehensive framework for
conducting optimal inference following sequential experiments.

Despite these results, there are still several avenues for future
research. While we believe that our results for experiments with adaptive
sampling rules apply without batching, this needs be formally verified.
Our characterization of uniquely optimal tests is also limited in
this context, as $\bm{\alpha}$-spending restrictions are not feasible.
Therefore, exploring other types of testing considerations such as
invariance or conditional inference may be worthwhile. We believe
that the techniques developed in this paper will prove useful for
analyzing these other types of tests.

\bibliographystyle{IEEEtranSN}
\bibliography{Optimal_sequential_tests}

\newpage{}

\appendix

\section{Proofs\label{sec:Appendix:A}}

\subsection{Proof of Theorem \ref{Thm: ART}}

To prove the first claim, observe that both $\hat{\tau}$ and $x_{n}(\hat{\tau})$
are tight under $P_{nT,0}$: the former by Assumption 2, and the latter
by the fact $\max_{t\le T}x_{n}(t)$ is tight (by the continuous mapping
theorem it converges to the tight limit $\max_{t}x(t)$ under $P_{nT,0}$).
Hence, the joint $(\hat{\tau},x_{n}(\hat{\tau}))$ is a also tight,
and by Prohorov's theorem, converges in distribution under sub-sequences.
The first part of the theorem then follows from \citet[Theorem 1]{LeCam1979}. 

To prove the second claim, denote ${\bf y}_{nt}=(Y_{1},\dots,Y_{nt})$.
Defining
\[
\ln\frac{dP_{nt,h}}{dP_{nt,0}}({\bf y}_{nt})=\sum_{i=1}^{\left\lfloor nt\right\rfloor }\ln\frac{dp_{\theta_{0}+h/\sqrt{n}}}{dp_{\theta_{0}}}(Y_{i}),
\]
we have by the SLAN property, (\ref{eq:SLAN property}), and Assumption
1(i) that
\[
\ln\frac{dP_{n\hat{\tau},h}}{dP_{n\hat{\tau},0}}({\bf y}_{n\hat{\tau}})=h^{\intercal}I^{1/2}x_{n}(\hat{\tau})-\frac{\hat{\tau}}{2}h^{\intercal}Ih+o_{P_{nT,0}}(1).
\]
Combining the above with the first part of the theorem gives 
\begin{equation}
\ln\frac{dP_{n\hat{\tau},h}}{dP_{n\hat{\tau},0}}({\bf y}_{n\hat{\tau}})\xrightarrow[P_{nT,0}]{d}h^{\intercal}I^{1/2}x(\tau)-\frac{\tau}{2}h^{\intercal}Ih,\label{pf:Thm:ART:eq:1}
\end{equation}
where $x(\cdot)$ has the same distribution as $d$-dimensional Brownian
motion. 

Now, $\varphi_{n}$ is tight since $\varphi_{n}\in[0,1]$. Together
with (\ref{pf:Thm:ART:eq:1}), this implies the joint $\left(\varphi_{n},\ln\frac{dP_{n\hat{\tau},h}}{dP_{n\hat{\tau},0}}({\bf y}_{n\hat{\tau}})\right)$
is also tight. Hence, by Prohorov's theorem, given any sequence $\{n_{j}\}$,
there exists a further sub-sequence $\{n_{j_{m}}\}$ - represented
as $\{n\}$ without loss of generality - such that 
\begin{equation}
\left(\begin{array}{c}
\varphi_{n}\\
\frac{dP_{n\hat{\tau}.,h}}{dP_{n\hat{\tau},0}}\left({\bf y}_{n\hat{\tau}}\right)
\end{array}\right)\xrightarrow[P_{nT,0}]{d}\left(\begin{array}{c}
\bar{\varphi}\\
V
\end{array}\right);\quad V\sim\exp\left\{ h^{\intercal}I^{1/2}x(\tau)-\frac{\tau}{2}h^{\intercal}Ih\right\} ,\label{eq:pf:Thm1:weak convergence}
\end{equation}
where $\bar{\varphi}\in[0,1]$. It is a well known property of Brownian
motion that $M(t):=\exp\left\{ h^{\intercal}I^{1/2}x(t)-\frac{t}{2}h^{\intercal}Ih\right\} $
is a martingale with respect to the filtration $\mathcal{F}_{t}$.
Since $\tau$ is an $\mathcal{F}_{t}$-adapted stopping time, the
optional stopping theorem then implies $E[V]\equiv E[M(\tau)]=E[M(0)]=1$. 

We now claim that 
\begin{equation}
\varphi_{n}\xrightarrow[P_{nT,h}]{d}L;\ \textrm{where }L(B):=E[\mathbb{I}\{\bar{\varphi}\in B\}V]\ \forall\ B\in\mathcal{B}(\mathbb{R}).\label{eq:pf:Thm1:weak convergence 2}
\end{equation}
It is clear from $V\ge0$ and $E[V]=1$ that $L(\cdot)$ is a probability
measure, and that for every measurable function $f:\mathbb{R}\to\mathbb{R}$,
$\int fdL=E[f(\bar{\varphi})V]$. Furthermore, for any $f(\cdot)$
lower-semicontinuous and non-negative, 
\begin{align*}
\lim\inf\mathbb{E}_{nT,h}[f(\varphi_{n})] & \ge\lim\inf\mathbb{E}_{nT,0}\left[f(\varphi_{n})\frac{dP_{nT,h}}{dP_{nT,0}}\right]\\
 & =\lim\inf\mathbb{E}_{nT,0}\left[f(\varphi_{n})\frac{dP_{n\hat{\tau},h}}{dP_{n\hat{\tau},0}}\right]\ge E[f(\bar{\varphi})V],
\end{align*}
where the equality follows from the law of iterated expectations since
$\varphi_{n}$ is a function only of ${\bf y}_{n\hat{\tau}}$ and
$dP_{n\hat{\tau},h}/dP_{n\hat{\tau},0}$ is a martingale under $P_{nT,0}$;
and the last inequality follows from applying the portmanteau lemma
on (\ref{eq:pf:Thm1:weak convergence}). Finally, applying the portmanteau
lemma again, in the converse direction, gives (\ref{eq:pf:Thm1:weak convergence 2}). 

Since $\varphi_{n}$ is bounded, (\ref{eq:pf:Thm1:weak convergence 2})
implies
\begin{equation}
\lim_{n\to\infty}\beta_{n}(h):=\lim_{n\to\infty}\mathbb{E}_{nT,h}\left[\varphi_{n}\right]=E\left[\bar{\varphi}e^{h^{\intercal}I^{1/2}x(\tau)-\frac{\tau}{2}h^{\intercal}Ih}\right].\label{eq:pf:Thm:ART:2}
\end{equation}
Define $\varphi(\tau,x(\tau)):=E[\bar{\varphi}\vert\tau,x(\tau)]$;
this is a test statistic since $\varphi\in[0,1]$. The right hand
side of (\ref{eq:pf:Thm:ART:2}) then becomes 
\[
E\left[\varphi(\tau,x(\tau))e^{h^{\intercal}I^{1/2}x(\tau)-\frac{\tau}{2}h^{\intercal}Ih}\right].
\]
But by the Girsanov theorem, this is just the expectation, $\mathbb{E}_{h}[\varphi(\tau,x(\tau))]$,
of $\varphi(\tau,x(\tau))$ when $x(t)$ is distributed as a Gaussian
process with drift $I^{1/2}h$, i.e., when $x(t)\sim I^{1/2}ht+W(t)$. 

\subsection{Proof of Proposition \ref{Prop: Linear combinations}}

We start by proving the first claim. Denote $H_{0}\equiv\{h:a^{\intercal}h=0\}$
and $H_{1}\equiv\{h:a^{\intercal}h=c\}$. Let $\mathbb{P}_{h}$ denote
the induced probability measure over the sample paths generated by
$x(t)\sim I^{1/2}ht+W(t)$ between $t\in[0,T]$. As before, $\mathcal{F}_{t}$
denotes the filtration generated by $\{U,x(s):s\le t\}$. Given any
$h_{1}\in H_{1}$, define $h_{0}=h_{1}-(a^{\intercal}h_{1}/a^{\intercal}I^{-1}a)I^{-1}a$.
Note that $a^{\intercal}h_{1}=c$ and $h_{0}\in H_{0}$. Let $\ln\frac{d\mathbb{P}_{h_{1}}}{d\mathbb{P}_{h_{0}}}(\mathcal{F}_{t})$
denote the likelihood ratio between the probabilities induced by the
parameters $h_{1},h_{0}$ over the filtration $\mathcal{F}_{t}$.
By the Girsanov theorem,
\begin{align*}
\ln\frac{d\mathbb{P}_{h_{1}}}{d\mathbb{P}_{h_{0}}}(\mathcal{F}_{\tau}) & =\left(h_{1}^{\intercal}I^{1/2}x(\tau)-\frac{\tau}{2}h_{1}^{\intercal}Ih_{1}\right)-\left(h_{0}^{\intercal}I^{1/2}x(\tau)-\frac{\tau}{2}h_{0}^{\intercal}Ih_{0}\right)\\
 & =\frac{1}{\sigma}c\tilde{x}(\tau)-\frac{c^{2}}{2\sigma^{2}}\tau,
\end{align*}
where $\tilde{x}(t):=\sigma^{-1}a^{\intercal}I^{-1/2}x(t)$. Hence,
an application of the Neyman-Pearson lemma shows that the UMP test
of $H_{0}^{\prime}:h=h_{0}$ vs $H_{1}^{\prime}:h=h_{1}$ is given
by 
\[
\varphi_{c}^{*}=\mathbb{I}\left\{ c\tilde{x}(\tau)-\frac{c^{2}}{2\sigma}\tau\ge\gamma\right\} ,
\]
where $\gamma$ is chosen by the size requirement. Now, for any $h_{0}\in H_{0}$,
\[
\tilde{x}(t)\equiv\sigma^{-1}a^{\intercal}I^{-1/2}x(t)\sim W(t).
\]
Hence, the distribution of the sample paths of $\tilde{x}(\cdot)$
is independent of $h_{0}$ under the null. Combined with the assumption
that $\tau$ is $\tilde{\mathcal{F}}_{t}$-adapted, this implies $\varphi_{c}^{*}$
does not depend on $h_{0}$ and, by extension, $h_{1}$, except through
$c$. Since $h_{1}\in H_{1}$ was arbitrary, we are led to conclude
$\varphi_{c}^{*}$ is UMP more generally for testing $H_{0}:a^{\intercal}h=0$
vs $H_{1}:a^{\intercal}h=c$. 

The second claim is an easy consequence of the first claim and Theorem
\ref{Thm: ART}.

\subsection{Proof of Proposition \ref{Prop: Unbiased}}

By the Girsanov theorem,
\[
\beta(h):=\mathbb{E}_{h}[\varphi]=\mathbb{E}_{0}\left[\varphi(\tau,x(\tau))e^{h^{\intercal}I^{1/2}x(\tau)-\frac{\tau}{2}h^{\intercal}Ih}\right].
\]
It can be verified from the above that $\beta(h)$ is differentiable
around $h=0$. But unbiasedness requires $\mathbb{E}_{h}[\varphi]\ge\alpha$
for all $h$ and $\mathbb{E}_{0}[\varphi]=\alpha$. This is only possible
if $\beta^{\prime}(0)=0$, i.e., $\mathbb{E}_{0}[x(\tau)\varphi(\tau,x(\tau))]=0$. 

\subsection{Proof of Theorem \ref{Thm: ART-conditional inference}}

Since $\hat{\tau}$ is bounded, it follows by similar arguments as
in the proof of Theorem \ref{Thm: ART} that $\left(\varphi_{n},\hat{\tau},\ln\frac{dP_{n\hat{\tau},h}}{dP_{n\hat{\tau},0}}({\bf y}_{n\hat{\tau}})\right)$
is tight. Consequently, by Prohorov's theorem, given any sequence
$\{n_{j}\}$, there exists a further sub-sequence $\{n_{j_{m}}\}$
- represented as $\{n\}$ without loss of generality - such that 
\begin{equation}
\left(\begin{array}{c}
\varphi_{n}\\
\hat{\tau}\\
\frac{dP_{n\hat{\tau}.,h}}{dP_{n\hat{\tau},0}}\left({\bf y}_{n\hat{\tau}}\right)
\end{array}\right)\xrightarrow[P_{nT,0}]{d}\left(\begin{array}{c}
\bar{\varphi}\\
\tau\\
V
\end{array}\right);\quad V\sim\exp\left\{ h^{\intercal}I^{1/2}x(\tau)-\frac{\tau}{2}h^{\intercal}Ih\right\} .\label{eq:pf:Thm1:weak convergence-2}
\end{equation}
It then follows as in the proof of Theorem \ref{Thm: ART} that 
\begin{equation}
\left(\begin{array}{c}
\varphi_{n}\\
\hat{\tau}
\end{array}\right)\xrightarrow[P_{nT,h}]{d}L;\ \textrm{where }L(B):=E[\mathbb{I}\{(\bar{\varphi},\tau)\in B\}V]\ \forall\ B\in\mathcal{B}(\mathbb{R}^{2}).\label{eq:pf:Thm1:weak convergence 2-1}
\end{equation}
The above in turn implies 
\begin{align}
\lim_{n\to\infty}\mathbb{E}_{nT,h}\left[\varphi_{n}\mathbb{I}\{\hat{\tau}=t\}\right] & =E\left[\bar{\varphi}\mathbb{I}\{\tau=t\}e^{h^{\intercal}I^{1/2}x(\tau)-\frac{\tau}{2}h^{\intercal}Ih}\right],\ \textrm{and }\label{eq:pf:ART-conditional-inference-2}\\
\lim_{n\to\infty}\mathbb{E}_{nT,h}\left[\mathbb{I}\{\hat{\tau}=t\}\right] & =E\left[\mathbb{I}\{\tau=t\}e^{h^{\intercal}I^{1/2}x(\tau)-\frac{\tau}{2}h^{\intercal}Ih}\right].\label{eq:pf:ART-conditional-inference-3}
\end{align}
for every $t\in\{1,2,\dots,T\}$.

Denote $\varphi(\tau,x(\tau))=E[\bar{\varphi}\vert\tau,x(\tau)]$;
this is a level-$\bm{\alpha}$ test, as can be verified by setting
$h=0$ in (\ref{eq:pf:ART-conditional-inference-2}). The right hand
side of (\ref{eq:pf:ART-conditional-inference-2}) then becomes 
\[
E\left[\varphi(\tau,x(\tau))\mathbb{I}\{\tau=t\}e^{h^{\intercal}I^{1/2}x(\tau)-\frac{\tau}{2}h^{\intercal}Ih}\right].
\]
An application of the Girsanov theorem then shows that the right hand
sides of (\ref{eq:pf:ART-conditional-inference-2}) and (\ref{eq:pf:ART-conditional-inference-3})
are just the expectations, $\mathbb{E}_{h}[\varphi(\tau,x(\tau))\mathbb{I}\{\tau=t\}]$
and $\mathbb{E}_{h}[\mathbb{I}\{\tau=t\}]$ when $x(t)\sim I^{1/2}ht+W(t)$.
What is more, the measures $\mathbb{P}_{0}(\cdot),\mathbb{P}_{h}(\cdot)$
are absolutely continuous, so $\mathbb{P}_{0}(\tau=t)=0$ if and only
if $\mathbb{P}_{h}(\tau=t)=0$ for any $h\in\mathbb{R}^{d}$. We are
thus led to conclude that
\[
\lim_{n\to\infty}\beta_{n}(h\vert t):=\lim_{n\to\infty}\frac{\mathbb{E}_{nT,h}\left[\varphi_{n}\mathbb{I}\{\hat{\tau}=t\}\right]}{\mathbb{E}_{nT,h}\left[\mathbb{I}\{\hat{\tau}=t\}\right]}=\frac{\mathbb{E}_{h}\left[\varphi_{n}\mathbb{I}\{\hat{\tau}=t\}\right]}{\mathbb{E}_{h}\left[\mathbb{I}\{\hat{\tau}=t\}\right]}:=\beta(h\vert t)
\]
for every $h\in\mathbb{R}^{d}$, and $t\in\{1,2,\dots,T\}$ satisfying
$\mathbb{P}_{0}(\tau=t)\neq0$. This proves the desired claim.

\subsection{Proof of Proposition \ref{Prop: Non-parametric}}

Fix some arbitrary $g_{1}\in T(P_{0})$. To simplify matters, we set
$\delta=1$. The case of general $\delta$ can be handled by simply
replacing $g_{1}$ with $g_{1}/\delta$. By standard results for Hilbert
spaces, we can write $g_{1}=\sigma^{-1}\left\langle \psi,g\right\rangle (\psi/\sigma)+\tilde{g}_{1}$,
where $\tilde{g}_{1}\perp(\psi/\sigma)$ . Define $\bm{g}:=\left(\psi/\sigma,\tilde{g}_{1}/\left\Vert \tilde{g}_{1}\right\Vert \right)^{\intercal}$,
and consider sub-models of the form $P_{1/\sqrt{n},\bm{h}^{\intercal}\bm{g}}$
for $\bm{h}\in\mathbb{R}^{2}$. By (\ref{eq:SLAN nonparametric setting}),
\begin{equation}
\sum_{i=1}^{\left\lfloor nt\right\rfloor }\ln\frac{dP_{1/\sqrt{n},\bm{h}^{\intercal}\bm{g}}}{dP_{0}}(Y_{i})=\frac{\bm{h}^{\intercal}}{\sqrt{n}}\sum_{i=1}^{\left\lfloor nt\right\rfloor }\bm{g}(Y_{i})-\frac{t}{2}\bm{h}^{\intercal}\bm{h}+o_{P_{nT,0}}(1),\ \textrm{ uniformly over }t.\label{eq:SLAN:parametric sub-models}
\end{equation}
Comparing with (\ref{eq:SLAN property}), we observe that $\left\{ P_{1/\sqrt{n},\bm{h}^{\intercal}\bm{g}}:\bm{h}\in\mathbb{R}^{2}\right\} $
is equivalent to a parametric model with score $\bm{g}(\cdot)$ and
local parameter $\bm{h}$ (note that $\mathbb{E}_{P_{0}}[\bm{g}\bm{g}^{\intercal}]=I$).
Let $G_{n}(t):=n^{-1/2}\sum_{i=1}^{n}\bm{g}(Y_{i})$ denote the score
process. By the functional central limit theorem, $G_{n}(t)\xrightarrow[P_{nT,0}]{d}G(t)\equiv(x(t),\tilde{G}(t))$,
where $x(\cdot),\tilde{G}(\cdot)$ are independent one-dimensional
Brownian motions. Take $\mathcal{G}_{t}:=\sigma\{G(s):s\le t\}$,
$\mathcal{F}_{t}:=\sigma\{x(s):s\le t\}$ to be the filtrations generated
by $G(\cdot)$ and $x(\cdot)$ respectively until time $t$. Since
the first component of $G_{n}(\cdot)$ is $x_{n}(\cdot)$ and $\hat{\tau}=\tau(x_{n}(\cdot))$
by Assumption 3(ii), the extended continuous mapping theorem implies
\begin{equation}
(G_{n}(\hat{\tau}),\hat{\tau})\xrightarrow[P_{nT,0}]{d}(G(\tau),\tau),\label{eq:weak convergence: parametric sub-models}
\end{equation}
where $\tau$ is a $\mathcal{F}_{t}$-adapted stopping time, and therefore,
$\mathcal{G}_{t}$-adapted by extension.

Consider the limit experiment where one observes a $\mathcal{F}_{t}$-adapted
stopping time $\tau$ along with a diffusion process $G(t):=\bm{h}t+W(t)$,
where $W(\cdot)$ is 2-dimensional Brownian motion. Using (\ref{eq:SLAN:parametric sub-models})
and (\ref{eq:weak convergence: parametric sub-models}), we can argue
as in the proof of Theorem \ref{Thm: ART} to show that any test in
the parametric model $\left\{ P_{1/\sqrt{n},\bm{h}^{\intercal}\bm{g}}:\bm{h}\in\mathbb{R}^{2}\right\} $
can be matched (along sub-sequences) by a test that depends only on
$G(\tau),\tau$ in the limit experiment. Hence, $\beta_{n}\left(\bm{h}^{\intercal}\bm{g}\right):=\int\varphi_{n}dP_{nT,\bm{h}^{\intercal}\bm{g}}$
converges along sub-sequences to the power function, $\beta(\bm{h})$,
of some test $\varphi(\tau,G(\tau))$ in the limit experiment. Note
that by our definitions, $\left\langle \psi,\bm{h}^{\intercal}\bm{g}\right\rangle $
is simply the first component of $\bm{h}$ divided by $\sigma$. This
in turn implies, as a consequence of the definition of asymptotically
level-$\alpha$ tests, that $\varphi(\cdot)$ is level-$\alpha$ for
testing $H_{0}:(1,0)^{\intercal}\bm{h}=0$ in the limit experiment. 

Now, by a similar argument as in the proof of Proposition \ref{Prop: Linear combinations},
along with the fact $(1,0)^{\intercal}G(t)=x(t)$, the optimal level-$\alpha$
test of $H_{0}:(1,0)^{\intercal}\bm{h}=0$ vs $H_{1}:(1,0)^{\intercal}\bm{h}=\mu_{1}/\sigma$
in the limit experiment is given by
\[
\varphi_{\mu_{1}}^{*}(\tau,x(\tau)):=\mathbb{I}\left\{ \mu_{1}x(\tau)-\frac{\mu_{1}^{2}}{2\sigma}\tau\ge\gamma\right\} .
\]
For all $\bm{h}\in H_{1}\equiv\{\bm{h}:(1,0)^{\intercal}\bm{h}=\mu_{1}/\sigma\}$
satisfying the alternative hypothesis,
\[
x(t)=(1,0)^{\intercal}G(t)\sim\sigma^{-1}\mu_{1}t+\tilde{W}(t),
\]
where $\tilde{W}(\cdot)$ is 1-dimensional Brownian motion. As $\tau$
is $\mathcal{F}_{t}$-adapted, the joint distribution of $(\tau,x(\tau))$
therefore depends only on $\mu_{1}$ for $\bm{h}\in H_{1}$. Consequently,
the power, $\mathbb{E}_{\bm{h}}[\varphi_{\mu_{1}}^{*}(\tau,x(\tau))]$,
of $\varphi_{\mu_{1}}^{*}(\cdot)$ against such alternatives depends
only on $\mu_{1}$, and is denoted by $\beta^{*}\left(\mu_{1}\right)$.
Since $\varphi_{\mu_{1}}^{*}(\cdot)$ is the optimal test and $\mu_{1}=\left\langle \psi,\bm{h}^{\intercal}\bm{g}\right\rangle $,
we conclude $\beta(\bm{h})\le\beta^{*}\left(\left\langle \psi,\bm{h}^{\intercal}\bm{g}\right\rangle \right)$.
This further implies $\lim\sup_{n}\beta_{n}(\bm{h}^{\intercal}\bm{g})\le\beta^{*}\left(\left\langle \psi,\bm{h}^{\intercal}\bm{g}\right\rangle \right)$
for any $\bm{h}\in\mathbb{R}^{2}$. Setting $\bm{h}=\left(\left\langle \psi,g_{1}\right\rangle /\sigma,\left\Vert \tilde{g}_{1}\right\Vert \right)^{\intercal}$
then gives $\lim\sup_{n}\beta_{n}(g_{1})\le\beta^{*}\left(\left\langle \psi,g_{1}\right\rangle \right)$.
Since $g_{1}\in T(P_{0})$ was arbitrary, the claim follows. 

\subsection{Proof of Proposition \ref{Prop: Two-sample non-parametric}}

For some arbitrary $\bm{g}=(g_{1},g_{0})\in T(P_{0}^{(1)})\times T(P_{0}^{(0)})$.
To simplify matters, we set $\delta_{1}=\delta_{0}=1$. The case of
general $\delta$ can be handled by simply replacing $g_{a}$ with
$g_{a}/\delta_{a}$. In what follows, let $\pi_{1}=\pi$ and $\pi_{0}=1-\pi$.
The vectors ${\bf y}_{nt}^{(1)}=(Y_{1}^{(1)},\dots,Y_{n\pi_{1}t}^{(1)})$
and ${\bf y}_{nt}^{(0)}=(Y_{1}^{(0)},\dots,Y_{n\pi_{0}t}^{(0)})$
denote the collection of outcomes from treatments 1 and 0 until time
$t$, and we set ${\bf y}_{nt}=({\bf y}_{nt}^{(1)},{\bf y}_{nt}^{(0)})$.
Define $P_{nt,\bm{g}}$ as the joint probability measure over ${\bf y}_{nt}$
when each $Y_{i}^{(a)}$ is an iid draw from $P_{1/\sqrt{n},g_{a}}^{(a)}$. 

As in the proof of Proposition \ref{Prop: Non-parametric}, we can
write $g_{a}=\sigma_{a}^{-1}\left\langle \psi_{a},g_{a}\right\rangle _{a}(\psi_{a}/\sigma_{a})+\tilde{g}_{a}$,
where $\tilde{g}_{a}\perp(\psi_{a}/\sigma_{a})$. Define $\bm{g}_{a}:=\left(\psi_{a}/\sigma_{a},\tilde{g}_{a}/\left\Vert \tilde{g}_{a}\right\Vert _{a}\right)^{\intercal}$,
and consider sub-models of the form $P_{1/\sqrt{n},\bm{h}_{1}^{\intercal}\bm{g}_{1}}\times P_{1/\sqrt{n},\bm{h}_{0}^{\intercal}\bm{g}_{0}}$
for $\bm{h}_{1},\bm{h}_{0}\in\mathbb{R}^{2}$. By the SLAN property,
(\ref{eq:SLAN nonparametric setting}), and the fact that the treatments
are independent,
\begin{align}
 & \ln\frac{dP_{nt,(\text{\ensuremath{\bm{h}_{1}^{\intercal}\bm{g}_{1}}},\bm{h}_{0}^{\intercal}\bm{g}_{0})}}{dP_{nt,\bm{0}}}({\bf y}_{nt})=\frac{\bm{h}_{1}^{\intercal}}{\sqrt{n}}\sum_{i=1}^{\left\lfloor n\pi_{1}t\right\rfloor }\bm{g}_{1}(Y_{i}^{(1)})-\frac{\pi_{1}t}{2}\bm{h}_{1}^{\intercal}\bm{h}_{1}+\dots\nonumber \\
 & \qquad\dots+\frac{\bm{h}_{0}^{\intercal}}{\sqrt{n}}\sum_{i=1}^{\left\lfloor n\pi_{0}t\right\rfloor }\bm{g}_{0}(Y_{i}^{(0)})-\frac{\pi_{0}t}{2}\bm{h}_{0}^{\intercal}\bm{h}_{0}+o_{P_{nT,0}}(1),\ \textrm{ uniformly over }t.
\end{align}

Let $G_{a,n}(t):=n^{-1/2}\sum_{i=1}^{\left\lfloor n\pi_{a}t\right\rfloor }\bm{g}_{a}(Y_{i}^{(a)})$
for $a\in\{0,1\}$. By a standard functional central limit theorem,
\[
G_{a,n}(t)\xrightarrow[P_{nT,\bm{0}}]{d}G_{a}(t)\equiv(z_{a}(t),\tilde{G}_{a}(t)),
\]
where $z_{a}(\cdot)/\sqrt{\pi_{a}},\tilde{G}_{a}(\cdot)/\sqrt{\pi_{a}}$
are independent 1-dimensional Brownian motions. Furthermore, since
the treatments are independent of each other, $G_{1}(\cdot),G_{0}(\cdot)$
are independent Gaussian processes. Define $\sigma^{2}:=\left(\frac{\sigma_{1}^{2}}{\pi_{1}}+\frac{\sigma_{0}^{2}}{\pi_{0}}\right)$,
\[
x(t):=\frac{1}{\sigma}\left(\frac{\sigma_{1}}{\pi_{1}}z_{1}(t)-\frac{\sigma_{0}}{\pi_{0}}z_{0}(t)\right)
\]
and take $\mathcal{G}_{t}:=\sigma\{(G_{1}(s),G_{0}(s)):s\le t\}$,
$\mathcal{F}_{t}:=\sigma\{x(s):s\le t\}$ to be the filtrations generated
by $\bm{G}(\cdot):=(G_{1}(\cdot),G_{0}(\cdot))$ and $x(\cdot)$ respectively
until time $t$. Using Assumption 3(ii), the extended continuous mapping
theorem implies 
\begin{equation}
(G_{1,n}(\hat{\tau}),G_{0,n}(\hat{\tau}),\hat{\tau})\xrightarrow[P_{nT,0}]{d}(G_{1}(\tau),G_{0}(\tau),\tau),\label{eq:weak convergence: parametric sub-models-1}
\end{equation}
where $\tau$ is a $\mathcal{F}_{t}$-adapted stopping time, and thereby
$\mathcal{G}_{t}$-adapted, by extension.

Consider the limit experiment where one observes a $\mathcal{G}_{t}$-adapted
stopping time $\tau$ along with diffusion processes $G_{a}(t):=\pi_{a}\bm{h}_{a}t+\sqrt{\pi_{a}}W_{a}(t),\ a\in\{0,1\}$,
where $W_{1}(\cdot),W_{0}(\cdot)$ are independent 2-dimensional Brownian
motions. By Lemma \ref{Lem: Two-sample ART} in Appendix B, any test
in the parametric model $\left\{ P_{1/\sqrt{n},\bm{h}_{1}^{\intercal}\bm{g}_{1}}\times P_{1/\sqrt{n},\bm{h}_{0}^{\intercal}\bm{g}_{0}}:\bm{h}_{1},\bm{h}_{0}\in\mathbb{R}^{2}\right\} $
can be matched (along sub-sequences) by a test that depends only on
$\bm{G}(\tau),\tau$ in the limit experiment. Hence, 
\[
\beta_{n}(\text{\ensuremath{\bm{h}_{1}^{\intercal}\bm{g}_{1}}},\bm{h}_{0}^{\intercal}\bm{g}_{0}):=\int\varphi_{n}dP_{nT,(\text{\ensuremath{\bm{h}_{1}^{\intercal}\bm{g}_{1}}},\bm{h}_{0}^{\intercal}\bm{g}_{0})}
\]
converges along sub-sequences to the power function, $\beta(\bm{h}_{1},\bm{h}_{0})$,
of some test $\varphi(\tau,\bm{G}(\tau))$ in the limit experiment.
Note that by our definitions, the first component of $\bm{h}_{a}$
is $\left\langle \psi_{a},\bm{h}_{a}^{\intercal}\bm{g}_{a}\right\rangle _{a}/\sigma_{a}$.
This in turn implies, as a consequence of the definition of asymptotically
level-$\alpha$ tests, that $\varphi(\cdot)$ is level-$\alpha$ for
testing $H_{0}:(\sigma_{1},0)^{\intercal}\bm{h}_{1}-(\sigma_{0},0)^{\intercal}\bm{h}_{0}=0$
in the limit experiment. 

Now, by Lemma \ref{Lem: Two-sample optimal test} in Appendix B, the
optimal level-$\alpha$ test of $H_{0}:(\sigma_{1},0)^{\intercal}\bm{h}_{1}-(\sigma_{0},0)^{\intercal}\bm{h}_{0}=0$
vs $H_{1}:(\sigma_{1},0)^{\intercal}\bm{h}_{1}-(\sigma_{0},0)^{\intercal}\bm{h}_{0}=\mu$
in the limit experiment is
\[
\varphi_{\mu}^{*}(\tau,x(\tau)):=\mathbb{I}\left\{ \mu x(\tau)-\frac{\mu^{2}}{2\sigma}\tau\ge\gamma\right\} .
\]
For all $\bm{h}\in H_{1}\equiv\{\bm{h}:(\sigma_{1},0)^{\intercal}\bm{h}_{1}-(\sigma_{0},0)^{\intercal}\bm{h}_{0}=\mu\}$,
\begin{align*}
x(t) & \sim\sigma^{-1}\mu t+\frac{1}{\sigma}\left(\sqrt{\frac{\sigma_{1}^{2}}{\pi_{1}}}(1,0)^{\intercal}W_{1}(t)-\sqrt{\frac{\sigma_{0}^{2}}{\pi_{0}}}(1,0)^{\intercal}W_{0}(t)\right)\\
 & \sim\sigma^{-1}\mu t+\tilde{W}(t),
\end{align*}
where $\tilde{W}(\cdot)$ is standard 1-dimensional Brownian motion.
As $\tau$ is $\mathcal{F}_{t}$-adapted, it follows that the joint
distribution of $(\tau,x(\tau))$ depends only on $\mu$ for $\bm{h}\in H_{1}$.
Consequently, the power, $\mathbb{E}_{\bm{h}}[\varphi_{\mu}^{*}(\tau,x(\tau))]$,
of $\varphi_{\mu}^{*}$ against the values in the alternative hypothesis
$H_{1}$ depends only on $\mu$, and is denoted by $\beta^{*}\left(\mu\right)$.
Since $\varphi_{\mu}^{*}(\cdot)$ is the optimal test and $\mu\in\mathbb{R}$
is arbitrary, $\beta(\bm{h}_{1},\bm{h}_{0})\le\beta^{*}(\mu)$, which
further implies $\lim\sup_{n}\beta_{n}(\text{\ensuremath{\bm{h}_{1}^{\intercal}\bm{g}_{1}}},\bm{h}_{0}^{\intercal}\bm{g}_{0})\le\beta^{*}(\mu)$
for any $\mu\in\mathbb{R}$ and $\bm{h}_{1},\bm{h}_{0}\in\mathbb{R}^{2}$
such that $\left\langle \psi_{1},\bm{h}_{1}^{\intercal}\bm{g}_{1}\right\rangle _{1}-\left\langle \psi_{0},\bm{h}_{0}^{\intercal}\bm{g}_{0}\right\rangle _{0}=\mu$.
Setting $\bm{h}_{a}=\left(\sigma_{a}^{-1}\left\langle \psi_{a},g_{a}\right\rangle _{a},\left\Vert \tilde{g}_{a}\right\Vert _{a}\right)^{\intercal}$
for $a\in\{0,1\}$ then gives $\lim\sup_{n}\int\varphi_{n}dP_{nT,(g_{1},g_{0})}\le\beta^{*}(\mu).$
Since $(g_{1},g_{0})\in T(P_{0}^{(1)})\times T(P_{0}^{(0)})$ was
arbitrary, the claim follows. 

\subsection{Proof of Theorem \ref{Thm: ART-Batched}}

As noted previously, the first claim is shown in \citet{hirano2023asymptotic}.
Consequently, we only focus on proving the second claim. Let ${\bf y}_{j,nq}^{(a)}$
denote the first $nq$ observations from treatment $a$ in batch $j$.
Define
\[
\ln\frac{dP_{n,\bm{h}}}{dP_{n,0}}({\bf y}_{j,nq}^{(a)})=\sum_{i=1}^{\left\lfloor nq\right\rfloor }\ln\frac{dp_{\theta_{0}^{(a)}+h_{a}/\sqrt{n}}}{dp_{\theta_{0}}}(Y_{i,j}^{(a)}).
\]
By the SLAN property, which is a consequence of Assumption 3, 
\begin{equation}
\ln\frac{dP_{n,\bm{h}}}{dP_{n,0}}({\bf y}_{j,n\hat{\pi}_{j}^{(a)}}^{(a)})=h_{a}^{\intercal}I_{a}^{1/2}z_{j,n}^{(a)}(\hat{\pi}_{j}^{(a)})-\frac{\hat{\pi}_{j}^{(a)}}{2}h_{a}^{\intercal}I_{a}h_{a}+o_{P_{n,0}}(1).\label{eq:pf:Thm2:0}
\end{equation}
The above is true for all $j,a$. 

Denote the observed set of outcomes by $\bar{{\bf y}}=\left({\bf y}_{1,n\hat{\pi}_{1}^{(1)}}^{(1)},{\bf y}_{1,n\hat{\pi}_{1}^{(0)}}^{(0)},\dots,{\bf y}_{J,n\hat{\pi}_{J}^{(1)}}^{(1)},{\bf y}_{J,n\hat{\pi}_{J}^{(0)}}^{(0)}\right)$.
The likelihood ratio of the observations satisfies 
\begin{align}
\ln\frac{dP_{n,\bm{h}}}{dP_{n,0}}(\bar{{\bf y}}) & =\sum_{j}\sum_{a\in\{0,1\}}\ln\frac{dP_{n,\bm{h}}}{dP_{n,0}}({\bf y}_{j,nq}^{(a)})\nonumber \\
 & =\sum_{j}\sum_{a\in\{0,1\}}\left\{ h_{a}^{\intercal}I_{a}^{1/2}z_{j,n}^{(a)}(\hat{\pi}_{j}^{(a)})-\frac{\hat{\pi}_{j}^{(a)}}{2}h_{a}^{\intercal}I_{a}h_{a}\right\} ,\label{eq:pf:Thm2:01}
\end{align}
where the second equality follows from (\ref{eq:pf:Thm2:0}). Combining
the above with the first part of the theorem, we find 
\begin{equation}
\ln\frac{dP_{n,\bm{h}}}{dP_{n,0}}(\bar{{\bf y}})\xrightarrow[P_{n,0}]{d}\sum_{j}\sum_{a\in\{0,1\}}\left\{ h_{a}^{\intercal}I_{a}^{1/2}z_{j}^{(a)}(\pi_{j}^{(a)})-\frac{\pi_{j}^{(a)}}{2}h_{a}^{\intercal}I_{a}h_{a}\right\} ,\label{eq:pf:Thm2:1}
\end{equation}
where $z_{j}^{(a)}(t)$ is distributed as $d$-dimensional Brownian
motion. 

Note that $\varphi_{n}$ is required to be measurable with respect
to $\bar{{\bf y}}_{n}$. Furthermore, $\varphi_{n}$ is tight since
$\varphi_{n}\in[0,1]$. Together with (\ref{eq:pf:Thm2:1}), this
implies the joint $\left(\varphi_{n},\ln\frac{dP_{n,\bm{h}}}{dP_{n,0}}(\bar{{\bf y}})\right)$
is also tight. Hence, by Prohorov's theorem, given any sequence $\{n_{j}\}$,
there exists a further sub-sequence $\{n_{j_{m}}\}$ - represented
as $\{n\}$ without loss of generality - such that 
\begin{equation}
\left(\begin{array}{c}
\varphi_{n}\\
\ln\frac{dP_{n,\bm{h}}}{dP_{n,0}}(\bar{{\bf y}})
\end{array}\right)\xrightarrow[P_{n,0}]{d}\left(\begin{array}{c}
\bar{\varphi}\\
V
\end{array}\right);\quad V\sim\prod_{j=1,\dots,J}\prod_{a\in\{0,1\}}\exp\left\{ h_{a}^{\intercal}I_{a}^{1/2}z_{j}^{(a)}(\pi_{j}^{(a)})-\frac{\pi_{j}^{(a)}}{2}h_{a}^{\intercal}I_{a}h_{a}\right\} ,\label{eq:pf:Thm1:weak convergence-1}
\end{equation}
where $\bar{\varphi}\in[0,1]$. Define
\[
V_{j}^{(a)}:=\exp\left\{ h_{a}^{\intercal}I_{a}^{1/2}z_{j}^{(a)}(\pi_{j}^{(a)})-\frac{\pi_{j}^{(a)}}{2}h_{a}^{\intercal}I_{a}h_{a}\right\} ,
\]
so that $V=\prod_{j=1,\dots,J}\prod_{a\in\{0,1\}}V_{j}^{(a)}$. By
the definition of $z_{j}^{(a)}(\cdot)$ and $\pi_{j}^{(a)}$ in the
limit experiment, we have that the Brownian motion $z_{j}^{(a)}(\cdot)$
is independent of data from the all past batches, and consequently,
also independent of $\pi_{j}^{(a)}$. Hence, by the martingale property
of $M_{j}^{(a)}(t):=\exp\left\{ h_{a}^{\intercal}I_{a}^{1/2}z_{j}^{(a)}(t)-\frac{t}{2}h_{a}^{\intercal}I_{a}h_{a}\right\} $,
\[
E[V_{j}^{(a)}\vert z_{1}^{(1)},z_{1}^{(0)},\pi_{1}^{(1)},\pi_{1}^{(0)}\dots,z_{j-1}^{(1)},z_{j-1}^{(0)},\pi_{j-1}^{(1)},\pi_{j-1}^{(0)}]=1
\]
for all $j$ and $a\in\{0,1\}$. This implies, by an iterative argument,
that $E[V]=1$. Consequently, we can employ similar arguments as in
the proof of Theorem 1 to show that
\begin{align}
\lim_{n\to\infty}\beta_{n}(\bm{h}) & :=\lim_{n\to\infty}\mathbb{E}_{n,\bm{h}}\left[\varphi_{n}\right]\nonumber \\
 & =E\left[\bar{\varphi}\prod_{j=1,\dots,J}\prod_{a\in\{0,1\}}e^{h_{a}^{\intercal}I_{a}^{1/2}z_{j}^{(a)}(\pi_{j}^{(a)})-\frac{\pi_{j}^{(a)}}{2}h_{a}^{\intercal}I_{a}h_{a}}\right]\nonumber \\
 & =E\left[\bar{\varphi}\prod_{a\in\{0,1\}}e^{h_{a}^{\intercal}I_{a}^{1/2}x_{a}-\frac{q_{a}}{2}h_{a}^{\intercal}I_{a}h_{a}}\right],\label{eq:pf:Thm2:2}
\end{align}
where the last equality follows from the definition of $x_{a},q_{a}$.
Define 
\[
\varphi\left(q_{1},q_{0},x_{1},x_{0}\right):=E[\bar{\varphi}\vert q_{1},q_{0},x_{1},x_{0}].
\]
Then, the right hand side of (\ref{eq:pf:Thm2:2}) becomes 
\[
E\left[\varphi\left(q_{1},q_{0},x_{1},x_{0}\right)\prod_{a\in\{0,1\}}e^{h_{a}^{\intercal}I_{a}^{1/2}x_{a}-\frac{q_{a}}{2}h_{a}^{\intercal}I_{a}h_{a}}\right].
\]
But by a repeated application of the Girsanov theorem, this is just
the expectation, $\mathbb{E}_{\bm{h}}[\varphi]$, of $\varphi$ when
each $z_{j}^{(a)}(t)$ is distributed as a Gaussian process with drift
$I_{a}^{1/2}h_{a}$, i.e., when $z_{j}^{(a)}(t)\sim I_{a}^{1/2}h_{a}t+W_{j}^{(a)}(t)$,
and $\{W_{j}^{(a)}(\cdot)\}_{j,a}$ are independent Brownian motions. 

\subsection{Proof of Proposition \ref{Prop: Non-parametric batched experiments}}

Denote the observed set of outcomes by $\bar{{\bf y}}=\left({\bf y}_{1,n\hat{\pi}_{1}^{(1)}}^{(1)},{\bf y}_{1,n\hat{\pi}_{1}^{(0)}}^{(0)},\dots,{\bf y}_{J,n\hat{\pi}_{J}^{(1)}}^{(1)},{\bf y}_{J,n\hat{\pi}_{J}^{(0)}}^{(0)}\right)$.
For some arbitrary $\bm{g}=(g_{1},g_{0})\in T(P_{0}^{(1)})\times T(P_{0}^{(0)})$.
As in the proof of Proposition \ref{Prop: Two-sample non-parametric},
we can write $g_{a}=\sigma_{a}^{-1}\left\langle \psi_{a},g_{a}\right\rangle _{a}(\psi_{a}/\sigma_{a})+\tilde{g}_{a}$,
where $\tilde{g}_{a}\perp(\psi_{a}/\sigma_{a})$. Define $\bm{g}_{a}:=\left(\psi_{a}/\sigma_{a},\tilde{g}_{a}/\left\Vert \tilde{g}_{a}\right\Vert _{a}\right)^{\intercal}$,
and consider sub-models of the form $P_{1/\sqrt{n},\bm{h}_{1}^{\intercal}\bm{g}_{1}}\times P_{1/\sqrt{n},\bm{h}_{0}^{\intercal}\bm{g}_{0}}$
for $\bm{h}_{1},\bm{h}_{0}\in\mathbb{R}^{2}$. Following similar rationales
as in the proofs of Propositions \ref{Prop: Non-parametric} and \ref{Prop: Two-sample non-parametric},
we set $\delta_{1}=\delta_{0}=1$ without loss of generality. 

Let $P_{n,\bm{h}}$ and $P_{n,0}$ be defined as in Section \ref{subsec:Asymptotic-representation-theorem-batched},
and set 
\[
\bm{Z}_{j,n}^{(a)}(t):=\frac{1}{\sqrt{n}}\sum_{i=1}^{\left\lfloor nt\right\rfloor }\bm{g}_{a}(Y_{i,j}^{(a)}),\ \textrm{and }\ z_{j,n}^{(a)}(t):=\frac{1}{\sigma_{a}\sqrt{n}}\sum_{i=1}^{\left\lfloor nt\right\rfloor }\psi_{a}(Y_{i,j}^{(a)}).
\]
By similar arguments as that leading to (\ref{eq:pf:Thm2:01}), the
likelihood ratio, 
\[
\ln\frac{dP_{n,(\text{\ensuremath{\bm{h}_{1}^{\intercal}\bm{g}_{1}}},\bm{h}_{0}^{\intercal}\bm{g}_{0})}}{dP_{n,0}}(\bar{{\bf y}}),
\]
of all observations, $\bar{\bm{y}}$, under the sub-model $P_{1/\sqrt{n},\bm{h}_{1}^{\intercal}\bm{g}_{1}}\times P_{1/\sqrt{n},\bm{h}_{0}^{\intercal}\bm{g}_{0}}$
satisfies
\begin{align}
 & \ln\frac{dP_{n,(\text{\ensuremath{\bm{h}_{1}^{\intercal}\bm{g}_{1}}},\bm{h}_{0}^{\intercal}\bm{g}_{0})}}{dP_{n,0}}(\bar{{\bf y}})=\sum_{a}\sum_{j}\left\{ \frac{\bm{h}_{a}^{\intercal}}{\sqrt{n}}\bm{Z}_{j,n}^{(a)}(\hat{\pi}_{j}^{(a)})-\frac{\hat{\pi}_{j}^{(a)}}{2}\bm{h}_{a}^{\intercal}\bm{h}_{a}\right\} +o_{P_{nT,0}}(1).\label{eq:pf:non-parametric:batched-1}
\end{align}
Now, by iterative use of the functional central limit theorem and
the extended continuous mapping theorem (using Assumption 6), 
\begin{align}
\left(\begin{array}{c}
\hat{\pi}_{j}^{(a)}\\
\bm{Z}_{j,n}^{(a)}(\hat{\pi}_{j}^{(a)})
\end{array}\right) & \xrightarrow[P_{nT,\bm{0}}]{d}\left(\begin{array}{c}
\pi_{j}^{(a)}\\
\bm{Z}_{j}^{(a)}(\pi_{j}^{(a)})
\end{array}\right),\quad\bm{Z}_{j}^{(a)}(\cdot)\sim W_{a,j}(\cdot),\label{pf:non-parametric:batched:2}
\end{align}
where $\{W_{a,j}\}_{a,j}$ are independent $2$-dimensional Brownian
motions, and $\pi_{j}^{(a)}$ is measurable with respect to $\sigma\left\{ z_{l}^{(a)}(\cdot);l\le j-1\right\} $
since $\hat{\pi}_{j}^{(a)}$ is measurable with respect to $\sigma\left\{ z_{l,n}^{(a)}(\cdot);l\le j-1\right\} $.

Consider the limit experiment where one observes $q_{a}=\sum_{j}\pi_{j}^{(a)}$
and $x_{a}:=\sum_{j}z_{j}^{(a)}(\pi_{j}^{(a)})$, where
\begin{equation}
z_{j}^{(a)}(t):=\mu_{a}t+W_{j}^{(a)}(t),\label{eq:pf:non-parametric:batched:3}
\end{equation}
and $\pi_{j}$ is measurable with respect to $\sigma\left\{ z_{l}^{(a)}(\cdot);l\le j-1\right\} $.
Using (\ref{eq:pf:non-parametric:batched-1}), (\ref{pf:non-parametric:batched:2})
and employing similar arguments as in Theorem \ref{Thm: ART-Batched},
we find that any test in the parametric model $\left\{ P_{1/\sqrt{n},\bm{h}_{1}^{\intercal}\bm{g}_{1}}\times P_{1/\sqrt{n},\bm{h}_{0}^{\intercal}\bm{g}_{0}}:\bm{h}_{1},\bm{h}_{0}\in\mathbb{R}^{2}\right\} $
can be matched (along sub-sequences) by a test that depends only on
$\bm{G}_{1},\bm{G}_{0},q_{1},q_{0}$ in the limit experiment. Hence,
\[
\beta_{n}(\text{\ensuremath{\bm{h}_{1}^{\intercal}\bm{g}_{1}}},\bm{h}_{0}^{\intercal}\bm{g}_{0}):=\int\varphi_{n}dP_{nT,(\text{\ensuremath{\bm{h}_{1}^{\intercal}\bm{g}_{1}}},\bm{h}_{0}^{\intercal}\bm{g}_{0})}
\]
converges along sub-sequences to the power function, $\beta(\bm{h}_{1},\bm{h}_{0})$,
of some test $\varphi(q_{1},q_{0},\bm{G}_{1},\bm{G}_{0})$ in the
limit experiment. Note that by our definitions, the first component
of $\bm{h}_{a}$ is $\left\langle \psi_{a},\bm{h}_{a}^{\intercal}\bm{g}_{a}\right\rangle _{a}/\sigma_{a}$.
This in turn implies, as a consequence of the definition of asymptotically
level-$\alpha$ tests, that $\varphi(\cdot)$ is level-$\alpha$ for
testing 
\[
H_{0}:\left((\sigma_{1},0)^{\intercal}\bm{h}_{1},(\sigma_{0},0)^{\intercal}\bm{h}_{0}\right)=(0,0)
\]
in the limit experiment. 

Now, by Lemma \ref{Lem: Batched experiments optimal test} in Appendix
B, the optimal level-$\alpha$ test of the null $H_{0}$ vs $H_{1}:\left((\sigma_{1},0)^{\intercal}\bm{h}_{1},(\sigma_{0},0)^{\intercal}\bm{h}_{0}\right)=(\mu_{1},\mu_{0})$
in the limit experiment is
\[
\varphi_{\mu_{1},\mu_{0}}^{*}=\mathbb{I}\left\{ \sum_{a\in\{0,1\}}\left(\frac{\mu_{a}}{\sigma_{a}}x_{a}-\frac{q_{a}}{2\sigma_{a}^{2}}\mu_{a}^{2}\right)\ge\gamma_{\mu_{1},\mu_{0}}\right\} .
\]
Using (\ref{eq:pf:non-parametric:batched:3}) and the fact $\pi_{j}$
depends only on the past values of $z_{j}^{(a)}(\cdot)$, it follows
that the joint distribution of $(q_{1},q_{0},x_{1},x_{0})$ depends
only on $\mu_{1},\mu_{0}$ for $\bm{h}\in H_{1}$. Consequently, the
power, $\mathbb{E}_{\bm{h}}\left[\varphi_{\mu_{1},\mu_{0}}^{*}\right]$,
of $\varphi_{\mu_{1},\mu_{0}}^{*}$ against the values in the alternative
hypothesis $H_{1}$ depends only on $(\mu_{1},\mu_{0})$, and is denoted
by $\beta^{*}\left(\mu_{1},\mu_{0}\right)$. Since $\varphi_{\mu_{1},\mu_{0}}^{*}$
is the optimal test and $(\mu_{1},\mu_{0})\in\mathbb{R}^{2}$ is arbitrary,
$\beta(\bm{h}_{1},\bm{h}_{0})\le\beta^{*}(\mu_{1},\mu_{0})$. This
further implies $\lim\sup_{n}\beta_{n}(\text{\ensuremath{\bm{h}_{1}^{\intercal}\bm{g}_{1}}},\bm{h}_{0}^{\intercal}\bm{g}_{0})\le\beta^{*}(\mu_{1},\mu_{0})$
for any $(\mu_{1},\mu_{0})\in\mathbb{R}$ and $\bm{h}_{1},\bm{h}_{0}\in\mathbb{R}^{2}$
such that $\left\langle \psi_{a},\bm{h}_{a}^{\intercal}\bm{g}_{a}\right\rangle _{a}=\mu_{a}$.
Setting $\bm{h}_{a}=\left(\sigma_{a}^{-1}\left\langle \psi_{a},g_{a}\right\rangle _{a},\left\Vert \tilde{g}_{a}\right\Vert _{a}\right)^{\intercal}$
for $a\in\{0,1\}$ then gives $\lim\sup_{n}\int\varphi_{n}dP_{nT,(g_{1},g_{0})}\le\beta^{*}(\mu_{1},\mu_{0}).$
Since $(g_{1},g_{0})\in T(P_{0}^{(1)})\times T(P_{0}^{(0)})$ was
arbitrary, the claim follows. 

\newpage{}

\section{Additional results}

\subsection{Variance estimators\label{subsec:Variance-estimators}}

The score/efficient influence function process $x_{n}(\cdot)$ depends
on the information matrix $I$ (in the case of parametric models)
or on the variance $\sigma$ (in the case of non-parametric models).
For parametric models, if the reference parameter, $\theta_{0}$,
is known, we could simply set $I=I(\theta_{0})$. In most applications,
however, this would be unknown, and we would need to replace $I$
and $\sigma$ with consistent estimators. Here, we discuss various
proposals for variance estimation (note that $I$ can be thought of
as variance since $E_{0}[\psi\psi^{\intercal}]=I$). 

\subsubsection*{Batched experiments.}

If the experiment is conducted in batches, we can simply use the data
from the first batch to construct consistent estimators of the variances.
This of course has the drawback of not using all the data, but it
is unbiased and $\sqrt{n}$-consistent under very weak assumptions
(i.e., existence of second moments).

\subsubsection*{Running-estimator of variance.}

For an estimator that is more generally valid and uses all the data,
we recommend the running-variance estimate 
\begin{equation}
\hat{\Sigma}_{a,t}=\frac{1}{nt}\sum_{i=1}^{\left\lfloor nt\right\rfloor }\psi_{a}(Y_{i}^{(a)})\psi_{a}(Y_{i}^{(a)})^{\intercal}-\left(\frac{1}{nt}\sum_{i=1}^{\left\lfloor nt\right\rfloor }\psi_{a}(Y_{i}^{(a)})\right)\left(\frac{1}{nt}\sum_{i=1}^{\left\lfloor nt\right\rfloor }\psi_{a}(Y_{i}^{(a)})\right)^{\intercal},\label{eq:running variance estimate}
\end{equation}
for each treatment $a$. The final estimate of the variance would
then be $\hat{\Sigma}_{a,\hat{\tau}}$ for stopping-times experiments,
and $\hat{\Sigma}_{a,q_{a}}$ for batched experiments. Let $\Sigma_{a}:=E_{0,a}[\psi_{a}\psi_{a}^{\intercal}]$
and suppose that $\psi_{a}\psi_{a}^{\intercal}$ is $\lambda$-sub-Gaussian
for some $\lambda>0$. Then using standard concentration inequalities,
see e.g., \citet[Corollary 5.5]{lattimore2020bandit}, we can show
that 
\[
P_{nT,0}\left(\bigcup_{t=1}^{T}\left\{ \left|\hat{\Sigma}_{a,t}-\Sigma_{a}\right|\ge C\sqrt{\frac{\ln(1/\delta)}{nt}}\right\} \right)\le nT\delta\quad\forall\quad\delta\in[0,1],
\]
where $C$ is independent of $n,t,\delta$ (but does depend on $\lambda$).
Setting $\delta=n^{-a}$ for some $a>0$ then implies that $\hat{\Sigma}_{a,\hat{\tau}}$
and $\hat{\Sigma}_{a,q_{a}}$ are $\sqrt{n}$-consistent for $\Sigma_{a}$
(upto log factors) as long as $\hat{\tau},q_{a}>0$ almost-surely
under $P_{nT,0}$.

\subsubsection*{Bayes estimators.}

Yet a third alternative is to place a prior on $\Sigma_{a}$ and continuously
update its value using posterior means. As a default, we suggest employing
an inverse-Wishart prior and computing the posterior by treating the
outcomes as Gaussian (this is of course justified in the limit). Since
posterior consistency holds under mild assumptions, we expect this
estimator to perform similarly to (\ref{eq:running variance estimate}). 

\subsection{Supporting information for Section \ref{subsec:Sequential-linear-boundary}\label{subsec:Supporting-information-for-sequential-linear-boundary}}

In this section, we provide a proof of Lemma \ref{Lem: costly sampling}.
The proof proceeds in two steps: First, we characterize the best unbiased
test in the limit experiment described in Section \ref{subsec:Sequential-linear-boundary}.
Then, we show that the finite sample counterpart of this test attains
the power envelope for asymptotically unbiased tests.

\subsubsection*{Step 1:}

Consider the problem of testing $H_{0}:\mu=0$ vs $H_{1}:\mu\neq0$
in the limit experiment. Let $\mathbb{P}_{\mu}(\cdot)$ denote the
induced probability measure over the sample paths of $x(\cdot)$ in
the limit experiment, and $\mathbb{E}_{\mu}[\cdot]$ its corresponding
expectation. Due to the nature of the stopping time, $x(\tau)$ can
only take on two values $\gamma,-\gamma$. Let $\delta$ denote the
sign of $x(\tau)$. Then, by sufficiency, any test $\varphi$, in
the limit experiment can be written as a function only of $\tau,\delta$.
Furthermore, by Proposition \ref{Prop: Unbiased}, any unbiased test,
$\varphi(\tau,\delta)$, must satisfy $\mathbb{E}_{0}[\delta\varphi(\tau,\delta)]=0$. 

Fix some alternative $\mu\neq0$ and consider the functional optimization
problem 
\begin{align}
 & \max_{\varphi(\cdot)}\mathbb{E}_{\mu}[\varphi(\tau,\delta)]\equiv\mathbb{E}_{0}\left[\varphi(\tau,\delta)e^{\frac{1}{\sigma}\mu\delta\gamma-\frac{\tau}{2\sigma^{2}}\mu^{2}}\right]\label{eq:costly sampling: best test}\\
 & \textrm{s.t}\ \mathbb{E}_{0}[\varphi(\tau,\delta)]\le\alpha\ \textrm{and }\ \mathbb{E}_{0}[\delta\varphi(\tau,\delta)]=0.\nonumber 
\end{align}
Here, and in what follows, it should implicitly understood that the
candidate functions, $\varphi(\cdot)$, are tests, i.e., their range
is $[0,1]$. Let $\varphi^{*}$ denote the optimal solution to (\ref{eq:costly sampling: best test}).
Note that $\varphi^{*}$ is unbiased since $\varphi=\alpha$ also
satisfies the constraints in (\ref{eq:costly sampling: best test});
indeed, $\mathbb{E}_{0}[\delta]=0$ by symmetry. Consequently, if
$\varphi^{*}$ is shown to be independent of $\mu$, we can conclude
that it is the best unbiased test.

Now, by \citet{fudenberg2018speed}, $\delta$ is independent of $\tau$
given $\mu$. Furthermore, by symmetry, $\mathbb{P}_{0}(\delta=1)=\mathbb{P}_{0}(\delta=-1)=1/2$
for $\mu=0$. Based on these results, we have
\begin{align*}
(0=)\mathbb{E}_{0}[\delta\varphi(\tau,\delta)] & =\frac{1}{2}\int\left\{ \varphi(\tau,1)-\varphi(\tau,0)\right\} dF_{0}(\tau),\\
\mathbb{E}_{0}[\varphi(\tau,\delta)] & =\frac{1}{2}\int\left\{ \varphi(\tau,1)+\varphi(\tau,0)\right\} dF_{0}(\tau),\ \textrm{and}\\
\mathbb{E}_{0}\left[\varphi(\tau,\delta)e^{\frac{1}{\sigma}\mu\delta\gamma-\frac{\tau}{2\sigma^{2}}\mu^{2}}\right] & =\frac{e^{\mu\gamma/\sigma}}{2}\int\varphi(\tau,1)e^{-\frac{\tau}{2\sigma^{2}}\mu^{2}}dF_{0}(\tau)+\frac{e^{-\mu\gamma/\sigma}}{2}\int\varphi(\tau,0)e^{-\frac{\tau}{2\sigma^{2}}\mu^{2}}dF_{0}(\tau).
\end{align*}
The first two equations above imply $\mathbb{E}_{0}[\varphi(\tau,1)]=\mathbb{E}_{0}[\varphi(\tau,0)]=\mathbb{E}_{0}[\varphi(\tau,\delta)]$.
Hence, we can rewrite the optimization problem (\ref{eq:costly sampling: best test})
as 
\begin{align}
 & \max_{\varphi(\cdot)}\left\{ \frac{e^{\mu\gamma/\sigma}}{2}\int\varphi(\tau,1)e^{-\frac{\tau}{2\sigma^{2}}\mu^{2}}dF_{0}(\tau)+\frac{e^{-\mu\gamma/\sigma}}{2}\int\varphi(\tau,0)e^{-\frac{\tau}{2\sigma^{2}}\mu^{2}}dF_{0}(\tau)\right\} \label{eq:unbiased test - optimization problem 2}\\
 & \textrm{s.t.}\int\varphi(\tau,1)dF_{0}(\tau)\le\alpha,\ \int\varphi(\tau,0)dF_{0}(\tau)\le\alpha\ \textrm{and}\nonumber \\
 & \quad\int\varphi(\tau,1)dF_{0}(\tau)=\int\varphi(\tau,0)dF_{0}(\tau).\nonumber 
\end{align}
Let us momentarily disregard the last constraint in (\ref{eq:unbiased test - optimization problem 2}).
Then the optimization problem factorizes, and the optimal $\varphi(\cdot)$
can be determined by separately solving for $\varphi(\cdot,1),\varphi(\cdot,0)$
as the functions that optimize
\[
\max_{\varphi(\cdot,a)}\int\varphi(\tau,a)e^{-\frac{\tau}{2\sigma^{2}}\mu^{2}}dF_{0}(\tau)\quad\textrm{s.t. }\ \int\varphi(\tau,a)dF_{0}(\tau)\le\alpha
\]
for $a\in\{0,1\}$. Let $\varphi^{*}(\cdot,a)$ denote the optimal
solution. It is immediate from the optimization problem above that
$\varphi^{*}(\tau,1)=\varphi^{*}(\tau,0):=\varphi^{*}(\tau)$, i.e.,
the optimal $\varphi^{*}$ is independent of $\delta$. Hence, the
last constraint in (\ref{eq:unbiased test - optimization problem 2})
is satisfied. Furthermore, by the Neyman-Pearson lemma, 
\[
\varphi^{*}(\tau)=\mathbb{I}\left\{ e^{-\frac{\tau}{2\sigma^{2}}\mu^{2}}\ge\gamma\right\} \equiv\mathbb{I}\left\{ \tau\le c\right\} ,
\]
where $c=F_{0}^{-1}(\alpha)$ due to the requirement that $\int\varphi(\tau,a)dF_{0}(\tau)\le\alpha$.
Consequently, the solution, $\varphi^{*}(\cdot)$, to (\ref{eq:costly sampling: best test})
is given by $\mathbb{I}\left\{ \tau\le F_{0}^{-1}(\alpha)\right\} $.
This is obviously independent of $\mu$. We conclude that it is the
best unbiased test in the limit experiment. 

\subsubsection*{Step 2:}

The finite sample counterpart of $\varphi^{*}(\cdot)$ is given by
$\hat{\varphi}(\hat{\tau}):=\mathbb{I}\left\{ \hat{\tau}\le F_{0}^{-1}(\alpha)\right\} $,
where it may be recalled that $\hat{\tau}=\inf\{t:\vert x_{n}(t)\vert\ge\gamma\}$.
Fix some arbitrary $\bm{g}:=(g_{1},g_{0})\in T(P_{0}^{(1)})\times T(P_{0}^{(0)})$.
Let $P_{nT,\bm{g}}$ be defined as in the proof of Proposition \ref{Prop: Two-sample non-parametric}.
By similar arguments as in the proofs of \citet[Theorems 3 and 5]{adusumilli2022sample},
\[
\hat{\tau}\xrightarrow[P_{nT,\bm{g}}]{d}\tau:=\inf\{t:\vert x(t)\vert\ge\gamma\}
\]
along sub-sequences, where $x(t)\sim\sigma^{-1}\mu t+\tilde{W}(t)$
and $\mu:=\left\langle \psi_{1},g_{1}\right\rangle _{1}-\left\langle \psi_{0},g_{0}\right\rangle _{0}$.
Hence, 
\[
\lim_{n\to\infty}\hat{\beta}(g_{1},g_{0}):=\lim_{n\to\infty}P_{nT,(g_{1},g_{0})}\left(\hat{\tau}\le F_{0}^{-1}(\alpha)\right)=\mathbb{P}_{\mu}\left(\tau\le F_{0}^{-1}(\alpha)\right),
\]
where $\mathbb{P}_{\mu}(\cdot)$ is the probability measure defined
in Step 1. But $\tilde{\beta}^{*}(\mu):=P_{\mu}\left(\tau\le F_{0}^{-1}(\alpha)\right)$
is just the power function of the best unbiased test, $\varphi^{*}$,
in limit experiment. Hence, $\hat{\varphi}(\cdot)$ is an asymptotically
optimal unbiased test. 

\subsection{Supporting information for Section \ref{subsec:Group-Sequential-Experiments}\label{subsec:Supporting-information-for-GST}}

\subsubsection{Nonparametric level-$\bm{\alpha}$ and conditionally unbiased tests}

Here, we define non-parametric versions of the level-$\bm{\alpha}$
and conditionally unbiased requirements. We follow the same notation
as in Section \ref{sec:Two-sample-tests}. A test, $\varphi_{n}$,
of $H_{0}:\mu_{1}-\mu_{0}=\mu/\sqrt{n}$ is said to asymptotically
level-$\bm{\alpha}$ if 
\begin{equation}
\sup_{\left\{ \bm{h}:\left\langle \psi_{1},h_{1}\right\rangle _{1}-\left\langle \psi_{0},h_{0}\right\rangle _{0}=\mu\right\} }\limsup_{n}\int\mathbb{I}\{\hat{\tau}=k\}\varphi_{n}dP_{nT,\bm{h}}\le\alpha_{k}\ \forall\ k.\label{eq:Level -alpha - two sample definition-asymptotic}
\end{equation}
Similarly, a test, $\varphi_{n}$, of $H_{0}:\mu_{1}-\mu_{0}=\mu/\sqrt{n}$
vs $H_{1}:\mu_{1}-\mu_{0}\neq\mu/\sqrt{n}$ is asymptotically conditionally
unbiased if 
\begin{align*}
 & \sup_{\left\{ \bm{h}:\left\langle \psi_{1},h_{1}\right\rangle _{1}-\left\langle \psi_{0},h_{0}\right\rangle _{0}=\mu\right\} }\limsup_{n}\int\mathbb{I}\{\tau=k\}\varphi_{n}dP_{nT,\bm{h}}\\
 & \ge\inf_{\left\{ \bm{h}:\left\langle \psi_{1},h_{1}\right\rangle _{1}-\left\langle \psi_{0},h_{0}\right\rangle _{0}\neq\mu\right\} }\liminf_{n}\int\varphi_{n}dP_{nT,\bm{h}}.
\end{align*}

\subsubsection{Attaining the bound}

Recall the definition of $x_{n}(\cdot)$ in (\ref{eq:two-sample tests sufficient statistic}).
While $x_{n}(\cdot)$ depends on the unknown quantities $\sigma_{1},\sigma_{0}$,
we can replace them with consistent estimates $\hat{\sigma}_{1},\hat{\sigma}_{0}$
using data from the first batch without affecting the asymptotic results,
so there is no loss of generality in taking them to be known. Let
$\hat{\varphi}:=\varphi^{*}(\hat{\tau},x_{n}(\hat{\tau}))$ denote
the finite sample counterpart of $\varphi^{*}$. 

By an extension of Proposition \ref{Prop: Two-sample non-parametric}
to $\alpha$-spending tests, as in Theorem \ref{Thm: ART-conditional inference},
the conditional power function, $\beta^{*}(\mu|k)$, of $\varphi^{*}$
in the limit experiment is an upper bound on the asymptotic power
function of any test in the original experiment. We now show that
the local (conditional) power, $\hat{\beta}(g_{1},g_{0}\vert k)$,
of $\hat{\varphi}$ against sub-models $P_{1/\sqrt{n},g_{1}}\times P_{1/\sqrt{n},g_{0}}$
converges to $\beta^{*}(\mu\vert k)$. This implies that $\hat{\varphi}$
is an asymptotically optimal level-$\bm{\alpha}$ test in this experiment. 

Fix some arbitrary $\bm{g}:=(g_{1},g_{0})\in T(P_{0}^{(1)})\times T(P_{0}^{(0)})$.
Let $P_{nT,\bm{g}}$ be defined as in the proof of Proposition \ref{Prop: Two-sample non-parametric}.
By similar arguments as in the proofs of \citet[Theorems 3 and 5]{adusumilli2022sample},
\[
x_{n}(\cdot)\xrightarrow[P_{nT,\bm{g}}]{d}x(\cdot)
\]
along sub-sequences, where $x(t)\sim\sigma^{-1}\mu t+\tilde{W}(t)$
and $\mu:=\left\langle \psi_{1},g_{1}\right\rangle _{1}-\left\langle \psi_{0},g_{0}\right\rangle _{0}$.
Since $\hat{\tau}$ is a function of $x_{n}(\cdot)$, the above implies,
by an application of the extended continuous mapping theorem \citep[Theorem 1.11.1]{van1996weak},
that
\begin{align*}
\lim_{n\to\infty}\int\mathbb{I}\{\hat{\tau}=k\}\hat{\varphi}P_{nT,(g_{1},g_{0})} & =\int\mathbb{I}\{\tau=k\}\varphi^{*}d\mathbb{P}_{\mu},\ \textrm{and}\\
\lim_{n\to\infty}\int\mathbb{I}\{\hat{\tau}=k\}P_{nT,(g_{1},g_{0})} & =\int\mathbb{I}\{\tau=k\}d\mathbb{P}_{\mu}.
\end{align*}
Hence, as long as $\mathbb{P}_{0}(\tau=k)\neq0$, by the definition
of conditional power, we obtain
\[
\lim_{n\to\infty}\hat{\beta}(g_{1},g_{0}\vert k)=\frac{\int\mathbb{I}\{\tau=k\}\varphi^{*}d\mathbb{P}_{\mu}}{\mathbb{I}\{\tau=k\}d\mathbb{P}_{\mu}}:=\beta^{*}(\mu\vert k),
\]
for any $\mu\in\mathbb{R}$. This implies that $\hat{\varphi}$ is
asymptotically level-$\bm{\alpha}$ (as can be verified by setting
$\mu=0$ etc), and furthermore, its conditional power attains the
upper bound $\beta^{*}(\cdot\vert k)$. Hence, $\hat{\varphi}$ is
an asymptotically optimal level-$\bm{\alpha}$ test.

\subsection{Supporting results for the proof of Proposition \ref{Prop: Two-sample non-parametric}}

\begin{lem} \label{Lem: Two-sample ART} Consider the setup in the
proof of Proposition \ref{Prop: Two-sample non-parametric}. Let $P_{1/\sqrt{n},\bm{h}_{a}^{\intercal}\bm{g}_{a}}^{(a)}$
denote the probability sub-model for treatment $a$, and suppose that
it satisfies the SLAN property 
\[
\ln\frac{dP_{nt,\text{\ensuremath{\bm{h}_{a}^{\intercal}\bm{g}_{a}}}}}{dP_{nt,\bm{0}}}({\bf y}_{nt}^{(a)})=\frac{\bm{h}_{a}^{\intercal}}{\sqrt{n}}\sum_{i=1}^{\left\lfloor n\pi_{a}t\right\rfloor }\bm{g}_{a}(Y_{i}^{(a)})-\frac{\pi_{a}t}{2}\bm{h}_{a}^{\intercal}\bm{h}_{a}++o_{P_{nT,\bm{0}}}(1),\ \textrm{ uniformly over }t.
\]
Then, any test in the parametric model $\left\{ P_{1/\sqrt{n},\bm{h}_{1}^{\intercal}\bm{g}_{1}}\times P_{1/\sqrt{n},\bm{h}_{0}^{\intercal}\bm{g}_{0}}:\bm{h}_{1},\bm{h}_{0}\in\mathbb{R}^{2}\right\} $
can be matched (along sub-sequences) by a test that depends only on
$\bm{G}(\tau),\tau$ in the limit experiment. \end{lem}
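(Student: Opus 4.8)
The plan is to follow the proof of Theorem \ref{Thm: ART} essentially verbatim, replacing the single-treatment likelihood ratio by a product over the two independent treatments. The three moving parts are: (i) an asymptotically quadratic expansion of the joint log-likelihood ratio of the collected data, evaluated at the stopping time; (ii) tightness of the pair $(\varphi_n,\textrm{log-LR})$ and passage to a weakly convergent subsequence via Prohorov's theorem; and (iii) a contiguity/portmanteau argument transferring the limiting law of $\varphi_n$ from the null to the local alternatives, after which conditioning on $(\bm{G}(\tau),\tau)$ delivers the matching test in the limit experiment.

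For step (i): since the two treatments are independent, the log-likelihood ratio of the data collected up to ``time'' $\hat\tau$ --- namely $\lfloor n\pi_1\hat\tau\rfloor$ draws from arm $1$ and $\lfloor n\pi_0\hat\tau\rfloor$ draws from arm $0$ --- is the sum over $a\in\{0,1\}$ of the per-arm ratios. Using the hypothesized SLAN property for each arm, which holds uniformly over $t$, I would substitute $t=\hat\tau$ to get
\[
\ln\frac{dP_{n\hat\tau,(\bm{h}_1^\intercal\bm{g}_1,\bm{h}_0^\intercal\bm{g}_0)}}{dP_{n\hat\tau,\bm{0}}}({\bf y}_{n\hat\tau})=\sum_{a\in\{0,1\}}\left(\bm{h}_a^\intercal G_{a,n}(\hat\tau)-\frac{\pi_a\hat\tau}{2}\bm{h}_a^\intercal\bm{h}_a\right)+o_{P_{nT,\bm{0}}}(1).
\]
Combining this with the weak convergence $(G_{1,n}(\hat\tau),G_{0,n}(\hat\tau),\hat\tau)\xrightarrow[P_{nT,\bm{0}}]{d}(G_1(\tau),G_0(\tau),\tau)$ already established in the proof of Proposition \ref{Prop: Two-sample non-parametric}, the continuous mapping theorem gives that the displayed log-LR converges in distribution under $P_{nT,\bm{0}}$ to $\ln V$ with $V:=\exp\{\sum_{a}(\bm{h}_a^\intercal G_a(\tau)-\frac{\pi_a\tau}{2}\bm{h}_a^\intercal\bm{h}_a)\}$. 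For step (ii), $\varphi_n\in[0,1]$ is tight, so $(\varphi_n,\textrm{log-LR})$ is tight, and Prohorov's theorem yields a subsequence --- relabelled $\{n\}$ --- along which $(\varphi_n,\textrm{log-LR})\xrightarrow[P_{nT,\bm{0}}]{d}(\bar\varphi,\ln V)$ with $\bar\varphi\in[0,1]$.

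For step (iii) I would first check $E[V]=1$: for each $a$, $M_a(t):=\exp\{\bm{h}_a^\intercal G_a(t)-\frac{\pi_a t}{2}\bm{h}_a^\intercal\bm{h}_a\}$ is a martingale for the filtration generated by $G_a$ (a direct Gaussian-moment computation, using that $G_a(t)$ has covariance $\pi_a t$ times the identity under the null), and by independence of $G_1,G_0$ the product $M_1M_0$ is a martingale for the joint filtration $\mathcal{G}_t$; since $x(\cdot)$ is a functional of $\bm{G}(\cdot)$, the bounded $\mathcal{F}_t$-adapted stopping time $\tau$ is also $\mathcal{G}_t$-adapted, so optional stopping gives $E[V]=E[M_1(\tau)M_0(\tau)]=1$. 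Hence $L(B):=E[\mathbb{I}\{\bar\varphi\in B\}V]$ defines a probability measure. Because $\varphi_n$ depends only on the data collected up to $\hat\tau$, while the likelihood ratio of the not-yet-collected observations integrates to one conditionally on that data, the law of iterated expectations lets me replace $dP_{nT,\cdot}/dP_{nT,\bm{0}}$ by $dP_{n\hat\tau,\cdot}/dP_{n\hat\tau,\bm{0}}$ inside $\mathbb{E}_{nT,\cdot}[f(\varphi_n)]$; a two-sided application of the portmanteau lemma to lower-semicontinuous $f\ge0$ then yields that $\varphi_n$ converges in distribution to $L$ under $P_{nT,(\bm{h}_1^\intercal\bm{g}_1,\bm{h}_0^\intercal\bm{g}_0)}$, so $\beta_n=\mathbb{E}_{nT,\cdot}[\varphi_n]\to E[\bar\varphi V]$. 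Finally, setting $\varphi(\bm{G}(\tau),\tau):=E[\bar\varphi\mid\bm{G}(\tau),\tau]\in[0,1]$, the tower property gives $E[\bar\varphi V]=E[\varphi(\bm{G}(\tau),\tau)V]$, and repeated use of the Girsanov theorem identifies this with the power of $\varphi$ in the limit experiment where $G_a(t)\sim\pi_a\bm{h}_a t+\sqrt{\pi_a}W_a(t)$ --- which is the asserted match.

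The step I expect to be the main obstacle is the bookkeeping around ``the data collected up to $\hat\tau$'': one must be precise that $\hat\tau$ is a stopping time for the filtration generated jointly by the two arms' outcomes in their natural sampling order, that exactly $\lfloor n\pi_a\hat\tau\rfloor$ observations of arm $a$ are in hand when the experiment halts, and that the residual likelihood ratio is a $P_{nT,\bm{0}}$-martingale so the iterated-expectations reduction is legitimate. Relatedly, one must confirm that the hypothesized SLAN expansion is genuinely uniform over $t\le T$, so that evaluating it at the random time $\hat\tau$ is valid; this, together with the extended continuous mapping theorem invoked in Assumption 4(ii), is what makes step (i) go through. Everything past these points is a direct transcription of the argument for Theorem \ref{Thm: ART}.
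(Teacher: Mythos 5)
Your proposal is correct and follows essentially the same route as the paper's own proof: the per-arm SLAN expansion summed by independence, weak convergence of $(G_{1,n}(\hat\tau),G_{0,n}(\hat\tau),\hat\tau)$, tightness plus Prohorov, the optional-stopping argument giving $E[V]=1$, the portmanteau/contiguity transfer, and conditioning on $(\tau,\bm{G}(\tau))$ followed by Girsanov. The only cosmetic difference is that you verify the martingale property of $M_1M_0$ arm-by-arm via independence, whereas the paper identifies $M$ as the stochastic exponential of a single rescaled Brownian motion; both are fine.
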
 
\begin{proof}
Recall that $G_{a,n}(t):=n^{-1/2}\sum_{i=1}^{\left\lfloor n\pi_{a}t\right\rfloor }\bm{g}_{a}(Y_{i}^{(a)})$
for $a\in\{0,1\}$. Then, by the statement of the lemma, we have
\begin{equation}
\ln\frac{dP_{n\hat{\tau},\bm{h}_{a}^{\intercal}\bm{g}_{a}}}{dP_{n\hat{\tau},0}}({\bf y}_{n\hat{\tau}}^{(a)})=\bm{h}_{a}^{\intercal}G_{a,n}(\hat{\tau})-\frac{\pi_{a}\hat{\tau}}{2}\bm{h}_{a}^{\intercal}\bm{h}_{a}+o_{P_{nT,\bm{0}}}(1),\label{eq:lem:two-sample:1}
\end{equation}
for $a\in\{0,1\}$. In the proof of Proposition \ref{Prop: Two-sample non-parametric},
we argued that
\begin{equation}
(G_{1,n}(\hat{\tau}),G_{0,n}(\hat{\tau}),\hat{\tau})\xrightarrow[P_{nT,\bm{0}}]{d}(G_{1}(\tau),G_{0}(\tau),\tau),\label{pf:lem:2-sample:2}
\end{equation}
where $G_{a}(t)\sim\sqrt{\pi_{a}}W_{a}(t)$ with $W_{1}(\cdot),W(\cdot)$
being independent $2$-dimensional Brownian motions; and $\tau$ is
a $\mathcal{G}_{t}$-adapted stopping time. Equations (\ref{eq:lem:two-sample:1})
and (\ref{pf:lem:2-sample:2}) imply
\begin{equation}
\ln\frac{dP_{nt,(\text{\ensuremath{\bm{h}_{1}^{\intercal}\bm{g}_{1}}},\bm{h}_{0}^{\intercal}\bm{g}_{0})}}{dP_{nt,\bm{0}}}({\bf y}_{nt})\xrightarrow[P_{nT,\bm{0}}]{d}\sum_{a\in\{0,1\}}\left\{ \bm{h}_{a}^{\intercal}G_{a}(\tau)-\frac{\pi_{a}\tau}{2}\bm{h}_{a}^{\intercal}\bm{h}_{a}\right\} .\label{pf:Thm:ART:eq:1-1}
\end{equation}

Now, any two-sample test, $\varphi_{n}$, is tight since $\varphi_{n}\in[0,1]$.
Then, as in the proof of Theorem \ref{Thm: ART}, we find that given
any sequence $\{n_{j}\}$, there exists a further sub-sequence $\{n_{j_{m}}\}$
- represented as $\{n\}$ without loss of generality - such that 
\begin{equation}
\left(\begin{array}{c}
\varphi_{n}\\
\frac{dP_{nt,(\text{\ensuremath{\bm{h}_{1}^{\intercal}\bm{g}_{1}}},\bm{h}_{0}^{\intercal}\bm{g}_{0})}}{dP_{nt,\bm{0}}}({\bf y}_{nt})
\end{array}\right)\xrightarrow[P_{nT,\bm{0}}]{d}\left(\begin{array}{c}
\bar{\varphi}\\
V
\end{array}\right);\quad V\sim\exp\sum_{a}\left\{ \bm{h}_{a}^{\intercal}G_{a}(\tau)-\frac{\pi_{a}\tau}{2}\bm{h}_{a}^{\intercal}\bm{h}_{a}\right\} ,\label{eq:pf:Thm1:weak convergence-3}
\end{equation}
where $\bar{\varphi}\in[0,1]$. Now, given that $G_{a}(t)\sim\sqrt{\pi_{a}}W_{a}(t)$,
\[
V\sim\exp\sum_{a}\left\{ \sqrt{\pi_{a}}\bm{h}_{a}^{\intercal}W_{a}(\tau)-\frac{\pi_{a}\tau}{2}\bm{h}_{a}^{\intercal}\bm{h}_{a}\right\} .
\]
Clearly, $V_{a}$ is the stochastic/Dol{\'e}ans-Dade exponential
of $\sum_{a}\left\{ \sqrt{\pi_{a}}\bm{h}_{a}^{\intercal}W_{a}(\tau)\right\} $.
Since $W_{1}(\cdot),W_{0}(\cdot)$ are independent, the latter quantity
is in turn distributed as $\left(\sum_{a}\pi_{a}\bm{h}_{a}^{\intercal}\bm{h}_{a}\right)^{1/2}\tilde{W}(t)$,
where $\tilde{W}(\cdot)$ is standard 1-dimensional Brownian motion.
Hence, by standard results on stochastic exponentials, 
\[
M(t):=\exp\sum_{a}\left\{ \bm{h}_{a}^{\intercal}G_{a}(t)-\frac{\pi_{a}t}{2}\bm{h}_{a}^{\intercal}\bm{h}_{a}\right\} 
\]
is a martingale with respect to the filtration $\mathcal{\mathcal{G}}_{t}$.
Since $\tau$ is an $\mathcal{G}_{t}$-adapted stopping time, $E[V]\equiv E[M(\tau)]=E[M(0)]=1$
using the optional stopping theorem.

The above then implies, as in the proof of Theorem \ref{Thm: ART},
that
\begin{equation}
\lim_{n\to\infty}\beta_{n}(\text{\ensuremath{\bm{h}_{1}^{\intercal}\bm{g}_{1}}},\bm{h}_{0}^{\intercal}\bm{g}_{0}):=\lim_{n\to\infty}\int\varphi_{n}dP_{nT,(\text{\ensuremath{\bm{h}_{1}^{\intercal}\bm{g}_{1}}},\bm{h}_{0}^{\intercal}\bm{g}_{0})}=E\left[\bar{\varphi}e^{\sum_{a}\left\{ \bm{h}_{a}^{\intercal}G_{a}(\tau)-\frac{\pi_{a}\tau}{2}\bm{h}_{a}^{\intercal}\bm{h}_{a}\right\} }\right].\label{eq:pf:Thm:ART:2-1}
\end{equation}
Define $\varphi(\tau,\bm{G}(\tau)):=E[\bar{\varphi}\vert\tau,\bm{G}(\tau)]$;
this is a test statistic since $\varphi\in[0,1]$. The right hand
side of (\ref{eq:pf:Thm:ART:2-1}) then becomes 
\[
E\left[\varphi(\tau,\bm{G}(\tau))e^{\sum_{a}\left\{ \bm{h}_{a}^{\intercal}G_{a}(\tau)-\frac{\pi_{a}\tau}{2}\bm{h}_{a}^{\intercal}\bm{h}_{a}\right\} }\right].
\]
But by the Girsanov theorem, this is just the expectation, $\mathbb{E}_{\bm{h}}[\varphi(\tau,\bm{G}(\tau))]$,
of $\varphi(\tau,\bm{G}(\tau))$ when $G_{a}(t)\sim\pi_{a}\bm{h}_{a}t+\sqrt{\pi_{a}}W_{a}(t)$
. This proves the desired claim. 
\end{proof}
\begin{lem} \label{Lem: Two-sample optimal test} Consider the limit
experiment where one observes a stopping time $\tau$ and independent
diffusion processes $G_{1}(\cdot),G_{0}(\cdot)$, where $G_{a}(t):=\pi_{a}\bm{h}_{a}t+\sqrt{\pi_{a}}W_{a}(t)$.
Let $\sigma$, $x(\cdot)$ and $\mathcal{F}_{t}$ be as defined in
the proof of Proposition \ref{Prop: Two-sample non-parametric}, and
suppose that $\tau$ is $\mathcal{F}_{t}$-adapted. Then, the optimal
level-$\alpha$ test of $H_{0}:(\sigma_{1},0)^{\intercal}\bm{h}_{1}-(\sigma_{0},0)^{\intercal}\bm{h}_{0}=0$
vs $H_{1}:(\sigma_{1},0)^{\intercal}\bm{h}_{1}-(\sigma_{0},0)^{\intercal}\bm{h}_{0}=\mu$
in the limit experiment is given by
\[
\varphi_{\mu}^{*}(\tau,x(\tau)):=\mathbb{I}\left\{ \mu x(\tau)-\frac{\mu^{2}}{2\sigma}\tau\ge\gamma\right\} .
\]
 \end{lem}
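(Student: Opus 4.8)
The plan is to follow the template of the proof of Proposition~\ref{Prop: Linear combinations}: reduce the composite testing problem to a simple-versus-simple one by a judicious choice of a point in the null, apply the Neyman--Pearson lemma, and then upgrade the result to a uniformly most powerful statement by exploiting that the critical region of the resulting test depends on the nuisance directions only through $\mu$, while $x(\cdot)$ is pivotal under $H_0$.

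First I would fix an arbitrary alternative point $\bm{h}^{(1)}=(\bm{h}_1^{(1)},\bm{h}_0^{(1)})$ with $(\sigma_1,0)^{\intercal}\bm{h}_1^{(1)}-(\sigma_0,0)^{\intercal}\bm{h}_0^{(1)}=\mu$ and build a matched null point $\bm{h}^{(0)}:=\bm{h}^{(1)}-\bm{\delta}$, where $\bm{\delta}=(\delta_1,\delta_0)$ with $\delta_1:=\tfrac{\mu\sigma_1}{\sigma^2\pi_1}(1,0)^{\intercal}$ and $\delta_0:=-\tfrac{\mu\sigma_0}{\sigma^2\pi_0}(1,0)^{\intercal}$. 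Note $\bm{\delta}$ depends only on $\mu$ and the constants $\sigma_a,\pi_a$, not on $\bm{h}^{(1)}$. Using $\sigma^2=\sigma_1^2/\pi_1+\sigma_0^2/\pi_0$, a direct computation gives $(\sigma_1,0)^{\intercal}\delta_1-(\sigma_0,0)^{\intercal}\delta_0=\mu$, so $\bm{h}^{(0)}$ lies in the null; and since $z_a(\cdot)=(1,0)^{\intercal}G_a(\cdot)$ and $x(t)=\sigma^{-1}\big(\sigma_1\pi_1^{-1}z_1(t)-\sigma_0\pi_0^{-1}z_0(t)\big)$, one also gets $\sum_{a\in\{0,1\}}\delta_a^{\intercal}G_a(\tau)=(\mu/\sigma)\,x(\tau)$.

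Next I would compute the likelihood ratio. By the Girsanov theorem applied to $G_a(t)=\pi_a\bm{h}_a t+\sqrt{\pi_a}W_a(t)$ together with the independence of the two arms (exactly as in the proof of Lemma~\ref{Lem: Two-sample ART}), for the $\mathcal{G}_t$-adapted stopping time $\tau$ one has
\[
\ln\frac{d\mathbb{P}_{\bm{h}^{(1)}}}{d\mathbb{P}_{\bm{h}^{(0)}}}(\mathcal{G}_\tau)=\sum_{a\in\{0,1\}}\Big\{\delta_a^{\intercal}G_a(\tau)-\tfrac{\pi_a\tau}{2}\big((\bm{h}_a^{(1)})^{\intercal}\bm{h}_a^{(1)}-(\bm{h}_a^{(0)})^{\intercal}\bm{h}_a^{(0)}\big)\Big\}.
\]
The linear part equals $(\mu/\sigma)x(\tau)$ by the previous step; and since $(\bm{h}_a^{(1)})^{\intercal}\bm{h}_a^{(1)}-(\bm{h}_a^{(0)})^{\intercal}\bm{h}_a^{(0)}=2(\bm{h}_a^{(1)})^{\intercal}\delta_a-\delta_a^{\intercal}\delta_a$, a short computation with $\sigma^2=\sigma_1^2/\pi_1+\sigma_0^2/\pi_0$ gives $\sum_a\pi_a\big(2(\bm{h}_a^{(1)})^{\intercal}\delta_a-\delta_a^{\intercal}\delta_a\big)=\mu^2/\sigma^2$, so the log likelihood ratio is exactly $(\mu/\sigma)x(\tau)-(\mu^2/2\sigma^2)\tau$, a function of $\tau,x(\tau)$ alone. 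By the Neyman--Pearson lemma the most powerful level-$\alpha$ test of $\bm{h}=\bm{h}^{(0)}$ against $\bm{h}=\bm{h}^{(1)}$ rejects when this exceeds a constant, i.e.\ when $\mu x(\tau)-\tfrac{\mu^2}{2\sigma}\tau\ge\gamma$, with $\gamma$ determined by the size requirement; this is $\varphi_\mu^*$, and it does not depend on the particular point $\bm{h}^{(1)}$.

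Finally I would pass to the composite hypotheses. Writing $\theta:=(\sigma_1,0)^{\intercal}\bm{h}_1-(\sigma_0,0)^{\intercal}\bm{h}_0$, a computation of drift and quadratic variation (the latter using $\sigma^{-2}(\sigma_1^2/\pi_1+\sigma_0^2/\pi_0)=1$) shows that under $\mathbb{P}_{\bm{h}}$ one has $x(t)\sim(\theta/\sigma)t+W(t)$ for a standard one-dimensional Brownian motion $W$; in particular $x(\cdot)$ is pivotal under $H_0$, and since $\tau$ is $\mathcal{F}_t$-adapted the law of $(\tau,x(\tau))$, hence the rejection probability of $\varphi_\mu^*$, is the same at every point of $H_0$, so $\varphi_\mu^*$ has exact size $\alpha$ throughout $H_0$. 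Because $\bm{\delta}$ is fixed, the construction $\bm{h}^{(0)}=\bm{h}^{(1)}-\bm{\delta}$ associates to every $\bm{h}^{(1)}\in H_1$ a point of $H_0$ against which $\varphi_\mu^*$ is the Neyman--Pearson test; since every level-$\alpha$ test of the composite $H_0$ is a fortiori level-$\alpha$ at that single point, $\varphi_\mu^*$ is most powerful against each $\bm{h}^{(1)}$ and hence uniformly most powerful, i.e.\ the optimal level-$\alpha$ test of $H_0$ against $H_1$. The main obstacle is purely the bookkeeping: choosing $\bm{\delta}$ and verifying the two algebraic identities; the rest is a verbatim adaptation of the proofs of Proposition~\ref{Prop: Linear combinations} and Lemma~\ref{Lem: Two-sample ART}.
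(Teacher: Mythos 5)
Your proposal is correct and follows essentially the same route as the paper's proof: the paper's change of variables $\bm{\Delta}_a=\Lambda_a\bm{h}_a$ followed by matching first components to a common value $\gamma$ produces exactly the offset $\bm{\delta}$ you write down directly in $h$-coordinates, and both arguments then combine the Girsanov/Neyman--Pearson computation with the pivotality of $x(\cdot)$ under $H_0$ and the $\mathcal{F}_t$-adaptedness of $\tau$ to pass from the simple to the composite hypotheses. The algebraic identities you flag (the linear term equalling $(\mu/\sigma)x(\tau)$ and the quadratic term equalling $\mu^2/\sigma^2$) both check out.
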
 
\begin{proof}
For each $a$ we employ a change of variables $\bm{h}_{a}\to\bm{\Delta}_{a}$
as $\bm{\Delta}_{a}=\Lambda_{a}\bm{h}_{a}$, where 
\[
\Lambda_{a}:=\left[\begin{array}{cc}
\sigma_{a} & 0\\
0 & 1
\end{array}\right].
\]
Set $\bm{\Delta}:=(\bm{\Delta}_{1},\bm{\Delta}_{0})$. The null and
alternative regions are then $H_{0}\equiv\{\bm{\Delta}:(1,0)^{\intercal}\bm{\Delta}_{1}-(1,0)^{\intercal}\bm{\Delta}_{0}=0\}$
and $H_{1}\equiv\{\bm{\Delta}:(1,0)^{\intercal}\bm{\Delta}_{1}-(1,0)^{\intercal}\bm{\Delta}_{0}=\mu\}$.
Let $\mathbb{P}_{\bm{\Delta}}\equiv\mathbb{P}_{\bm{h}}$ denote the
induced probability measure over the sample paths generated by $G_{1}(\cdot),G_{0}(\cdot)$
between $t\in[0,T]$, when $G_{a}(t)\sim\pi_{a}\Lambda_{a}^{-1}\bm{\Delta}_{a}t+\sqrt{\pi_{a}}W_{a}(t)$.
Also, recall that 
\[
x(t):=\frac{1}{\sigma}\left(\frac{\sigma_{1}}{\pi_{1}}z_{1}(t)-\frac{\sigma_{0}}{\pi_{0}}z_{0}(t)\right),
\]
where $z_{1}(\cdot),z_{2}(\cdot)$ are the first components of $G_{1}(\cdot),G_{0}(\cdot)$. 

Fix some $\bar{\bm{\Delta}}:=(\bm{\bar{\Delta}}_{1},\bm{\bar{\Delta}}_{0})\in H_{1}$.
Let $\bar{\Delta}_{11}$ and $\bar{\Delta}_{01}$ denote the first
components of $\bm{\bar{\Delta}}_{1},\bm{\bar{\Delta}}_{0}$, and
define $\gamma,\eta$ so that 
\begin{equation}
(\bar{\Delta}_{11},\bar{\Delta}_{01})=\left(\gamma+\frac{\sigma_{1}^{2}\eta}{\pi_{1}},\gamma-\frac{\sigma_{0}^{2}\eta}{\pi_{0}}\right).\label{eq:pf:Lem2:1}
\end{equation}
Clearly, $\eta=\mu/\sigma^{2}$ and $\gamma=\bar{\Delta}_{11}-\sigma_{1}^{2}\eta/\pi_{1}$.
Now construct $\tilde{\bm{\Delta}}=(\tilde{\bm{\Delta}}_{1},\tilde{\bm{\Delta}}_{0})$
as follows: The second components of $\tilde{\bm{\Delta}}_{1},\tilde{\bm{\Delta}}_{0}$
are the same as that of $\bm{\bar{\Delta}}_{1},\bm{\bar{\Delta}}_{0}$.
As for the first components, $\tilde{\Delta}_{11},\tilde{\Delta}_{01}$
of $\tilde{\bm{\Delta}}_{1},\tilde{\bm{\Delta}}_{0}$ , take them
to be 
\begin{equation}
(\tilde{\Delta}_{11},\tilde{\Delta}_{01})=\left(\gamma,\gamma\right).\label{eq:pf:lem2-2}
\end{equation}
By construction, $(\tilde{\bm{\Delta}}_{1},\tilde{\bm{\Delta}}_{0})\in H_{0}$. 

Consider testing $H_{0}^{\prime}:\bm{\Delta}=\tilde{\bm{\Delta}}$
vs $H_{1}^{\prime}:\bm{\Delta}=\bar{\bm{\Delta}}$. Let $\ln\frac{d\mathbb{P}_{\bar{\bm{\Delta}}}}{d\mathbb{P}_{\tilde{\bm{\Delta}}}}(\mathcal{G}_{t})$
denote the likelihood ratio between the probabilities induced by the
parameters $\tilde{\bm{h}},\bar{\bm{h}}$ over the filtration $\mathcal{G}_{t}$.
Since $G_{1}(\cdot),G_{0}(\cdot)$ are independent, the Girsanov theorem
gives
\begin{align*}
\ln\frac{d\mathbb{P}_{\bar{\bm{\Delta}}}}{d\mathbb{P}_{\tilde{\bm{\Delta}}}}(\mathcal{G}_{t}) & =\left(\bm{\bar{\Delta}}_{1}^{\intercal}\Lambda_{1}^{-1}G_{1}(\tau)-\frac{\pi_{1}\tau}{2}\bar{\bm{\Delta}}_{1}^{\intercal}\Lambda_{1}^{-2}\bm{\bar{\Delta}}_{1}\right)-\left(\tilde{\bm{\Delta}}_{1}^{\intercal}\Lambda_{1}^{-1}G_{1}(\tau)-\frac{\pi_{1}\tau}{2}\bm{\tilde{\Delta}}_{1}^{\intercal}\Lambda_{1}^{-2}\bm{\tilde{\Delta}}_{1}\right)\\
 & \quad+\left(\bm{\bar{\Delta}}_{0}^{\intercal}\Lambda_{0}^{-1}G_{0}(\tau)-\frac{\pi_{0}\tau}{2}\bar{\bm{\Delta}}_{0}^{\intercal}\Lambda_{0}^{-2}\bm{\bar{\Delta}}_{0}\right)-\left(\bm{\tilde{\Delta}}_{0}^{\intercal}\Lambda_{0}^{-1}G_{0}(\tau)-\frac{\pi_{0}\tau}{2}\bm{\tilde{\Delta}}_{0}^{\intercal}\Lambda_{0}^{-2}\bm{\tilde{\Delta}}_{0}\right)\\
 & =\sigma\eta x(\tau)-\frac{\eta^{2}\sigma^{2}}{2}\tau,
\end{align*}
where the last step follows from some algebra after making use of
(\ref{eq:pf:Lem2:1}) and (\ref{eq:pf:lem2-2}). Based on the above,
an application of the Neyman-Pearson lemma shows that the UMP test
of $H_{0}^{\prime}:\bm{\Delta}=\tilde{\bm{\Delta}}$ vs $H_{1}^{\prime}:\bm{\Delta}=\bar{\bm{\Delta}}$
is given by 
\[
\varphi_{\mu}^{*}=\mathbb{I}\left\{ \sigma\eta x(\tau)-\frac{\eta^{2}\sigma^{2}}{2}\tau\ge\tilde{\gamma}\right\} =\mathbb{I}\left\{ \mu x(\tau)-\frac{\mu^{2}}{2\sigma}\tau\ge\gamma\right\} .
\]
Here, $\gamma$ is to be determined by the size requirement. Now,
for any $\bm{\Delta}\in H_{0}$, 
\[
x(t)\equiv\frac{1}{\sigma}\left(\sqrt{\frac{\sigma_{1}^{2}}{\pi_{1}}}(1,0)^{\intercal}W_{1}(t)-\sqrt{\frac{\sigma_{0}^{2}}{\pi_{0}}}(1,0)^{\intercal}W_{0}(t)\right)\sim\tilde{W}(t),
\]
where $\tilde{W}(\cdot)$ is standard 1-dimensional Brownian motion.
Hence, the distribution of the sample paths of $x(\cdot)$ is independent
of $\bm{\Delta}$ under the null. Combined with the assumption that
$\tau$ is $\mathcal{F}_{t}$-adapted, this implies $\varphi_{\mu}^{*}$
does not depend on $\tilde{\bm{\Delta}}$ and, by extension, $\bar{\bm{\Delta}}$,
except through $\mu$. Since $\bar{\bm{\Delta}}\in H_{1}$ was arbitrary,
we are led to conclude $\varphi_{\mu}^{*}$ is UMP more generally
for testing $H_{0}\equiv\{\bm{\Delta}:(1,0)^{\intercal}\bm{\Delta}_{1}-(1,0)^{\intercal}\bm{\Delta}_{0}=0\}$
vs $H_{1}\equiv\{\bm{\Delta}:(1,0)^{\intercal}\bm{\Delta}_{1}-(1,0)^{\intercal}\bm{\Delta}_{0}=\mu\}$.
\end{proof}

\subsection{Supporting results for the proof of Proposition \ref{Prop: Non-parametric batched experiments}}

\begin{lem} \label{Lem: Batched experiments optimal test} Consider
the limit experiment where one observes $q_{a}=\sum_{j}\pi_{j}^{(a)}$
and $x_{a}:=(1,0)^{\intercal}\sum_{j}\bm{Z}_{j}^{(a)}(\pi_{j}^{(a)})$,
where
\[
\bm{Z}_{j}^{(a)}(t):=\bm{h}_{a}t+W_{j}^{(a)}(t),
\]
and $\pi_{j}$ is measurable with respect to 
\[
\mathcal{F}_{j-1}\equiv\sigma\left\{ (1,0)^{\intercal}\bm{Z}_{l}^{(a)}(\cdot);l\le j-1,a\in\{0,1\}\right\} .
\]
Then, the optimal level-$\alpha$ test of $H_{0}:\left((1,0)^{\intercal}\bm{h}_{1},(1,0)^{\intercal}\bm{h}_{0}\right)=(0,0)$
vs $H_{1}:\left((1,0)^{\intercal}\bm{h}_{1},(1,0)^{\intercal}\bm{h}_{0}\right)=(\mu_{1},\mu_{0})$
in the limit experiment is
\[
\varphi_{\mu_{1},\mu_{0}}^{*}=\mathbb{I}\left\{ \sum_{a\in\{0,1\}}\left(\mu_{a}x_{a}-\frac{q_{a}}{2}\mu_{a}^{2}\right)\ge\gamma_{\mu_{1},\mu_{0}}\right\} .
\]
 \end{lem}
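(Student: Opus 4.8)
The plan is to mirror closely the argument used to prove Lemma~\ref{Lem: Two-sample optimal test}, now in the ``summed Brownian motion'' representation of the batched limit experiment given in Section~\ref{subsec:An-alternative-representation}. First I would recall that, by (\ref{eq:distribution of x_a in limit experiment}) and the independence of the Brownian motions $\{W_j^{(a)}\}_{j,a}$, the pair $(q_a, x_a)$ satisfies $x_a \sim \bm h_a q_a + W_a(q_a)$ in the ``first-component'' scalar sense, i.e. $(1,0)^\intercal x_a$ has the law of $\mu_a q_a + \tilde W_a(q_a)$ with $\mu_a := (1,0)^\intercal \bm h_a$ and $\tilde W_a$ a scalar Brownian motion; moreover $q_a$ is built from sampling rules $\pi_j$ that are $\mathcal F_{j-1}$-measurable, hence adapted to the filtration generated by the scalar processes $(1,0)^\intercal \bm Z_l^{(a)}(\cdot)$. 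Because all the extra (second) components of the $\bm Z_j^{(a)}$ are irrelevant to both $q_a$, $x_a$ and to the sampling rule, the effective experiment is the one generated by the scalar drifts $(\mu_1,\mu_0)$ only.

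Next I would apply the Girsanov theorem together with the martingale/optional-stopping structure to write down the likelihood ratio. Fix $(\mu_1,\mu_0)$ under the alternative and $(0,0)$ under the null. Because the $W_j^{(a)}$ are independent and $\pi_j$ is $\mathcal F_{j-1}$-measurable, the product $\prod_{j}\prod_{a}\exp\{\mu_a (1,0)^\intercal W_j^{(a)}(\pi_j^{(a)}) - \tfrac{\pi_j^{(a)}}{2}\mu_a^2\}$ telescopes: conditioning batch by batch and using that the increment $W_j^{(a)}(\cdot)$ is independent of the past (and of $\pi_j^{(a)}$), each factor has conditional mean one, so the overall product is a mean-one random variable and, via the definitions $x_a = \sum_j (1,0)^\intercal W_j^{(a)}(\pi_j^{(a)}) + \mu_a q_a$ under the alternative (equivalently $x_a = \sum_j(1,0)^\intercal W_j^{(a)}(\pi_j^{(a)})$ under the null), equals
\[
\frac{d\mathbb P_{(\mu_1,\mu_0)}}{d\mathbb P_{(0,0)}} = \exp\!\left\{\sum_{a\in\{0,1\}}\left(\mu_a x_a - \frac{q_a}{2}\mu_a^2\right)\right\}.
\]
This is exactly the argument already carried out in the proof of Theorem~\ref{Thm: ART-Batched}, so I would cite it rather than redo it.

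Then the Neyman--Pearson lemma immediately gives that the most powerful level-$\alpha$ test of the simple null $(0,0)$ against the simple alternative $(\mu_1,\mu_0)$ rejects when the log-likelihood ratio $\sum_{a}(\mu_a x_a - \tfrac{q_a}{2}\mu_a^2)$ exceeds a cutoff $\gamma_{\mu_1,\mu_0}$ chosen so that the null rejection probability equals $\alpha$ (with randomization on the boundary if the null distribution has an atom there, though under $\mathbb P_{(0,0)}$ the statistic is a.s.\ continuous so this is moot). Since the test depends on the parameter only through $(\mu_1,\mu_0)$, no claim of uniform optimality over a composite alternative is being made, and the statement is complete. The main (and only real) obstacle is the careful bookkeeping in the telescoping likelihood-ratio computation — specifically, justifying that each conditional factor is a mean-one martingale increment given the adaptively chosen $\pi_j^{(a)}$, which is precisely where the $\mathcal F_{j-1}$-measurability of $\pi_j$ and the independence of the fresh Brownian increment enter; this is identical to the corresponding step in Theorem~\ref{Thm: ART-Batched} and in Lemma~\ref{Lem: Two-sample ART}, so I would simply invoke it. Everything else is a one-line application of Girsanov and Neyman--Pearson.
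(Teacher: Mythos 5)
Your proposal is correct and follows essentially the same route as the paper's proof: Girsanov applied batch by batch (using the $\mathcal{F}_{j-1}$-measurability of $\pi_j$ and the independence of the fresh Brownian increments, exactly as in Theorem \ref{Thm: ART-Batched}) to obtain the exponential likelihood ratio $\exp\{\sum_a(\mu_a x_a - \frac{q_a}{2}\mu_a^2)\}$, followed by the Neyman--Pearson lemma. The only presentational difference is that you dispose of the composite nature of $H_0$ and $H_1$ up front, by noting that the law of $(q_1,q_0,x_1,x_0)$ depends on $\bm{h}$ only through $(\mu_1,\mu_0)$, whereas the paper pairs each alternative $(\bm{h}_1,\bm{h}_0)$ with the particular null point $\tilde{\bm{h}}_a=\bm{h}_a-(1,0)^{\intercal}\bm{h}_a(1,0)$ and argues invariance at the end; both devices rest on the same observation and yield the same conclusion.
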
 
\begin{proof}
Denote 
\begin{align*}
H_{0} & \equiv\left\{ \bm{h}:\left((1,0)^{\intercal}\bm{h}_{1},(1,0)^{\intercal}\bm{h}_{0}\right)=(0,0)\right\} ,\ \textrm{ and}\\
H_{1} & \equiv\left\{ \bm{h}:\left((1,0)^{\intercal}\bm{h}_{1},(1,0)^{\intercal}\bm{h}_{0}\right)=(\mu_{1},\mu_{0})\right\} .
\end{align*}
Let $\mathbb{P}_{h}$ denote the induced probability measure over
the sample paths generated by $\{z_{j}^{(a)}(t):t\le\pi_{j}^{(a)}\}_{j,a}$. 

Given any $(\bm{h}_{1},\bm{h}_{0})\in H_{1}$, define $\tilde{\bm{h}}_{a}=\bm{h}_{a}-(1,0)^{\intercal}\bm{h}_{a}(1,0)$
for $a\in\{0,1\}$. Note that $(\tilde{\bm{h}}_{1},\tilde{\bm{h}}_{0})\in H_{0}$
and $(1,0)^{\intercal}\bm{h}_{a}=\mu_{a}$. Let 
\[
\ln\frac{d\mathbb{P}_{(\tilde{\bm{h}}_{1},\tilde{\bm{h}}_{0})}}{d\mathbb{P}_{(\bm{h}_{1},\bm{h}_{0})}}(\mathcal{G})
\]
denote the likelihood ratio between the probabilities induced by the
parameters $(\tilde{\bm{h}}_{1},\tilde{\bm{h}}_{0}),(\bm{h}_{1},\bm{h}_{0})$
over the filtration 
\[
\mathcal{\mathcal{G}}\equiv\sigma\left\{ \bm{Z}_{j}^{(a)}(t):t\le\pi_{j}^{(a)};j=1,\dots,J;a\in\{0,1\}\right\} .
\]
By the Girsanov theorem, noting that $\{z_{j}^{(a)}(t):t\le\pi_{j}^{(a)}\}_{j}$
are independent across $a$ and defining $G_{a}:=\sum_{j}\bm{Z}_{j}^{(a)}(\pi_{j}^{(a)})$,
we obtain after some straightforward algebra that 
\begin{align*}
\ln\frac{d\mathbb{P}_{(\tilde{\bm{h}}_{1},\tilde{\bm{h}}_{0})}}{d\mathbb{P}_{(\bm{h}_{1},\bm{h}_{0})}}(\mathcal{F}) & =\sum_{a}\left\{ \left(\tilde{\bm{h}}_{a}^{\intercal}G_{a}-\frac{q_{a}}{2}\tilde{\bm{h}}_{a}^{\intercal}\tilde{\bm{h}}_{a}\right)-\left(\bm{h}_{a}^{\intercal}G_{a}-\frac{q_{a}}{2}\bm{h}_{a}^{\intercal}\bm{h}_{a}\right)\right\} \\
 & =\sum_{a}\left(\mu_{a}x_{a}(\tau)-\frac{\mu_{a}^{2}}{2}q_{a}\right),
\end{align*}
where $x_{a}$ is the first component of $G_{a}$. Hence, an application
of the Neyman-Pearson lemma shows that the UMP test of $H_{0}^{\prime}:\bm{h}=(\tilde{\bm{h}}_{1},\tilde{\bm{h}}_{0})$
vs $H_{1}^{\prime}:\bm{h}=(\bm{h}_{1},\bm{h}_{0})$ is given by 
\[
\varphi_{\mu_{1},\mu_{0}}^{*}=\mathbb{I}\left\{ \sum_{a}\left(\mu_{a}x_{a}(\tau)-\frac{\mu_{a}^{2}}{2}q_{a}\right)\ge\gamma\right\} ,
\]
where $\gamma$ is determined by the size requirement. 

Now, for any $\bm{h}\in H_{0}$, both $x_{a}$ and $q_{a}$ measurable
with respect to $\mathcal{F}$ by assumption. Since $(1,0)^{\intercal}\bm{Z}_{j}^{(a)}(\cdot)$
is independent of $\bm{h}_{a}$ given $\mu_{a}$ for all $j,a$, it
follows that the distribuion of $x_{a},q_{a}$ is independent of $\bm{h}\in H_{0}$
under the null. This implies that $\varphi_{\mu_{1},\mu_{0}}^{*}$
does not depend on $(\tilde{\bm{h}}_{1},\tilde{\bm{h}}_{0})$ and,
by extension, $(\bm{h}_{1},\bm{h}_{0})$, except through $(\mu_{1}\mu_{0})$.
Since $(\bm{h}_{1},\bm{h}_{0})\in H_{1}$ was arbitrary, we are led
to conclude $\varphi_{\mu_{1},\mu_{0}}^{*}$ is UMP more generally
for testing the composite hypotheses $H_{0}$ vs $H_{1}$. 
\end{proof}

\end{document}